%% file: neurips_2026.tex
\RequirePackage{fix-cm}
\documentclass[letterpaper,11pt]{article}
\PassOptionsToPackage{table}{xcolor}

\usepackage[utf8]{inputenc}
\usepackage[T1]{fontenc}
\usepackage{times}
\usepackage[margin=1in,top=0.45in]{geometry}
\usepackage{microtype}
\usepackage{amsmath,amssymb,amsfonts,amsthm,mathtools}
\input{math_commands.tex}

\usepackage{graphicx}
\usepackage[export]{adjustbox}
\usepackage{tabularx}
\usepackage{booktabs,multirow}
\usepackage{caption}
\usepackage[table]{xcolor}
\usepackage{enumitem}
\usepackage[most]{tcolorbox}
\usepackage{xspace}
\usepackage{placeins}
\usepackage{needspace}
\usepackage{titlesec}
\setlist{leftmargin=5mm}

\definecolor{TblBlue}{HTML}{EAF1F8}
\definecolor{TblTeal}{HTML}{E9F4F1}
\definecolor{TblSand}{HTML}{F8F1E5}
\definecolor{TblHeader}{HTML}{E9EEF3}
\definecolor{TblOurs}{HTML}{DDEEE9}
\definecolor{TblAvg}{HTML}{F3F5F7}
\definecolor{TblInk}{HTML}{254458}

\usepackage{natbib}
\usepackage{url}
\usepackage[colorlinks=true,linkcolor=cyan,citecolor=cyan,filecolor=magenta,urlcolor=blue]{hyperref}
\usepackage[nameinlink,capitalise]{cleveref}
\hypersetup{
  pdftitle={Lluna: Collaborative Unimodal Models for Missing-Modality Inference},
  pdfauthor={Haoyue Bai, Yinyi Luo, Heqing Huang, Xiaoyuan Yi, James Evans, Jindong Wang}
}

\newcommand{\luna}{\textsc{Lluna}\xspace}
\newcommand{\pcc}{\textsc{PCC}\xspace}
\newtheorem{theorem}{Theorem}
\newtheorem{corollary}{Corollary}

\newcommand{\wjdd}[1]{}
\newcommand{\wjd}[1]{}
\newcommand{\bhy}[1]{}

\let\appendixaddcontentsline\addcontentsline
\renewcommand{\addcontentsline}[3]{}

\definecolor{abstractbg}{RGB}{230,242,250}
\definecolor{abstractborder}{RGB}{200,210,220}
\titleformat{\section}
  {\sffamily\bfseries\Large}
  {\thesection}
  {1em}
  {}

\renewenvironment{abstract}
{%
\begin{center}
\begin{tcolorbox}[
    colback=abstractbg,
    colframe=abstractborder,
    boxrule=0.6pt,
    arc=6pt,
    width=\textwidth,
    left=8pt,
    right=8pt,
    top=8pt,
    bottom=8pt
]
}
{%
\end{tcolorbox}
\end{center}
}

\newcommand{\papertitle}{%
\sffamily\bfseries\fontsize{16}{19}\selectfont
Lluna: Collaborative Unimodal Models for\\
Missing-Modality Inference
}

\newcommand{\paperauthors}{%
\sffamily
Haoyue Bai$^1$, Yinyi Luo$^2$, Heqing Huang$^1$,\\[3pt]
Xiaoyuan Yi$^3$, James Evans$^4$, Jindong Wang$^1$
}

\newcommand{\paperaffiliations}{%
$^1$College of William and Mary \quad $^2$Carnegie Mellon University\\[3pt]
$^3$Microsoft \quad $^4$University of Chicago
}

\begin{document}
\thispagestyle{empty}

\noindent
\includegraphics[height=.6cm]{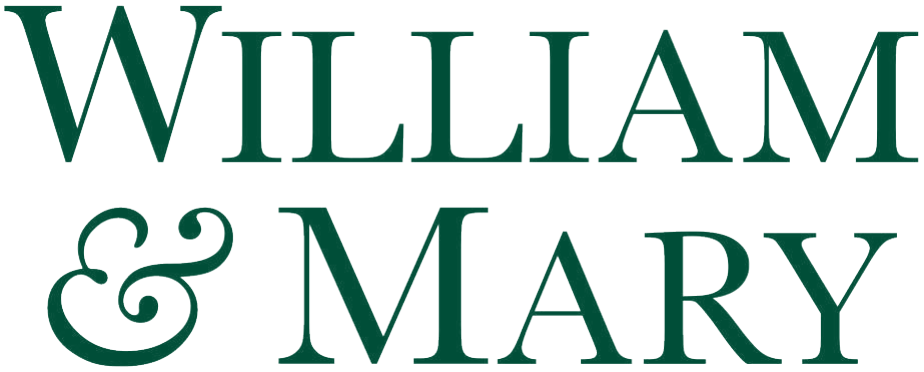}\hspace{0.45cm}%
\includegraphics[height=.6cm]{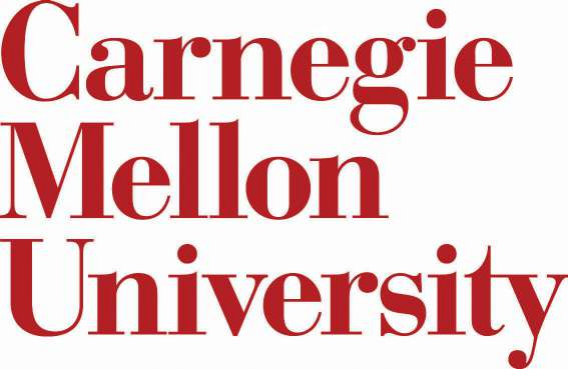}\hspace{0.45cm}%
\includegraphics[trim=218bp 219bp 220bp 218bp,clip,height=.5cm]{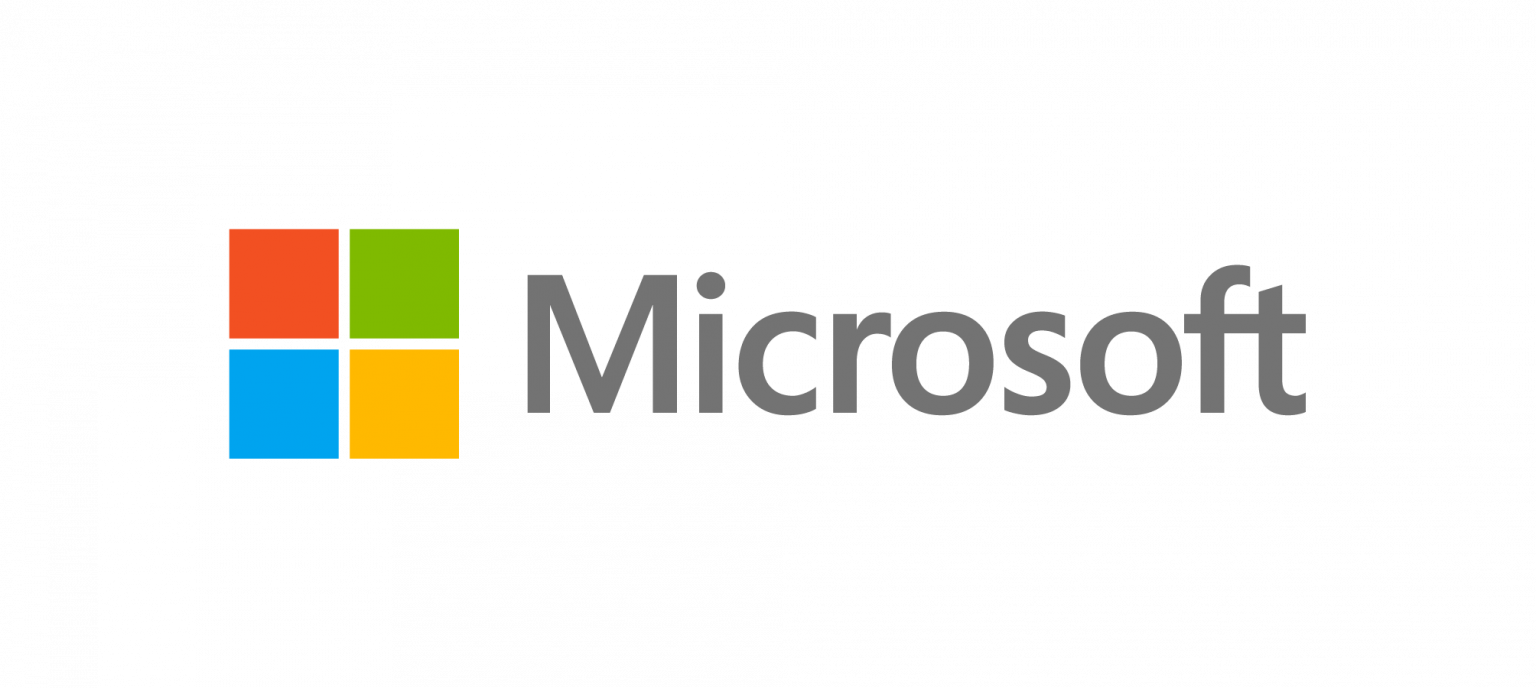}\hspace{0.45cm}%
\includegraphics[trim=18bp 192bp 18bp 190bp,clip,height=.5cm]{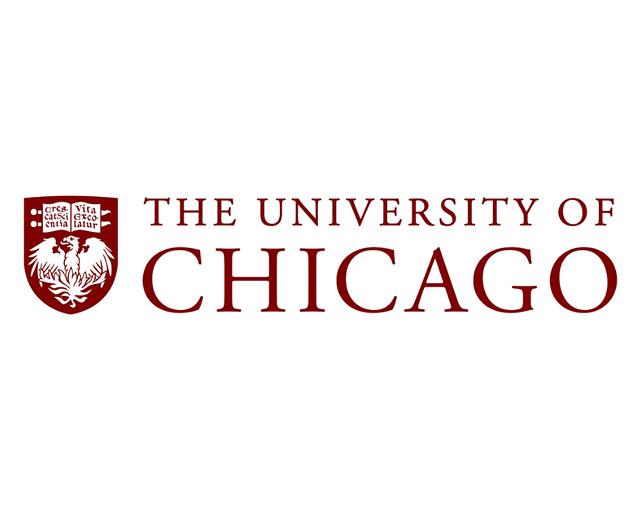}

\vspace{-.15in}
\noindent\rule{\textwidth}{0.8pt}
\vspace{0.65cm}

\begin{center}
    {\papertitle\par}
    \vspace{0.45cm}
    {\large \paperauthors\par}
    \vspace{0.2cm}
    {\normalsize \paperaffiliations\par}
\end{center}

\input{section/abstract}
\input{section/introduction}
\input{section/related_work}
\input{section/method}

\input{section/experiments}
\input{section/conclusion}

\bibliographystyle{plainnat}
\bibliography{refs}

\clearpage
\appendix
\let\addcontentsline\appendixaddcontentsline
\input{section/appendix}
\end{document}

%% file: math_commands.tex
\usepackage{amsmath,amsfonts,bm}

\def\1{\bm{1}}

\DeclareMathAlphabet{\mathsfit}{\encodingdefault}{\sfdefault}{m}{sl}
\SetMathAlphabet{\mathsfit}{bold}{\encodingdefault}{\sfdefault}{bx}{n}



%% file: section/abstract.tex
\begin{abstract}
Pretrained unimodal models offer a flexible route to multimodal prediction through collaboration, but this collaboration can become less accurate than a surviving unimodal model when modalities are missing. 
We study this setting using fixed surviving-model anchors and find that collaborative gains can weaken or reverse as inputs disappear, even when the anchor's own input is unchanged. Our diagnostics further reveal reduced recovery of modality-specific information after aggregation and state-dependent collaborative representations.
We propose \luna, which combines a direct private prediction path from adapted unimodal features with a common graph-based collaboration branch that provides a residual correction. Progressive Collaboration Consistency organizes richer-to-poorer guidance along single-modality removals, aligning normalized common representations while task supervision adapts predictions to the available evidence. Across eight datasets spanning affective computing, scene understanding, and medical prediction, \luna achieves the best evaluated average performance over non-full input states, improving Mean-Missing by 4.26\% on average relative to the strongest evaluated baseline. The fixed-anchor diagnostic also shows positive gains in mean collaboration under minimal inputs on all eight tasks. Code is available at https://github.com/AIFrontierLab/Lluna.
\end{abstract}

%% file: section/introduction.tex
\section{Introduction}

Multimodal systems are essential for perceiving and reasoning about the real world \citep{baltrusaitis2019multimodal,alayrac2022flamingo,girdhar2023imagebind}.
In practice, however, modalities may be unavailable: audio streams may be missing, sensors may fail, and clinical measurements may be uncollected \citep{wu2026missingmodalitysurvey,wang2023shaspec,yun2024flexmoe,li2025simmlm}.
Prediction with missing modalities is therefore an important problem in multimodal learning.

Most missing-modality methods learn a single multimodal predictor, using strategies such as modality dropout, aggregation of available representations, or shared-specific modeling \citep{neverova2016moddrop,havaei2016hemis,wang2023shaspec,yun2024flexmoe,li2025simmlm,xie2026ccsd}.
In contrast, with the abundance of publicly-available pretrained foundation models \citep{osborne2024ai}, we study a different regime where independently pretrained unimodal models are composed through \textit{model collaboration}.
Specifically, model collaboration can align heterogeneous representations or combine model outputs, as in Freeze-Align \citep{maniparambil2025freezealign} and FoE \citep{wang2024foe}.
This raises a basic question: does collaboration remain effective when some modalities become unavailable?
As shown in \Cref{fig:motivation}(a), across affective computing, scene understanding, and medical prediction, our experiments show that collaboration gains got smaller as modalities become unavailable and can even turn negative.
A surviving unimodal model can then be more accurate than the collaborative predictor, even though its own input remains unchanged.

We next examine how collaboration represents the available evidence. For the illustrative runner in \Cref{fig:motivation}(b), ``I finally did it'' conveys achievement, a relieved voice adds emotional context, and the video reveals physical effort. These complementary cues illustrate the \emph{private signal}, i.e., the modality-specific information collaboration should retain.
Yet feature-level probes find that this signal is less recoverable after collaboration across various mechanisms and training policies (\Cref{fig:motivation}(b), Appendix~\ref{app:probe_multimethod}).
The second challenge is to coordinate collaboration across available missing-modality states.
As video and audio disappear subsequently, the same event is represented through different cues and interactions (\Cref{fig:motivation}(c)).
This calls for direct access to modality-specific features together with cross-state guidance for the common collaborative branch.

\begin{figure}[t]
  \captionsetup{labelfont=bf,justification=raggedright,singlelinecheck=false,skip=5pt}
  \centering
  \includegraphics[width=.9\textwidth]{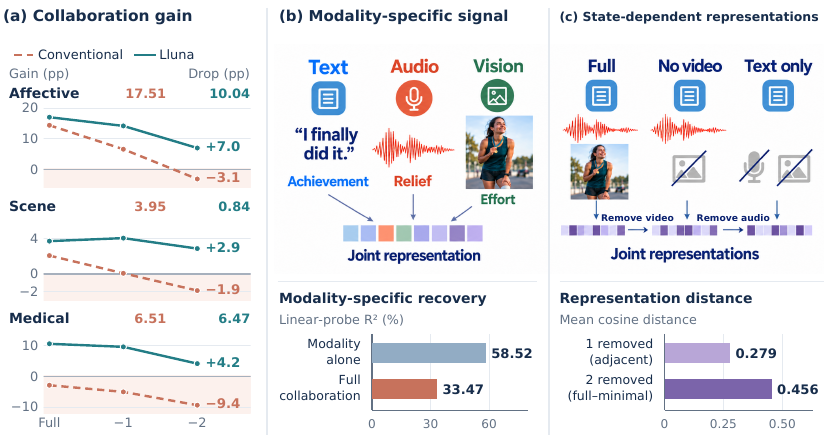}
  \vspace{-.1in}
  \caption{\textbf{Collaboration under missing modalities.} \textbf{(a)} Domain-mean accuracy gain in percentage points (pp) over fixed surviving unimodal models; annotations give the absolute decline from Full to two removals in pp. \textbf{(b)} Attn. pool's private-signal recovery, averaged over eight tasks, using the linear-probe coefficient of determination ($R^2$). \textbf{(c)} Cosine distances for adjacent and Full-to-Minimal state pairs, averaged over six three-modality tasks. Details: Appendix~\ref{sec:diagnosis}.}
  \label{fig:motivation}
  \vspace{-.1in}
\end{figure}

In this paper, we propose \textbf{\luna}, a collaboration strategy for unimodal models for missing-modality inference.
As shown in \Cref{fig:overview}, \luna pairs \emph{private prediction} with \emph{common collaboration}. Model-specific adapters provide a direct path from available features to prediction, while a common branch exchanges messages on a compatibility graph weighted by prediction agreement. Its output supplies a learned residual correction to the private prediction. Single-modality removals connect the full input to every nonempty availability state, decomposing larger evidence changes into local steps.
Progressive Collaboration Consistency (\pcc) uses these steps to train a shared model: richer states guide the common representations of their poorer neighbors, and intermediate states in turn provide targets for subsequent steps.
We further establish a local-to-global stability bound, relating the drift of normalized common representations between distant availability states to accumulated discrepancies along single-modality transitions.

Experiments across eight datasets in three application domains show that \luna improves Mean-Missing performance over the strongest evaluated baseline (among collaboration methods, missing-modality approaches, and Omni models) on every task, averaging $4.26\%$ relative improvement. The collaboration diagnostic also shows positive mean gains under minimal inputs on all eight tasks. Figure~\ref{fig:motivation}a further shows a $37.9\%$ reduction in the domain-averaged absolute decline in collaboration gain after two modality removals, compared with conventional collaboration.

Our contributions are:
\begin{itemize}[leftmargin=2em]
\setlength\itemsep{0em}
    \item We identify loss of collaborative benefit under missing modalities and characterize the accompanying changes in private-signal recovery and collaborative representations.
    \item We propose \luna, which retains a direct prediction path, adds a common residual correction, and applies progressive richer-to-poorer consistency to the common representation.
    \item Extensive experiments in three domains show improved Mean-Missing performance on every task, complemented by positive mean collaboration gains under minimal inputs in the diagnostic.
\end{itemize}

%% file: section/related_work.tex
\FloatBarrier
\section{Related Work}
\label{sec:related_work}
\textbf{Model collaboration.}
Model collaboration combines complementary predictions and representations through ensemble weighting and learned feature interactions \citep{baltrusaitis2019multimodal,wolpert1992stacking,jacobs1991mixtures,vaswani2017attention,lee2019set}. Pretrained-model ensembling and routing often combine models within the same modality \citep{wang2024foe,lu2024collaboration,yadav2024moerging}. At the feature level, shared embeddings align signals across modalities \citep{girdhar2023imagebind}, frozen-encoder alignment connects pretrained unimodal models \citep{maniparambil2025freezealign}, and graph-based aggregation supports variable modality sets \citep{patapati2026clarga}.

\textbf{Multi-agent Systems.} LLM-based MAS coordinate instruction-following agents through dialogue, roles, and task planning \citep{li2023camel,wu2024autogen,hong2024metagpt,talebirad2023multi}. \luna uses relatively compact unimodal models, including encoders without a conversational interface, and requires neither instruction following nor agentic behavior. Moreover, modality loss changes the available evidence and can make some models unavailable. \luna learns collaboration over representations, rather than assuming all participants communicate via language.

\textbf{Learning with missing modalities.}
Missing-modality learning adapts representations, fusion, and supervision to incomplete inputs~\citep{wu2026missingmodalitysurvey}.
Earlier work uses modality dropout and aggregation over available-modality representations~\citep{neverova2016moddrop,havaei2016hemis}.
Other approaches include shared-specific feature learning, modality-aware routing, and cross-state distillation~\citep{wang2023shaspec,yun2024flexmoe,li2025simmlm,baek2026mars,xie2026ccsd}, as well as quality-aware fusion and reasoning from incomplete observations~\citep{meng2026giml,xu2026limssr}.
Most of these methods address missing inputs within a single task-specific multimodal predictor.
In contrast, \luna studies missing-modality inference as a flexible model-collaboration problem among independently pretrained unimodal models, where modality loss changes both the available evidence and the set of models that can participate in collaboration.

\textbf{Omni and unified models.}
Integrating modalities into one predictor offers another route, from pretrained vision--language coupling for few-shot learning \citep{alayrac2022flamingo} to Omni models combining vision, audio, and language for understanding and interaction \citep{xu2025qwenomni,tong2025interactiveomni,ye2025omnivinci,gemmateam2026gemma4,zhang2025streamomni}. Unified multimodal models bring understanding and generation into a common modeling framework \citep{luo2026torchumm, wang2024emu3,xie2025showo,deng2025bagel}.

%% file: section/method.tex
\Needspace{8\baselineskip}
\section{Lluna}
\label{sec:method}


We are given labeled training examples $\{(x,y)\}$, where $y$ is the target label and $x$ has $M$ different modalities: $\mathcal{M}=\{M_i\}^M_{i=1}$, with $M_i$ denoting the $i$th modality.
Missing-modality learning aims to learn a function $f$ to infer the labels for a testing set $\{\tilde{x}\}$ that has at most $M-1$ modalities.

\subsection{Overview of \luna}


Different from existing missing-modality approaches that learn a single model~\citep{meng2026giml,li2025simmlm,wang2023shaspec}, \luna instead learns a collaboration strategy from a set of $N$ \textit{heterogeneous, unimodal} models, $\mathcal{F}=\{F_i\}^N_{i=1}$ for robust prediction.
Different $F_i$s may specialize in different modalities, e.g., $F_1$ can be Qwen3-1.7B~\citep{yang2025qwen3} as a textual model, and $F_2$ can be Whisper~\citep{whisper} as an audio model.
All $F_i$s remain \textit{frozen} in training phase.

To sustain beneficial collaboration under missing modalities, \luna builds on a private prediction path that retains direct access to the available modality-specific features (Figure~\ref{fig:overview}).
Using the compatibility graph, the common collaboration path learns interactions among available models to refine this base prediction. Progressive Collaboration Consistency (\pcc) further uses richer availability states to guide the common representations learned under progressively reduced inputs.

\begin{figure}[!htb]
\centering
\includegraphics[width=.9\textwidth]{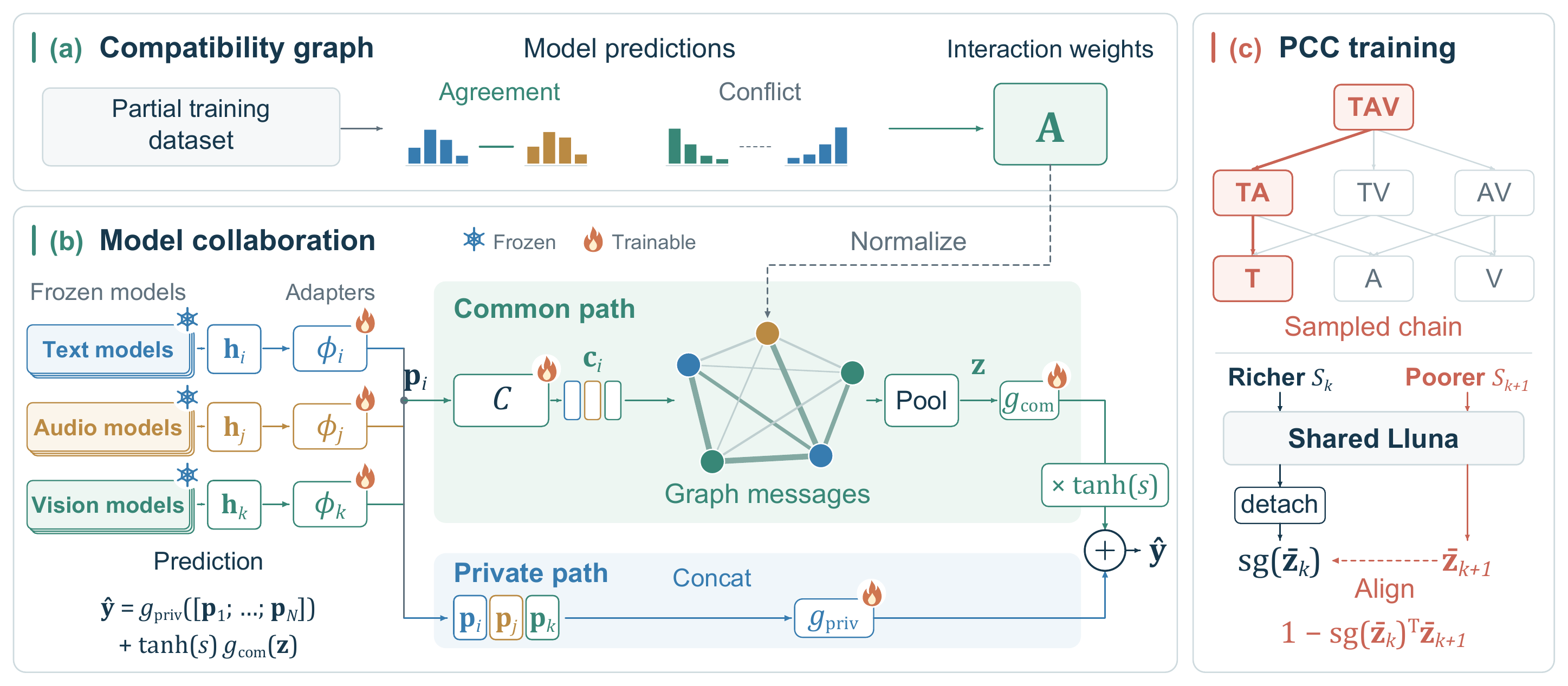}
\captionsetup{skip=4pt}
\caption{\textbf{Overview of \luna.} \textbf{(a)} Compatibility graph building using training-partition prediction agreement. \textbf{(b)} Residual collaboration of private and common features.
\textbf{(c)} PCC training. Consecutive availability states $S_k,S_{k+1}$ differ by one removed modality; here $k$ indexes chain positions. \pcc aligns their unit-normalized consensuses $\bar{\mathbf z}_{k+1}$ and $\bar{\mathbf z}_k$, with stop-gradient $\operatorname{sg}$ on the richer state. T/A/V denote text/audio/vision. Inference uses only the observed state.}
\label{fig:overview}
\vspace{-.15in}
\end{figure}

\Needspace{6\baselineskip}
\subsection{Compatibility Graph Building}
\label{sec:compatibility}

An availability state $S\subseteq\mathcal M$ specifies the nonempty subset of observed modalities.
Since models in $\mathcal F$ may differ in modality and representation space, \luna first adapts each frozen feature $\mathbf h_i$ with a model-specific adapter $\phi_i$, giving $\mathbf p_i=\phi_i(\mathbf h_i)$, and then maps it into a common interaction space using a shared aligner $C$, with $\mathbf c_i^{(0)}=C(\mathbf p_i)$.
Below, $i,j$ index models; unless stated otherwise, message passing, pooling, and their normalization involve only available models.
We omit state indices when describing a single state.

Heterogeneous models can provide complementary information, but effective collaboration also requires their evidence to be sufficiently compatible rather than strongly conflicting.
We therefore use agreement in task predictions as a soft measure of compatibility to organize cross-model interactions (\Cref{fig:overview}(a)).
For each dataset, we fit a task head to each model's frozen training features and define $A_{ij}=\operatorname{sim}(\widehat{\mathbf y}_i^{(n)},\widehat{\mathbf y}_j^{(n)})$, where $\widehat{\mathbf y}_i^{(n)}$ is $F_i$'s task-head prediction for $n$ partial training examples.
The symmetric similarity $\operatorname{sim}$ uses class probabilities for classification and standardized scalar predictions for regression.
The graph is estimated once from the training partition and kept fixed throughout collaboration training and inference.
For each availability state, we restrict $\mathbf A$ to available models and normalize incoming weights for each target model $j$ as $\widehat A_{ij}=A_{ij}/\sum_{i'}A_{i'j}$.
Positive self-loops ensure a valid incoming connection even for an isolated available model (More details in Appendix~\ref{app:graph_configuration}).

\subsection{Residual Coordination of Common and Private Information}

Our analysis shows that modality-specific signals can become less accessible after collaboration (Appendix~\ref{app:probe_multimethod}), suggesting that common collaboration should refine, rather than replace, the direct evidence provided by individual models.
Accordingly, \luna decomposes prediction into a private base prediction from adapted model features and a common residual correction from cross-model interaction (\Cref{fig:overview}(b)).
The private head $g_{\mathrm{priv}}$ is an MLP that reads the adapted features $\mathbf p_i$ in a fixed model order, with zeros in unavailable slots, to form the base prediction.
The common branch instead operates on the aligned features $\mathbf c_i^{(0)}$ through the compatibility graph.
Using standard graph message passing~\citep{gilmer2017neural}, each aligned feature  is updated as $\mathbf c_j^{(\ell+1)}=\operatorname{LayerNorm}\!\left(\mathbf c_j^{(\ell)}+\gamma(\mathbf c_j^{(\ell)})\sum_i\widehat A_{ij}U(\mathbf c_i^{(\ell)})\right)$, where $U$ is a learned message map and $\gamma$ is a sigmoid gate.
After $L$ layers, attention pooling produces the common consensus $\mathbf z=\operatorname{AttentionPool}(\{\mathbf c_i^{(L)}\})$ (details in Appendix~\ref{app:implementation}).

Crucially, common collaboration refines rather than replaces prediction from model-specific evidence.
A common head $g_{\mathrm{com}}$, implemented as an MLP, maps $\mathbf z$ to a residual correction:
\begin{equation}
\widehat{\mathbf y}
=
g_{\mathrm{priv}}([\mathbf p_1;\ldots;\mathbf p_N])
+\tanh(s)\,g_{\mathrm{com}}(\mathbf z),
\label{eq:prediction}
\end{equation}
where $s$ is a learned residual-scaling scalar~\citep{bachlechner2021rezero}.
Both $g_{\mathrm{priv}}$ and $g_{\mathrm{com}}$ additionally receive binary availability indicators, omitted from the notation for clarity.

\subsection{Training with Progressive Collaboration Consistency}

Missing modalities change both the available evidence and the resulting collaborative representation.
Single-modality removals form a connected graph over nonempty availability states, providing a local structure for richer-to-poorer guidance.
Along sampled transitions, \pcc aligns the poorer state's common representation with its richer neighbor, while task supervision allows the overall prediction to adapt to the remaining evidence.
Intermediate states can in turn guide subsequent removals, yielding \textit{progressive richer-to-poorer} transfer across availability states.

As shown in \Cref{fig:overview}(c), we organize nonempty modality subsets into removal chains $S_0\supset S_1\supset\cdots$, where each transition $S_k\rightarrow S_{k+1}$ removes exactly one modality; for example, $\mathrm{TAV}\rightarrow\mathrm{TA}\rightarrow\mathrm{T}$.
Let $\mathbf z_k$ denote the common consensus at state $S_k$ and $\bar{\mathbf z}_k=\mathbf z_k/\|\mathbf z_k\|_2$ its unit-normalized form.
For each adjacent transition, \pcc aligns the poorer representation to its richer neighbor while stopping the gradient through the richer state:
\begin{equation}
\mathcal{L}_{\mathrm{PCC}}
=
\sum_{(S_k,S_{k+1})\in\mathcal E}
\omega_{k,k+1}
\left[
1-\operatorname{sg}(\bar{\mathbf z}_k)^\top
\bar{\mathbf z}_{k+1}
\right],
\label{eq:pcc}
\end{equation}
where $\operatorname{sg}$ denotes stop-gradient, $\mathcal E$ contains the sampled adjacent transitions, and $\omega_{k, k+1}$ balances contributions across different availability states.
Because an intermediate state can receive guidance from a richer neighbor and subsequently guide a poorer one, these local consistency constraints propagate progressively along the removal chain.
Normalizing $\mathbf z_k$ focuses the guidance on the direction of the common consensus while allowing its magnitude to vary with the available evidence (details in Appendix~\ref{app:pcc_configuration}).
Appendix~\ref{app:derivations} further relates accumulated local discrepancies to representation changes between distant states.
The overall objective combines state-wise task supervision, auxiliary supervision of the common prediction, and progressive consistency:
\begin{equation}
\mathcal{L}
=
\mathbb{E}\!\left[
\mathcal{L}_{\mathrm{task}}(y,\widehat{\mathbf y})
+
\lambda_{\mathrm{com}}
\mathcal{L}_{\mathrm{task}}
\!\left(y,g_{\mathrm{com}}(\mathbf z)\right)
\right]
+
\lambda_{\mathrm{PCC}}\mathcal{L}_{\mathrm{PCC}}.
\label{eq:objective}
\end{equation}
Here, $\lambda_{\mathrm{com}}$ and $\lambda_{\mathrm{PCC}}$ control the auxiliary and consistency terms.
Task supervision adapts the complete prediction to each availability state, while \pcc regularizes only the common collaboration.

\textbf{Efficient chain sampling.}
Training with randomly dropped modalities is an established strategy for incomplete-input robustness~\citep{neverova2016moddrop,havaei2016hemis,wang2023shaspec}.
However, exhaustively evaluating all $2^M-1$ nonempty availability states at every training step becomes increasingly expensive as the number of modalities grows.
To improve training efficiency while retaining progressive guidance, \luna samples $K$ maximal removal chains on the same training examples, reusing states and adjacent transitions shared across chains.
Validation and test evaluate all legal availability states, while inference evaluates only the observed state.

\Needspace{8\baselineskip}
\subsection{Local-to-Global Representation Stability}
\label{sec:local_global_stability}

Although \pcc regularizes only adjacent availability states, local control extends along the availability graph. Any two nonempty states can be connected through a sequence of single-modality changes.

\begin{theorem}[Local-to-global representation stability]
\label{thm:representation_stability}
Fix an input and a path $\{S_i\}_{i=0}^D$ of $D\geq1$ single-modality
transitions. Let $d_k=1-\bar{\mathbf z}_{k-1}^{\top}\bar{\mathbf z}_k$
denote the cosine discrepancy between consecutive normalized common
representations. If the common consensus is nonzero at every state,
then
\begin{equation}
\|\bar{\mathbf z}_0-\bar{\mathbf z}_D\|_2
\leq
\sum_{k=1}^{D}\sqrt{2d_k}
\leq
\sqrt{2D\sum_{k=1}^{D}d_k}.
\label{eq:main_stability}
\end{equation}
\end{theorem}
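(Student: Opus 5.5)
The proof has three steps: a per-step identity for unit vectors, the triangle inequality along the path, and Cauchy--Schwarz.

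First, well-definedness. The hypothesis that the common consensus $\mathbf z_k$ is nonzero at every state $S_k$ of the path ensures that each $\bar{\mathbf z}_k=\mathbf z_k/\|\mathbf z_k\|_2$ is a well-defined unit vector. This is the only place the hypothesis is needed, and it also makes each $d_k$ meaningful.

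Second, the per-step identity. For unit vectors $\mathbf u,\mathbf v$, expanding the square gives
\begin{equation*}
\|\mathbf u-\mathbf v\|_2^2=\|\mathbf u\|_2^2+\|\mathbf v\|_2^2-2\mathbf u^\top\mathbf v=2\left(1-\mathbf u^\top\mathbf v\right).
\end{equation*}
Applying this to consecutive states yields $\|\bar{\mathbf z}_{k-1}-\bar{\mathbf z}_k\|_2=\sqrt{2d_k}$. By Cauchy--Schwarz, $\mathbf u^\top\mathbf v\le 1$, so $d_k\in[0,2]$ and the square root is real.

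Third, telescoping and the triangle inequality. Write
\begin{equation*}
\bar{\mathbf z}_0-\bar{\mathbf z}_D=\sum_{k=1}^D\left(\bar{\mathbf z}_{k-1}-\bar{\mathbf z}_k\right).
\end{equation*}
The triangle inequality for $\|\cdot\|_2$ then gives $\|\bar{\mathbf z}_0-\bar{\mathbf z}_D\|_2\le\sum_{k=1}^D\sqrt{2d_k}$, which is the first inequality of \eqref{eq:main_stability}.

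For the second inequality, apply Cauchy--Schwarz in $\mathbb R^D$ to the all-ones vector and $(\sqrt{2d_1},\ldots,\sqrt{2d_D})$:
\begin{equation*}
\sum_{k=1}^D 1\cdot\sqrt{2d_k}\le\sqrt{D}\left(\sum_{k=1}^D 2d_k\right)^{1/2}=\sqrt{2D\sum_{k=1}^D d_k}.
\end{equation*}

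None of these steps is a real obstacle. The fact that the path consists of single-modality transitions plays no role in the inequality itself. It only explains why $\mathcal L_{\mathrm{PCC}}$ in \eqref{eq:pcc} controls the $d_k$, and why any two nonempty availability states can be joined by such a path. Because the consistency loss is stated for $\operatorname{sg}(\bar{\mathbf z}_k)^\top\bar{\mathbf z}_{k+1}$, we should also note that $d_k$ is symmetric in its two arguments, so the orientation of the chain is irrelevant.
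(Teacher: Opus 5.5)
Your proposal is correct and follows the paper's proof essentially step for step: the unit-vector identity $\|\bar{\mathbf z}_{k-1}-\bar{\mathbf z}_k\|_2^2=2d_k$, telescoping with the triangle inequality, then Cauchy--Schwarz against the all-ones vector in $\mathbb R^D$. Your side remarks also match the paper's own discussion: the single-modality structure of the path plays no role in the inequality, and $d_k$ is symmetric so the chain's orientation does not matter.
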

Theorem~\ref{thm:representation_stability} shows that controlling local PCC discrepancies also limits representation drift between more distant availability states. Thus, progressive consistency provides pathwise global control without directly forcing distant evidence configurations to match. The proof is given in Appendix~\ref{app:derivations}.

%% file: section/experiments.tex
\section{Experiments}
\label{sec:experiments}

\subsection{Setup}
\textbf{Benchmarks and protocol.}
Our experiments cover eight datasets in three domains: \emph{affective computing}, \emph{scene understanding}, and \emph{medical prediction}. Affective computing includes CH-SIMS v2 (\textbf{SIMS})~\citep{liu2022chsims}, CMU-MOSI (\textbf{MOSI})~\citep{zadeh2016mosi}, MELD~\citep{poria2019meld}, and UR-FUNNY v2 (\textbf{URF})~\citep{hasan2019urfunny}. Scene understanding includes derived protocols from Daily-Omni (\textbf{Daily})~\citep{zhou2025dailyomni} and WorldSense (\textbf{WS})~\citep{hong2025worldsense}. Medical prediction includes the four-site CirCor DigiScope task (\textbf{CirCor})~\citep{oliveira2022circor} and MCI-to-dementia conversion prediction on ADNI~\citep{mueller2005adni}.
The main comparison evaluates every nonempty modality subset on a fixed test split. We define \textbf{Mean-Missing (MM)} as the task metric averaged equally over all nonempty, incomplete modality subsets in a run. For trainable same-pool methods, we select one checkpoint per run by validation MM for all test states and report the mean and standard deviation (SD) across five runs. Each dataset uses one \luna configuration shared by five runs. Task definitions, data splits, and training details are in Appendix~\ref{app:experimental_setup}; aggregation follows Appendix~\ref{app:multiseed_protocol}.

\input{tables/multiseed_overall}

\textbf{Baselines.}
We compare against three groups.
(1) \textit{General collaboration methods} include mean ensembling (Mean ens.) and concatenation MLP~\citep{baltrusaitis2019multimodal}, learned logit stacking (Logit stack)~\citep{wolpert1992stacking}, attention pooling (Attn. pool), soft MoE~\citep{jacobs1991mixtures}, FoE~\citep{wang2024foe}, Set Transformer (Set Trans.)~\citep{lee2019set}, and CLARGA~\citep{patapati2026clarga}.
(2) \textit{Missing-modality methods} include GIML~\citep{meng2026giml}, MARS~\citep{baek2026mars}, LIMSSR~\citep{xu2026limssr}, SimMLM~\citep{li2025simmlm}, ShaSpec~\citep{wang2023shaspec}, and Flex-MoE~\citep{yun2024flexmoe}.
(3) \textit{Omni comparators} include Qwen2.5-Omni 3B/7B~\citep{xu2025qwenomni}, InteractiveOmni 8B~\citep{tong2025interactiveomni}, OmniVinci 9B~\citep{ye2025omnivinci}, Gemma 4 12B IT~\citep{gemmateam2026gemma4}, and Stream-Omni-8B~\citep{zhang2025streamomni}.

\textbf{Shared model pool.}
For each dataset, \luna and all general collaboration and missing-modality baselines use the same frozen pool and identical pre-extracted representations, data splits, and modality-availability masks. They use these representations or derived task outputs, without updating or replacing upstream encoders or accessing additional raw inputs. Removing a modality jointly masks every representation derived from it. This controls information available for collaboration; trainable parameter counts are reported separately in Appendix~\ref{app:refresh_resources}. The main comparison uses a nine-model pool. The shared text/audio/vision pool uses three publicly available models per modality from different families. Omni baselines instead process available raw inputs with their frozen backbones. Appendix~\ref{app:expert_pools} specifies the task-specific pools and interfaces.

\subsection{Main results}
We evaluate \luna and its competitors using Mean-Missing performance.
In \Cref{tab:multiseed_overall}, \textbf{Avg. rank} summarizes cross-dataset performance: we rank all 21 methods by their Mean-Missing mean separately on each dataset, then average each method's eight ranks with equal weight.
Full-input results, with all modalities available, and results for each individual missing-input state are reported in Appendix~\ref{app:multiseed_views}.
\luna improves MM over the strongest evaluated baseline on all eight datasets, averaging $4.26\%$ relative improvement. For example, WS accuracy rises from $31.20\%$ to $33.90\%$, an $8.65\%$ relative gain, while SIMS MAE decreases from $0.3590$ to $0.3424$. These results demonstrate \luna's effectiveness in combining frozen unimodal models under missing modalities across classification and regression tasks.

\begin{table}[htbp]
\centering
\begin{minipage}{.4\textwidth}
\centering
\vspace{-.1in}
\caption{\textbf{Missing-input training.}}
\label{tab:supplement_missing_input_main}
\vspace{-.1in}
\input{tables/supplement_missing_input_main}
\end{minipage}
\end{table}

\textbf{Training with missing-input views.}
To test whether \luna's advantage persists under missing-input training, we evaluate ten representative strong baselines on one dataset per domain. The collaboration baselines receive missing-input supervision and select checkpoints by validation Mean-Missing. These controls include I$^2$MoE~\citep{xin2025i2moe} and use the shared frozen model pool. The three Omni models retain frozen backbones and use a shared task head fitted on complete and missing-input views. In Table~\ref{tab:supplement_missing_input_main}, \luna exceeds all ten controls on each task: the strongest control reaches $0.3546$ versus $0.3421$ MAE on SIMS, $32.64\%$ versus $33.82\%$ accuracy on WS, and $82.08\%$ versus $84.17\%$ balanced accuracy on ADNI. \luna remains ahead of all evaluated controls, supporting its effectiveness when the baselines are trained and selected for missing-input performance. Appendix~\ref{app:missing_input_controls} gives the training protocol and all test-view results.

\subsection{Ablation Study}
\label{sec:refresh_ablation_components}

\textbf{Private vs. common paths.}
To assess the contribution of each prediction path, we compare \luna with w/o Private and w/o Common. The former removes the direct private prediction; the latter removes the common branch. Removing either branch worsens MM on all eight tasks (\Cref{tab:refresh_ablation_core}a). On ADNI, balanced accuracy drops by $18.54$ pp without private prediction and by $4.79$ points without common collaboration, supporting the complementary roles of the two paths.
To examine how the paths should be combined, Equal addition fixes the common residual scale to one. The learned combination performs better on most tasks, supporting more consistent cross-task performance by adapting the contribution of the common prediction. More analyses are in Appendix~\ref{sec:refresh_ablation_details}.

\textbf{Graph interaction controls.}
To test whether prediction compatibility helps organize model interaction, we compare the fitted compatibility graph with Uniform Graph, which assigns equal weights over all available nodes (\Cref{tab:refresh_ablation_core}b). \luna outperforms Uniform Graph on seven tasks and ties on ADNI. These results favor compatibility-guided interaction over uniform aggregation.

\textbf{Consistency-pair selection.}
To examine which state relationships should be regularized, we compare \luna with w/o PCC, Full-to-Minimal, and All-Pairs, retaining the architecture and remaining training settings (Table~\ref{tab:refresh_ablation_core}c). W/o PCC removes the consistency objective. Full-to-Minimal aligns the full state directly with each single-modality state, whereas All-Pairs aligns every pair of distinct nonempty states. \luna's adjacent-state PCC outperforms removing PCC and both alternative pair selections on most tasks. These results favor concentrating consistency on single-modality transitions. Further details are given in Appendix~\ref{sec:refresh_ablation_details}.

\textbf{Direction of consistency.}
To test the value of directional guidance, w/o Stop-gradient retains the PCC loss but allows its gradients to update both the richer and poorer states. This worsens MM on most tasks (Table~\ref{tab:refresh_ablation_core}c), favoring richer-to-poorer guidance over mutual adjustment.

\input{tables/refresh_ablation_core}

\begin{figure}[!t]
\centering
\begin{minipage}[t]{.278\linewidth}
\centering
\includegraphics[width=\linewidth]{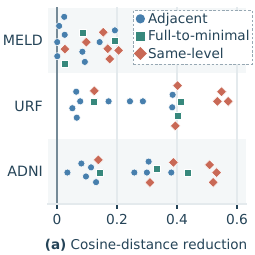}
\end{minipage}\hspace{.0145\linewidth}%
\begin{minipage}[t]{.356\linewidth}
\centering
\includegraphics[width=\linewidth]{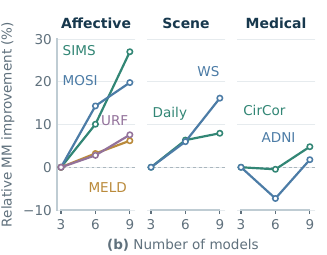}
\end{minipage}\hspace{.0145\linewidth}%
\begin{minipage}[t]{.237\linewidth}
\centering
\includegraphics[width=\linewidth]{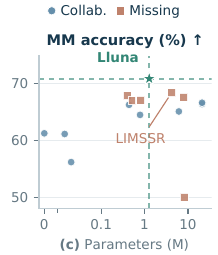}
\end{minipage}

\captionsetup{skip=4pt,justification=raggedright,singlelinecheck=false}
\caption{\textbf{Analyses of \luna.} (a) Reduction in common-state drift with PCC. (b) Relative MM gains over three-model pools. (c) MM versus trainable parameters on URF. Additional pool-size and parameter results are in Appendix~\ref{app:refresh_resources}.}
\label{fig:lluna_analysis}
\label{fig:refresh_diag_common}
\label{fig:model_pool_scaling_main}
\label{fig:refresh_resource_parameters}
\vspace{-.1in}
\end{figure}

\FloatBarrier

\textbf{Consistency target.}
To identify where cross-state consistency is most effective, we apply PCC to the common representation (\luna, or PCC-Common), the private representation (PCC-Private), or both (PCC-Both), while retaining the two prediction paths (Table~\ref{tab:refresh_ablation_core}d). PCC-Common outperforms both alternatives on all eight tasks. These results support focusing consistency on the common representation while preserving a separate path for modality-specific information.

\textbf{Effect of PCC on common-representation drift.}
To examine how PCC shapes relationships among availability states, we compare common-state cosine distances in matched models with and without PCC. For each state pair, absolute reduction is $d_{\mathrm{off}}-d_{\mathrm{on}}$: the pair's mean distance without PCC minus that with PCC, each averaged over the same test examples. Figure~\ref{fig:refresh_diag_common}(a) shows individual pair responses on MELD, URF, and ADNI. Both adjacent and more distant state pairs exhibit contraction in these examples, illustrating that local consistency can reduce drift beyond the directly regularized transitions. This complements the MM evidence in Table~\ref{tab:refresh_ablation_core}c.

\FloatBarrier
\Needspace{12\baselineskip}


\begin{figure}[!t]
\centering
\begin{minipage}[t]{2.09in}
\centering
\vspace{0pt}
\includegraphics[width=\linewidth]{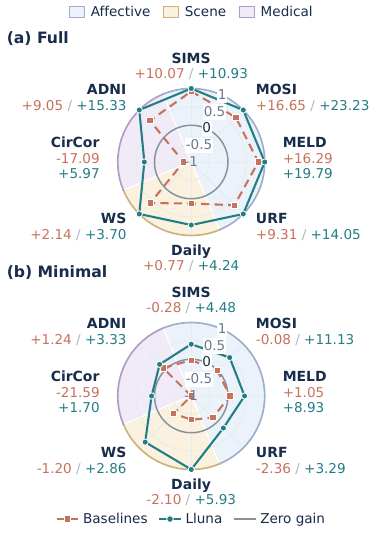}
\captionsetup{skip=4pt,justification=raggedright,singlelinecheck=false}
\caption{Collaboration gains.}
\label{fig:refresh_diag_cooperation}
\end{minipage}\hspace{0.16in}%
\begin{minipage}[t]{3.25in}
\centering
\vspace{0pt}
\includegraphics[width=\linewidth]{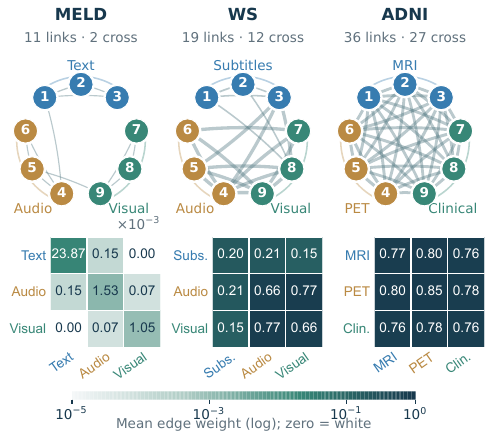}
\captionsetup{skip=4pt,justification=raggedright,singlelinecheck=false}
\caption{Model compatibility across tasks.}
\label{fig:analysis_model_graphs}
\end{minipage}
\vspace{-.1in}
\end{figure}

\subsection{Understanding Model Collaborations}
\label{sec:graph_cases}
\label{sec:refresh_resources}

\textbf{Varying model-pool size.}
We vary \luna's pool with $3,6,9$ models to examine the benefit of additional models (Figure~\ref{fig:model_pool_scaling_main}(b)). Nine models achieve the best observed MM, with relative improvements of $1.77\%$--$27.02\%$ over the three-model pools. Gains vary by task, and intermediate expansions are not uniformly beneficial. Appendix~\ref{app:refresh_resources} reports the scores and pool composition.

\textbf{Collaboration gains.}
\label{sec:refresh_diag_cooperation}
We test whether collaboration remains beneficial as inputs disappear. For each modality, we fix a unimodal reference model and measure collaboration accuracy minus its accuracy, in percentage points, across states retaining that modality. Figure~\ref{fig:refresh_diag_cooperation} compares \luna with the equally weighted mean of seven baselines. From full to minimal inputs, the baseline mean gain declines on all eight tasks and is negative at Minimal on six. \luna retains positive gains on every task and exceeds the baseline mean at both endpoints, supporting beneficial collaboration under severe missingness. Appendix~\ref{app:fixed_anchor_transitions} gives the baseline identities, aggregation protocol, and complete results.

\textbf{Visualization analysis.}
We compare the training-estimated compatibility graphs of MELD, WS, and ADNI (Figure~\ref{fig:analysis_model_graphs}). Sparse cross-modality bridges on MELD contrast with stronger audio--visual coupling on WS and dense connections on ADNI, illustrating task-dependent interaction patterns. Appendix~\ref{app:expert_pools} provides model identities and visualization details.


\subsection{Training Efficiency and Hyperparameter Sensitivity}
\label{sec:refresh_efficiency}
\label{sec:refresh_ablation_pairs}
\label{sec:refresh_ablation_l9}
\label{sec:refresh_parameter_comparison}
\label{sec:sampling_analysis}

\textbf{PCC budgets and hyperparameters.}
To assess sensitivity to consistency coverage and model configuration, we vary the number of adjacent PCC pairs per step and use a nine-configuration design covering PCC strength, private/common width ratio, and graph depth. Larger pair budgets are not uniformly better: SIMS favors four pairs, whereas URF favors nine. Averaged by factor level, the base PCC strength performs best on all four affective tasks, while preferred widths and depths vary. These results support task-specific choices rather than uniformly increasing coverage or capacity. Appendix~\ref{sec:refresh_ablation_details} gives the complete settings and results.

\textbf{Trainable parameters.}
To examine the trainable capacity needed for strong missing-input performance, we compare MM with the number of optimized collaboration parameters, excluding frozen backbones and separately fitted frozen heads. \luna achieves the best MM among the compared methods on all eight tasks. On URF (Figure~\ref{fig:refresh_resource_parameters}(c)), it reaches $70.91\%$ accuracy with $1.29$M parameters, versus $68.46\%$ with $4.25$M for the strongest baseline, LIMSSR. This $2.45$-point gain with $69.8\%$ fewer parameters illustrates effective collaboration with moderate trainable capacity. Appendix~\ref{app:refresh_resources} reports the complete comparison and counting protocol.


\textbf{Chain sampling.}
To test whether fewer PCC comparisons can preserve missing-input performance, we compare task-specific sparse configurations with full-lattice training. Their equal-task mean relative MM change is $-0.28\%$ across eight tasks. On MELD, one sampled chain reduces comparisons per step from nine to two while improving MM by $0.49\%$ relative to full training. Thus, sparse training retains comparable average MM with fewer consistency comparisons. Appendix~\ref{sec:sampling_details} provides all operating points and a controlled sweep varying only the number of sampled chains.

%% file: tables/multiseed_overall.tex
\begin{table}[!t]
\centering
\caption{\textbf{Mean-Missing performance across eight datasets.} Metrics: mean absolute error (MAE), support-weighted F1 (W-F1), accuracy (Acc.), and balanced accuracy (BAcc.). MAE means and Standard Deviations (SDs) are multiplied by 100; F1 and accuracy means are percentages, with SDs in percentage points. Protocol: Appendix~\ref{app:multiseed_protocol}.}
\vspace{-.1in}
\label{tab:multiseed_overall}
\begingroup
\resizebox{1\width}{!}{
\fontsize{8.5}{10}\selectfont
\setlength{\tabcolsep}{0.6pt}
\renewcommand{\arraystretch}{1.07}
\arrayrulecolor{TblInk!65}
\newcommand{\TableOneMethod}[2]{\strut\mbox{#1\if\relax\detokenize{#2}\relax\else\hspace{1pt}{\fontsize{7}{8}\selectfont\color{black!65}#2}\fi}}
\begin{tabular}{@{}>{\raggedright\arraybackslash}p{\dimexpr\linewidth-317.5pt-18\tabcolsep\relax}w{c}{38pt}w{c}{42pt}*{4}{w{c}{34.5pt}}w{c}{37pt}w{c}{34.5pt}>{\columncolor{TblAvg}}w{c}{28pt}@{}}
\toprule
\rowcolor{TblHeader}
\multicolumn{1}{@{}l}{{\color{TblInk}\bfseries Method}} & {\color{TblInk}\bfseries SIMS} & {\color{TblInk}\bfseries MOSI} & {\color{TblInk}\bfseries MELD} & {\color{TblInk}\bfseries URF} & {\color{TblInk}\bfseries Daily} & {\color{TblInk}\bfseries WS} & {\color{TblInk}\bfseries CirCor} & {\color{TblInk}\bfseries ADNI} & {\color{TblInk}\bfseries Avg.} \\
\rowcolor{TblHeader}
 & MAE $\downarrow$ & MAE $\downarrow$ & W-F1 $\uparrow$ & Acc. $\uparrow$ & Acc. $\uparrow$ & Acc. $\uparrow$ & BAcc. $\uparrow$ & BAcc. $\uparrow$ & rank $\downarrow$ \\
\midrule
\rowcolor{TblBlue}
\multicolumn{10}{@{}l@{}}{\rule[-2pt]{0pt}{14pt}\hspace{3pt}\color{TblInk}\bfseries\scshape General collaboration} \\
\addlinespace[3pt]
\TableOneMethod{Mean ens.}{} & \mbox{53.24\hspace{0.4pt}{\fontsize{6}{7}\selectfont\color{black!75}\textpm 0.02}} & \mbox{145.71\hspace{0.4pt}{\fontsize{6}{7}\selectfont\color{black!75}\textpm 0.25}} & \mbox{31.31\hspace{0.4pt}{\fontsize{6}{7}\selectfont\color{black!75}\textpm 0.00}} & \mbox{61.26\hspace{0.4pt}{\fontsize{6}{7}\selectfont\color{black!75}\textpm 0.54}} & \mbox{33.16\hspace{0.4pt}{\fontsize{6}{7}\selectfont\color{black!75}\textpm 0.81}} & \mbox{31.10\hspace{0.4pt}{\fontsize{6}{7}\selectfont\color{black!75}\textpm 0.12}} & \mbox{50.00\hspace{0.4pt}{\fontsize{6}{7}\selectfont\color{black!75}\textpm 0.00}} & \mbox{66.88\hspace{0.4pt}{\fontsize{6}{7}\selectfont\color{black!75}\textpm 6.12}} & $12.06$ \\[2pt]
\TableOneMethod{Logit stack}{\citeyearpar{wolpert1992stacking}} & \mbox{49.62\hspace{0.4pt}{\fontsize{6}{7}\selectfont\color{black!75}\textpm 0.93}} & \mbox{144.12\hspace{0.4pt}{\fontsize{6}{7}\selectfont\color{black!75}\textpm 2.54}} & \mbox{22.70\hspace{0.4pt}{\fontsize{6}{7}\selectfont\color{black!75}\textpm 0.57}} & \mbox{60.92\hspace{0.4pt}{\fontsize{6}{7}\selectfont\color{black!75}\textpm 1.07}} & \mbox{30.11\hspace{0.4pt}{\fontsize{6}{7}\selectfont\color{black!75}\textpm 3.73}} & \mbox{\underline{31.20}\hspace{0.4pt}{\fontsize{6}{7}\selectfont\color{black!75}\textpm 0.15}} & \mbox{51.95\hspace{0.4pt}{\fontsize{6}{7}\selectfont\color{black!75}\textpm 2.20}} & \mbox{71.92\hspace{0.4pt}{\fontsize{6}{7}\selectfont\color{black!75}\textpm 3.20}} & $13.25$ \\[2pt]
\TableOneMethod{Concat MLP}{} & \mbox{42.27\hspace{0.4pt}{\fontsize{6}{7}\selectfont\color{black!75}\textpm 0.73}} & \mbox{130.21\hspace{0.4pt}{\fontsize{6}{7}\selectfont\color{black!75}\textpm 4.21}} & \mbox{44.13\hspace{0.4pt}{\fontsize{6}{7}\selectfont\color{black!75}\textpm 2.98}} & \mbox{64.54\hspace{0.4pt}{\fontsize{6}{7}\selectfont\color{black!75}\textpm 1.37}} & \mbox{31.19\hspace{0.4pt}{\fontsize{6}{7}\selectfont\color{black!75}\textpm 1.85}} & \mbox{30.52\hspace{0.4pt}{\fontsize{6}{7}\selectfont\color{black!75}\textpm 0.65}} & \mbox{74.73\hspace{0.4pt}{\fontsize{6}{7}\selectfont\color{black!75}\textpm 1.89}} & \mbox{\underline{79.71}\hspace{0.4pt}{\fontsize{6}{7}\selectfont\color{black!75}\textpm 3.03}} & $8.00$ \\[2pt]
\TableOneMethod{Attn. pool}{} & \mbox{38.87\hspace{0.4pt}{\fontsize{6}{7}\selectfont\color{black!75}\textpm 0.75}} & \mbox{115.21\hspace{0.4pt}{\fontsize{6}{7}\selectfont\color{black!75}\textpm 6.28}} & \mbox{49.93\hspace{0.4pt}{\fontsize{6}{7}\selectfont\color{black!75}\textpm 0.88}} & \mbox{66.60\hspace{0.4pt}{\fontsize{6}{7}\selectfont\color{black!75}\textpm 0.98}} & \mbox{30.00\hspace{0.4pt}{\fontsize{6}{7}\selectfont\color{black!75}\textpm 1.10}} & \mbox{30.40\hspace{0.4pt}{\fontsize{6}{7}\selectfont\color{black!75}\textpm 0.62}} & \mbox{72.61\hspace{0.4pt}{\fontsize{6}{7}\selectfont\color{black!75}\textpm 4.76}} & \mbox{77.75\hspace{0.4pt}{\fontsize{6}{7}\selectfont\color{black!75}\textpm 3.90}} & $7.62$ \\[2pt]
\TableOneMethod{Soft MoE}{\citeyearpar{jacobs1991mixtures}} & \mbox{58.21\hspace{0.4pt}{\fontsize{6}{7}\selectfont\color{black!75}\textpm 4.88}} & \mbox{133.86\hspace{0.4pt}{\fontsize{6}{7}\selectfont\color{black!75}\textpm 13.86}} & \mbox{47.11\hspace{0.4pt}{\fontsize{6}{7}\selectfont\color{black!75}\textpm 1.33}} & \mbox{66.34\hspace{0.4pt}{\fontsize{6}{7}\selectfont\color{black!75}\textpm 0.49}} & \mbox{29.83\hspace{0.4pt}{\fontsize{6}{7}\selectfont\color{black!75}\textpm 2.02}} & \mbox{30.22\hspace{0.4pt}{\fontsize{6}{7}\selectfont\color{black!75}\textpm 0.99}} & \mbox{70.28\hspace{0.4pt}{\fontsize{6}{7}\selectfont\color{black!75}\textpm 3.54}} & \mbox{73.54\hspace{0.4pt}{\fontsize{6}{7}\selectfont\color{black!75}\textpm 2.07}} & $11.38$ \\[2pt]
\TableOneMethod{FoE}{\citeyearpar{wang2024foe}} & \mbox{53.88\hspace{0.4pt}{\fontsize{6}{7}\selectfont\color{black!75}\textpm 2.27}} & \mbox{148.02\hspace{0.4pt}{\fontsize{6}{7}\selectfont\color{black!75}\textpm 7.96}} & \mbox{31.31\hspace{0.4pt}{\fontsize{6}{7}\selectfont\color{black!75}\textpm 0.00}} & \mbox{52.39\hspace{0.4pt}{\fontsize{6}{7}\selectfont\color{black!75}\textpm 2.28}} & \mbox{31.81\hspace{0.4pt}{\fontsize{6}{7}\selectfont\color{black!75}\textpm 1.92}} & \mbox{29.75\hspace{0.4pt}{\fontsize{6}{7}\selectfont\color{black!75}\textpm 1.81}} & \mbox{49.92\hspace{0.4pt}{\fontsize{6}{7}\selectfont\color{black!75}\textpm 0.19}} & \mbox{64.96\hspace{0.4pt}{\fontsize{6}{7}\selectfont\color{black!75}\textpm 5.84}} & $15.44$ \\[2pt]
\TableOneMethod{Set Trans.}{\citeyearpar{lee2019set}} & \mbox{42.95\hspace{0.4pt}{\fontsize{6}{7}\selectfont\color{black!75}\textpm 0.73}} & \mbox{138.54\hspace{0.4pt}{\fontsize{6}{7}\selectfont\color{black!75}\textpm 12.45}} & \mbox{48.32\hspace{0.4pt}{\fontsize{6}{7}\selectfont\color{black!75}\textpm 1.38}} & \mbox{63.76\hspace{0.4pt}{\fontsize{6}{7}\selectfont\color{black!75}\textpm 0.89}} & \mbox{29.66\hspace{0.4pt}{\fontsize{6}{7}\selectfont\color{black!75}\textpm 2.39}} & \mbox{29.03\hspace{0.4pt}{\fontsize{6}{7}\selectfont\color{black!75}\textpm 2.09}} & \mbox{71.32\hspace{0.4pt}{\fontsize{6}{7}\selectfont\color{black!75}\textpm 3.00}} & \mbox{73.25\hspace{0.4pt}{\fontsize{6}{7}\selectfont\color{black!75}\textpm 5.85}} & $12.12$ \\[2pt]
\TableOneMethod{CLARGA}{\citeyearpar{patapati2026clarga}} & \mbox{39.99\hspace{0.4pt}{\fontsize{6}{7}\selectfont\color{black!75}\textpm 0.34}} & \mbox{122.90\hspace{0.4pt}{\fontsize{6}{7}\selectfont\color{black!75}\textpm 2.63}} & \mbox{52.48\hspace{0.4pt}{\fontsize{6}{7}\selectfont\color{black!75}\textpm 0.81}} & \mbox{66.33\hspace{0.4pt}{\fontsize{6}{7}\selectfont\color{black!75}\textpm 0.67}} & \mbox{29.49\hspace{0.4pt}{\fontsize{6}{7}\selectfont\color{black!75}\textpm 0.95}} & \mbox{30.73\hspace{0.4pt}{\fontsize{6}{7}\selectfont\color{black!75}\textpm 0.27}} & \mbox{73.63\hspace{0.4pt}{\fontsize{6}{7}\selectfont\color{black!75}\textpm 4.83}} & \mbox{77.00\hspace{0.4pt}{\fontsize{6}{7}\selectfont\color{black!75}\textpm 3.38}} & $8.25$ \\[2pt]
\rowcolor{TblTeal}
\multicolumn{10}{@{}l@{}}{\rule[-2pt]{0pt}{14pt}\hspace{3pt}\color{TblInk}\bfseries\scshape Missing-modality methods} \\
\addlinespace[3pt]
\TableOneMethod{GIML}{\citeyearpar{meng2026giml}} & \mbox{68.09\hspace{0.4pt}{\fontsize{6}{7}\selectfont\color{black!75}\textpm 20.73}} & \mbox{181.62\hspace{0.4pt}{\fontsize{6}{7}\selectfont\color{black!75}\textpm 96.15}} & \mbox{33.72\hspace{0.4pt}{\fontsize{6}{7}\selectfont\color{black!75}\textpm 6.49}} & \mbox{53.96\hspace{0.4pt}{\fontsize{6}{7}\selectfont\color{black!75}\textpm 3.06}} & \mbox{32.94\hspace{0.4pt}{\fontsize{6}{7}\selectfont\color{black!75}\textpm 1.29}} & \mbox{30.92\hspace{0.4pt}{\fontsize{6}{7}\selectfont\color{black!75}\textpm 0.18}} & \mbox{74.83\hspace{0.4pt}{\fontsize{6}{7}\selectfont\color{black!75}\textpm 1.26}} & \mbox{72.88\hspace{0.4pt}{\fontsize{6}{7}\selectfont\color{black!75}\textpm 2.67}} & $11.38$ \\[2pt]
\TableOneMethod{MARS}{\citeyearpar{baek2026mars}} & \mbox{40.83\hspace{0.4pt}{\fontsize{6}{7}\selectfont\color{black!75}\textpm 3.79}} & \mbox{132.58\hspace{0.4pt}{\fontsize{6}{7}\selectfont\color{black!75}\textpm 16.83}} & \mbox{51.94\hspace{0.4pt}{\fontsize{6}{7}\selectfont\color{black!75}\textpm 0.30}} & \mbox{67.47\hspace{0.4pt}{\fontsize{6}{7}\selectfont\color{black!75}\textpm 0.50}} & \mbox{31.47\hspace{0.4pt}{\fontsize{6}{7}\selectfont\color{black!75}\textpm 1.61}} & \mbox{30.35\hspace{0.4pt}{\fontsize{6}{7}\selectfont\color{black!75}\textpm 0.58}} & \mbox{63.48\hspace{0.4pt}{\fontsize{6}{7}\selectfont\color{black!75}\textpm 10.72}} & \mbox{76.62\hspace{0.4pt}{\fontsize{6}{7}\selectfont\color{black!75}\textpm 1.58}} & $8.38$ \\[2pt]
\TableOneMethod{LIMSSR}{\citeyearpar{xu2026limssr}} & \mbox{\underline{35.90}\hspace{0.4pt}{\fontsize{6}{7}\selectfont\color{black!75}\textpm 0.21}} & \mbox{102.34\hspace{0.4pt}{\fontsize{6}{7}\selectfont\color{black!75}\textpm 0.56}} & \mbox{\underline{53.98}\hspace{0.4pt}{\fontsize{6}{7}\selectfont\color{black!75}\textpm 0.33}} & \mbox{\underline{68.86}\hspace{0.4pt}{\fontsize{6}{7}\selectfont\color{black!75}\textpm 0.82}} & \mbox{28.59\hspace{0.4pt}{\fontsize{6}{7}\selectfont\color{black!75}\textpm 2.34}} & \mbox{29.91\hspace{0.4pt}{\fontsize{6}{7}\selectfont\color{black!75}\textpm 0.84}} & \mbox{65.98\hspace{0.4pt}{\fontsize{6}{7}\selectfont\color{black!75}\textpm 13.55}} & \mbox{73.58\hspace{0.4pt}{\fontsize{6}{7}\selectfont\color{black!75}\textpm 3.09}} & $7.75$ \\[2pt]
\TableOneMethod{SimMLM}{\citeyearpar{li2025simmlm}} & \mbox{39.64\hspace{0.4pt}{\fontsize{6}{7}\selectfont\color{black!75}\textpm 0.43}} & \mbox{116.10\hspace{0.4pt}{\fontsize{6}{7}\selectfont\color{black!75}\textpm 3.07}} & \mbox{53.09\hspace{0.4pt}{\fontsize{6}{7}\selectfont\color{black!75}\textpm 0.25}} & \mbox{68.05\hspace{0.4pt}{\fontsize{6}{7}\selectfont\color{black!75}\textpm 0.43}} & \mbox{30.17\hspace{0.4pt}{\fontsize{6}{7}\selectfont\color{black!75}\textpm 1.47}} & \mbox{30.93\hspace{0.4pt}{\fontsize{6}{7}\selectfont\color{black!75}\textpm 0.43}} & \mbox{\underline{76.60}\hspace{0.4pt}{\fontsize{6}{7}\selectfont\color{black!75}\textpm 2.30}} & \mbox{79.00\hspace{0.4pt}{\fontsize{6}{7}\selectfont\color{black!75}\textpm 2.93}} & $\underline{5.00}$ \\[2pt]
\TableOneMethod{ShaSpec}{\citeyearpar{wang2023shaspec}} & \mbox{37.39\hspace{0.4pt}{\fontsize{6}{7}\selectfont\color{black!75}\textpm 0.27}} & \mbox{116.82\hspace{0.4pt}{\fontsize{6}{7}\selectfont\color{black!75}\textpm 3.38}} & \mbox{52.92\hspace{0.4pt}{\fontsize{6}{7}\selectfont\color{black!75}\textpm 0.48}} & \mbox{67.25\hspace{0.4pt}{\fontsize{6}{7}\selectfont\color{black!75}\textpm 0.70}} & \mbox{30.79\hspace{0.4pt}{\fontsize{6}{7}\selectfont\color{black!75}\textpm 1.15}} & \mbox{30.95\hspace{0.4pt}{\fontsize{6}{7}\selectfont\color{black!75}\textpm 0.23}} & \mbox{74.59\hspace{0.4pt}{\fontsize{6}{7}\selectfont\color{black!75}\textpm 1.16}} & \mbox{78.08\hspace{0.4pt}{\fontsize{6}{7}\selectfont\color{black!75}\textpm 3.12}} & $5.50$ \\[2pt]
\TableOneMethod{Flex-MoE}{\citeyearpar{yun2024flexmoe}} & \mbox{43.25\hspace{0.4pt}{\fontsize{6}{7}\selectfont\color{black!75}\textpm 7.98}} & \mbox{125.34\hspace{0.4pt}{\fontsize{6}{7}\selectfont\color{black!75}\textpm 7.87}} & \mbox{50.35\hspace{0.4pt}{\fontsize{6}{7}\selectfont\color{black!75}\textpm 1.47}} & \mbox{67.02\hspace{0.4pt}{\fontsize{6}{7}\selectfont\color{black!75}\textpm 1.27}} & \mbox{30.34\hspace{0.4pt}{\fontsize{6}{7}\selectfont\color{black!75}\textpm 2.33}} & \mbox{29.82\hspace{0.4pt}{\fontsize{6}{7}\selectfont\color{black!75}\textpm 0.80}} & \mbox{68.34\hspace{0.4pt}{\fontsize{6}{7}\selectfont\color{black!75}\textpm 4.85}} & \mbox{73.88\hspace{0.4pt}{\fontsize{6}{7}\selectfont\color{black!75}\textpm 2.36}} & $9.62$ \\[2pt]
\rowcolor{TblSand}
\multicolumn{10}{@{}l@{}}{\rule[-2pt]{0pt}{14pt}\hspace{3pt}\color{TblInk}\bfseries\scshape Omni models} \\
\addlinespace[3pt]
\TableOneMethod{Q2.5-O-3B}{\citeyearpar{xu2025qwenomni}} & 171.86 & 161.69 & 22.53 & 55.00 & 35.88 & 28.49 & 61.50 & 52.92 & $15.62$ \\[2pt]
\TableOneMethod{Q2.5-O-7B}{\citeyearpar{xu2025qwenomni}} & 94.56 & \underline{99.67} & 39.58 & 53.91 & \underline{36.72} & 29.71 & 59.80 & 53.75 & $12.00$ \\[2pt]
\TableOneMethod{InterO-8B}{\citeyearpar{tong2025interactiveomni}} & 182.38 & 135.56 & 26.52 & 51.36 & 24.86 & 22.94 & 58.83 & 53.54 & $18.38$ \\[2pt]
\TableOneMethod{OmniV-9B}{\citeyearpar{ye2025omnivinci}} & 106.25 & 104.34 & 34.39 & 56.17 & 31.92 & 25.63 & 75.17 & 50.00 & $12.38$ \\[2pt]
\TableOneMethod{Gemma4-12B}{\citeyearpar{gemmateam2026gemma4}} & 366.55 & 193.88 & 30.53 & 52.93 & 32.49 & 30.06 & 52.19 & 50.83 & $16.50$ \\[2pt]
\TableOneMethod{StreamO-8B}{\citeyearpar{zhang2025streamomni}} & 169.47 & 387.08 & 17.23 & 51.96 & 25.71 & 28.24 & 55.44 & 51.04 & $19.38$ \\[2pt]
\addlinespace[2pt]
\midrule
\rowcolor{TblOurs}
\multicolumn{1}{@{}l}{\color{TblInk}\bfseries\luna (ours)} & \mbox{\textbf{34.24}\hspace{0.4pt}{\fontsize{6}{7}\selectfont\color{black!75}\textpm 0.30}} & \mbox{\textbf{96.71}\hspace{0.4pt}{\fontsize{6}{7}\selectfont\color{black!75}\textpm 0.63}} & \mbox{\textbf{56.10}\hspace{0.4pt}{\fontsize{6}{7}\selectfont\color{black!75}\textpm 0.21}} & \mbox{\textbf{70.62}\hspace{0.4pt}{\fontsize{6}{7}\selectfont\color{black!75}\textpm 0.23}} & \mbox{\textbf{37.29}\hspace{0.4pt}{\fontsize{6}{7}\selectfont\color{black!75}\textpm 0.96}} & \mbox{\textbf{33.90}\hspace{0.4pt}{\fontsize{6}{7}\selectfont\color{black!75}\textpm 0.30}} & \mbox{\textbf{81.76}\hspace{0.4pt}{\fontsize{6}{7}\selectfont\color{black!75}\textpm 0.55}} & \mbox{\textbf{82.21}\hspace{0.4pt}{\fontsize{6}{7}\selectfont\color{black!75}\textpm 1.26}} & $\mathbf{1.00}$ \\[2pt]
\midrule
\rowcolor{TblHeader}
\multicolumn{1}{@{}l}{\color{TblInk}$\Delta_{\mathrm{rel}}$ (\%)} & $+4.61$ & $+2.98$ & $+3.91$ & $+2.56$ & $+1.54$ & $+8.65$ & $+6.73$ & $+3.14$ & $\mathbf{+4.26\%}$ \\
\bottomrule
\end{tabular}
}
\endgroup
\vspace{-.15in}
\end{table}

%% file: tables/supplement_missing_input_main.tex
\begingroup
\fontsize{8.5}{10.1}\selectfont
\tiny
\setlength{\tabcolsep}{1pt}
\renewcommand{\arraystretch}{1}
\arrayrulecolor{TblInk!65}

\begin{tabular*}{\linewidth}{
@{\extracolsep{\fill}}
>{\raggedright\arraybackslash}p{\dimexpr.35\linewidth-8\tabcolsep\relax}
*{3}{>{\raggedleft\arraybackslash}p{.155\linewidth}}
@{}
}
\toprule[.7pt]
\rowcolor{TblHeader}\textbf{Method} & \textbf{SIMS}$\downarrow$ & \textbf{WS}$\uparrow$ & \textbf{ADNI}$\uparrow$ \\
\midrule[.35pt]
Concat MLP & 37.90 & 32.64 & 70.21 \\
Attn. pool & 35.46 & 32.15 & 78.13 \\
Soft MoE & 38.08 & 31.17 & 80.83 \\
FoE & 47.59 & 31.03 & 63.33 \\
Set Trans. & 37.42 & 31.17 & 82.08 \\
CLARGA & 40.18 & 31.17 & 80.63 \\
I$^2$MoE & 37.31 & 30.27 & 77.71 \\
\midrule[.35pt]
Qwen2.5-Omni 3B & 51.55 & 30.51 & 66.88 \\
Qwen2.5-Omni 7B & 48.61 & 30.54 & 69.79 \\
Stream-Omni & 65.71 & 29.92 & 74.17 \\
\midrule[.35pt]
\rowcolor{TblOurs}\textbf{\luna} & \textbf{34.21} & \textbf{33.82} & \textbf{84.17} \\
\bottomrule[.7pt]
\end{tabular*}
\vspace{-.1in}
\endgroup

%% file: tables/refresh_ablation_core.tex
\begin{table}[!t]
\centering
\caption{\textbf{Ablation study.} MM follows the metrics in Table~\ref{tab:multiseed_overall}. The final $\Delta_{\mathrm{rel}}$ row averages relative changes across eight tasks compared with \luna; negative values indicate degradation.}
\label{tab:refresh_ablation_core}
\vspace{-.1in}
\resizebox{.8\textwidth}{!}{
\begingroup\fontsize{8.5}{10}\selectfont\setlength{\tabcolsep}{1.2pt}
\renewcommand{\arraystretch}{1.12}
\begin{tabular}{@{}w{l}{32pt}>{\columncolor{TblOurs}}w{r}{26pt}@{\hspace{6pt}}w{r}{26pt}w{r}{26pt}w{r}{26pt}@{\hspace{6pt}}w{r}{26pt}@{\hspace{6pt}}w{r}{26pt}w{r}{26pt}w{r}{26pt}w{r}{26pt}@{\hspace{6pt}}w{r}{26pt}w{r}{26pt}@{}}
\toprule
 & \multicolumn{1}{c}{} & \multicolumn{3}{c}{\textbf{(a) Paths}} & \textbf{(b) Graph} & \multicolumn{4}{c}{\textbf{(c) PCC}} & \multicolumn{2}{c}{\textbf{(d) Target}} \\
\cmidrule(lr){3-5}\cmidrule(lr){6-6}\cmidrule(lr){7-10}\cmidrule(l){11-12}
\rowcolor{TblHeader}Dataset & \luna & \makebox[26pt][c]{\fontsize{7.5}{8.5}\selectfont \shortstack{w/o\\Private}} & \makebox[26pt][c]{\fontsize{7.5}{8.5}\selectfont \shortstack{w/o\\Common}} & \makebox[26pt][c]{\fontsize{7.5}{8.5}\selectfont \shortstack{Equal\\addition}} & \makebox[26pt][c]{\fontsize{7.5}{8.5}\selectfont \shortstack{Uniform\\Graph}} & \makebox[26pt][c]{\fontsize{7.5}{8.5}\selectfont \shortstack{w/o\\PCC}} & \makebox[26pt][c]{\fontsize{7.5}{8.5}\selectfont \shortstack{Full-to-\\Minimal}} & \makebox[26pt][c]{\fontsize{7.5}{8.5}\selectfont \shortstack{All-\\Pairs}} & \makebox[26pt][c]{\fontsize{7.5}{8.5}\selectfont \shortstack{w/o Stop-\\gradient}} & \makebox[26pt][c]{\fontsize{7.5}{8.5}\selectfont \shortstack{PCC-\\Private}} & \makebox[26pt][c]{\fontsize{7.5}{8.5}\selectfont \shortstack{PCC-\\Both}} \\
\midrule
SIMS$\downarrow$ & \underline{34.21} & 35.35 & 35.52 & \textbf{33.88} & 34.77 & 34.87 & 34.74 & 34.92 & 35.21 & 34.59 & 35.18 \\
MOSI$\downarrow$ & \textbf{98.34} & 101.41 & 101.40 & 100.21 & 100.74 & 99.87 & \underline{99.10} & 99.40 & 103.17 & 99.95 & 99.62 \\
MELD$\uparrow$ & \underline{56.08} & 55.49 & 55.15 & \textbf{56.16} & 55.20 & 55.96 & 54.58 & 55.08 & 55.93 & 55.96 & 55.08 \\
URF$\uparrow$ & \textbf{70.61} & 68.01 & 69.62 & 70.12 & 69.62 & 70.00 & \underline{70.22} & 70.17 & 70.04 & 70.00 & 69.99 \\
\addlinespace[2pt]
Daily$\uparrow$ & \textbf{38.42} & 37.57 & 36.16 & 28.81 & \underline{38.14} & \textbf{38.42} & \textbf{38.42} & \underline{38.14} & 28.53 & \underline{38.14} & 37.57 \\
WS$\uparrow$ & \textbf{33.82} & 32.50 & 32.25 & 33.30 & \underline{33.51} & 32.15 & 33.30 & 32.46 & 32.54 & 32.36 & 32.29 \\
\addlinespace[2pt]
CirCor$\uparrow$ & \textbf{82.56} & 71.88 & 78.12 & 74.89 & 78.71 & 78.71 & 76.92 & 81.93 & 77.42 & \underline{82.26} & 82.16 \\
ADNI$\uparrow$ & \textbf{84.17} & 65.63 & 79.38 & \underline{83.96} & \textbf{84.17} & 80.83 & 81.04 & \underline{83.96} & \textbf{84.17} & 82.50 & 83.33 \\
\midrule
\rowcolor{TblAvg}$\Delta_{\mathrm{rel}}$ (\%) & $\mathbf{0.00}$ & $-6.53$ & $-3.95$ & $-4.70$ & $-1.67$ & $-2.27$ & $-2.20$ & $-1.42$ & $-5.58$ & $-1.40$ & $-1.88$ \\
\bottomrule
\end{tabular}
\endgroup
}
\vspace{-.1in}
\end{table}

%% file: section/conclusion.tex
\section{Conclusion}
We study collaboration among pretrained unimodal models under missing modalities and propose \luna, which combines direct access to modality-specific features with residual common collaboration and progressive cross-state consistency.
Across eight datasets, \luna improves Mean-Missing performance over the strongest evaluated baseline on every task, averaging $4.26\%$ relative improvement.
It also maintains positive mean collaboration gains under minimal inputs, supporting robust collaboration as available evidence decreases.
We discuss limitations in Appendix~\ref{app:limitations}.



%% file: section/appendix.tex
\begingroup
\renewcommand{\contentsname}{Appendix contents}
\setcounter{tocdepth}{2}
\small
\setlength{\parskip}{0pt}
\pdfbookmark[0]{Appendix contents}{appendix.contents}
\tableofcontents
\endgroup
\clearpage
\clearpage
\section{Complete Five-Run Evaluation}
\label{app:multiseed_results}
\subsection{Selection and aggregation protocol}
\label{app:multiseed_protocol}
Table~\ref{tab:multiseed_overall} reports five training runs for each trainable method. Every retrained baseline uses seeds $11,\allowbreak 47,\allowbreak 103,\allowbreak 211,\allowbreak 389$. Each dataset uses one adopted \luna configuration across its five runs; all five runs contribute to the reported mean and standard deviation. Omni baselines use fixed backbone/head checkpoints with deterministic inference; their repeated evaluations reproduce the same score, reported without a training SD.

\paragraph{Mean-Missing and relative improvement.}
Let $\mathcal V_d^{-}$ be the legal, nonempty, non-full availability states of dataset $d$, and let $m_{d,c}(S)$ be the task score for run $c$ in state $S$. Mean-Missing is computed within each run before averaging runs:
\begin{equation}
\operatorname{MM}_{d,c}
=\frac{1}{|\mathcal V_d^{-}|}\sum_{S\in\mathcal V_d^{-}}m_{d,c}(S).
\end{equation}
Each main-table cell reports the corresponding method's MM result. Relative improvement uses the best baseline mean for that dataset. For an individual view, the reference is the strongest baseline mean in that view. Thus, baseline MM is not an average of the different per-view winners. For the displayed five-run mean $\bar m$ and baseline score $B$, relative improvement $\Delta_{\mathrm{rel}}$ (in percent) is
\begin{equation}
\Delta_{\mathrm{rel}}=\begin{cases}
100(B-\bar m)/B, & \text{MAE (lower is better)},\\
100(\bar m-B)/B, & \text{F1 or accuracy (higher is better)}.
\end{cases}
\end{equation}
Positive $\Delta_{\mathrm{rel}}$ indicates improvement; a negative value indicates a shortfall relative to the strongest baseline. In all result tables, MAE means and SDs are multiplied by 100. F1 and accuracy means are percentages, with SDs in percentage points. Every mean and SD uses two decimal places; for MAE this retains the precision of four decimals in native units. The relative improvement reported in the main text averages $\Delta_{\mathrm{rel}}$ equally across the eight datasets, using unrounded values. The last cell of the $\Delta_{\mathrm{rel}}$ row in Table~\ref{tab:multiseed_overall} reports this same mean in percent. The Avg.\ rank column averages each method's rank across the eight datasets. On each dataset, all 21 methods are ranked by their unrounded Mean-Missing means. Rank 1 is best: lower MAE and higher F1 or accuracy receive better ranks. Exact ties receive the average of their occupied positions. We rank the five-run means, not individual runs; every dataset receives equal weight. This summarizes cross-dataset standing without averaging incompatible metrics.

\paragraph{View notation.}
For CH-SIMS v2, CMU-MOSI, MELD, and UR-FUNNY v2, T/A/V denote text/\allowbreak audio/\allowbreak vision. Daily-Omni uses A/V for audio/video. WorldSense uses V/A/S for visual/audio/subtitles. ADNI uses M/P/C for MRI/PET/clinical inputs, including states without clinical inputs. CirCor-4Site uses AV/PV/TV/MV for four auscultation sites; these are site identifiers, not audio/visual modality abbreviations. Required question-and-choice context is retained for the scene-understanding tasks while the listed modalities vary.

\input{tables/multiseed_details}
\section{Main Experimental Setup}
\label{app:experimental_setup}

\subsection{Tasks, splits, and availability states}
SIMS and MOSI are sentiment-regression tasks; MELD and URF are emotion and humor classification tasks. Daily and WS use supervised, video-disjoint QA splits derived from the public benchmarks. CirCor predicts murmur presence from four auscultation sites, and ADNI predicts MCI-to-dementia conversion using MRI, PET, and clinical inputs. The medical tasks use patient- or subject-disjoint splits. The task-specific paragraphs below specify outcome windows, eligibility rules, and preprocessing.

Table~\ref{tab:dataset_protocol} specifies the evaluation population, split unit, and task metric. The same test examples are evaluated under every legal nonempty modality subset. Full inputs are reported separately; Mean-Missing equally weights the remaining states within each run before the five-run aggregation in Appendix~\ref{app:multiseed_protocol}. Accuracy is abbreviated Acc.; balanced accuracy (BAcc.) is the mean of the two class recalls. MELD uses support-weighted F1 over its seven emotion classes.

\begin{table}[!htb]
\centering
\caption{\textbf{Dataset protocols.} Entries count utterances (SIMS/MOSI/MELD), humor examples (URF), question--answer items (Daily/WS), visits (CirCor), or labeled subjects (ADNI). Splits are train / validation / test; state counts include Full. Input abbreviations follow Appendix~\ref{app:multiseed_protocol}.}
\label{tab:dataset_protocol}
\begingroup
\small
\setlength{\tabcolsep}{4pt}
\renewcommand{\arraystretch}{1.13}
\begin{tabular}{@{}lllrll@{}}
\toprule
\rowcolor{TblHeader}
\textbf{Dataset} & \textbf{Inputs} & \textbf{Train / val. / test} & \textbf{States} &
\textbf{Split unit} & \textbf{Task / metric} \\
\midrule
SIMS & T/A/V & 2,722 / 647 / 1,034 & 7 & Source video & Sentiment / MAE \\
MOSI & T/A/V & 1,284 / 229 / 686 & 7 & Source video & Sentiment / MAE \\
MELD & T/A/V & 9,988 / 1,108 / 2,608 & 7 & Dialogue & Emotion / W-F1 \\
URF & T/A/V & 7,614 / 980 / 994 & 7 & Source video & Humor / Acc. \\
\addlinespace[3pt]
Daily & A/V & 836 / 183 / 177 & 3 & Video & Four-choice QA / Acc. \\
WS & V/A/S & 2,214 / 480 / 478 & 7 & Video & 3/4-choice QA / Acc. \\
\addlinespace[3pt]
CirCor & AV/PV/TV/MV & 409 / 88 / 88 & 15 & Patient group & Murmur / BAcc. \\
ADNI & M/P/C & 202 / 32 / 50 & 7 & Subject & MCI conversion / BAcc. \\
\bottomrule
\end{tabular}
\endgroup
\end{table}

\paragraph{Affective tasks.}
SIMS and MOSI predict continuous utterance-level sentiment in $[-1,1]$ and $[-3,3]$, respectively. MELD predicts anger, disgust, fear, joy, neutral, sadness, or surprise; URF predicts funny versus not funny. URF v2 refers to the fixed preprocessing protocol used here. Each task varies text, audio, and vision, giving one Full state, three two-modality states, and three one-modality states.

\paragraph{Derived audio--visual question answering.}
Daily and WS use supervised, video-disjoint derived splits of their public QA resources. Daily contains 479 / 103 / 102 videos; WS contains 1,164 / 249 / 249. The question and candidate answers remain visible as task context in every state. Daily has four choices and two evidence modalities, audio and video. Its locally generated ASR text depends on audio and disappears with it. WS has three or four choices and varies video, raw audio, and official subtitles independently; padded option positions are masked before prediction. Candidate-order prefixes are removed before encoding. Generated captions and answer-bearing annotations are not evidence inputs.

\paragraph{CirCor.}
The four-site derived task retains 585 visits from 556 patient groups with all four auscultation sites and a Present/Absent murmur label; Unknown labels are excluded. Patient groups connect records sharing Patient ID or Additional ID. The train / validation / test splits contain 387 / 84 / 85 groups with zero overlap, and 84 / 18 / 18 positive visits. AV, PV, TV, and MV denote the aortic, pulmonary, tricuspid, and mitral sites. Removing a site removes its waveform and every representation derived from that waveform. Age, sex, height, weight, and pregnancy status remain available as context; diagnostic annotations, identifiers, murmur descriptors, and segmentation labels are excluded from model inputs. Mean-Missing averages all 14 incomplete site combinations.

\paragraph{ADNI.}
The task predicts MCI-to-dementia conversion with subsequent diagnostic confirmation. The frozen subject partition contains 384 / 72 / 101 subjects before outcome eligibility filtering; the labeled subset in Table~\ref{tab:dataset_protocol} contains 41 / 10 / 10 converters. MRI, amyloid PET, ADAS, FAQ, and CDR measurements are matched within a 90-day window. The prediction index is the latest date among these selected measurements, so the outcome clock starts after all input measurements. A converter has a first dementia diagnosis after the index and no later than 36 months plus 90 days (approximately 39 months), followed by a confirming dementia diagnosis on a distinct later date. This cutoff applies to the first diagnosis, not to the confirmation visit, which may occur beyond that window. Stable MCI requires no dementia through the first-diagnosis window and an explicit MCI assessment at or after 36 months minus 90 days. Feature filtering, imputation, standardization, and PCA are fitted only on the 384 training subjects; task heads and compatibility estimates use the training subset eligible under this outcome definition. MRI, PET, and the clinical instruments form three independently removable modalities, including clinical-absent states.

\subsection{Frozen inputs and baseline interfaces}
\paragraph{Main model pools.}
\label{app:expert_pools}
Each task uses a pool of nine frozen model representations or task-specific feature sets. SIMS, MOSI, MELD, URF, WS, Daily, and CirCor share the nine encoders in Table~\ref{tab:graph_expert_pools}(a), with task-specific inputs summarized in Table~\ref{tab:pool_input_mapping}. ADNI uses the MRI and PET embeddings, imaging measurements, and clinical assessment features listed in Table~\ref{tab:graph_expert_pools}(b). Each pool entry has its own adapter.

\begin{table}[!htb]
\centering
\caption{\textbf{Frozen model pools and parameter counts.} (a) Shared encoders for SIMS, MOSI, MELD, URF, WS, Daily, and CirCor. Counts are approximate billions of frozen parameters in the components used for feature extraction under complete inputs, excluding trainable collaboration modules. (b) ADNI inputs. Entry indices are preserved for both pools; Table~\ref{tab:pool_input_mapping} specifies task-specific input mappings.}
\label{tab:graph_expert_pools}
\begingroup
\small
\setlength{\tabcolsep}{4pt}
\renewcommand{\arraystretch}{1.12}
\begin{tabular}{@{}>{\centering\arraybackslash}p{.07\textwidth}>{\centering\arraybackslash}p{.12\textwidth}p{\dimexpr.81\textwidth-6\tabcolsep-50pt\relax}>{\raggedleft\arraybackslash}p{50pt}@{}}
\multicolumn{4}{@{}l}{\textbf{(a) Shared text/audio/vision model pool}} \\[4pt]
\toprule
\rowcolor{TblHeader}
\textbf{Entry} & \textbf{Modality} & \textbf{Model / component used} & \textbf{Params (B)} \\
\midrule
1 & \multirow{3}{*}{Text} & Qwen3-1.7B & 1.721 \\
2 & & SmolLM2-1.7B-Instruct & 1.711 \\
3 & & XLM-RoBERTa-large & 0.560 \\ \midrule
\addlinespace[4pt]
4 & \multirow{3}{*}{Audio} & Whisper-large-v3 (encoder) & 0.637 \\
5 & & WavLM-base-plus & 0.094 \\
6 & & HuBERT-large-ls960-ft & 0.315 \\ \midrule
\addlinespace[4pt]
7 & \multirow{3}{*}{Vision} & DINOv2-giant & 1.136 \\
8 & & SigLIP2-giant-opt-patch16-256 (vision tower) & 1.163 \\
9 & & VideoMAE-base-finetuned-kinetics & 0.087 \\
\midrule
\rowcolor{TblAvg}
\multicolumn{3}{l}{\textbf{Total (9 models)}} & $\approx 7.43$ \\
\bottomrule
\end{tabular}
\par\vspace{8pt}
\begin{tabular}{@{}>{\centering\arraybackslash}p{.07\textwidth}p{\dimexpr.245\textwidth-4\tabcolsep\relax}>{\centering\arraybackslash}p{.07\textwidth}p{\dimexpr.245\textwidth-4\tabcolsep\relax}>{\centering\arraybackslash}p{.07\textwidth}p{\dimexpr.30\textwidth-2\tabcolsep\relax}@{}}
\multicolumn{6}{@{}l}{\textbf{(b) ADNI model representations and measurements}} \\[4pt]
\toprule
\rowcolor{TblHeader}
\textbf{Entry} & \textbf{MRI} & \textbf{Entry} & \textbf{PET} & \textbf{Entry} & \textbf{Clinical} \\
\midrule
1 & DINOv2-giant & 4 & DINOv2-giant & 7 & ADAS-Cog features \\
2 & SigLIP2-giant & 5 & SigLIP2-giant & 8 & FAQ features \\
3 & FS morphometry & 6 & SUVR summaries & 9 & CDR features \\
\bottomrule
\end{tabular}
\endgroup
\end{table}

\begin{table}[!htb]
\centering
\caption{\textbf{Input mapping across all eight datasets.} Entry numbers refer to Table~\ref{tab:graph_expert_pools}. Availability follows the originating input, with task-specific masking described below.}
\label{tab:pool_input_mapping}
\begingroup
\small
\setlength{\tabcolsep}{3pt}
\renewcommand{\arraystretch}{1.15}
\begin{tabular}{@{}p{.21\linewidth}p{.245\linewidth}p{.245\linewidth}p{.245\linewidth}@{}}
\toprule
\rowcolor{TblHeader}
\textbf{Datasets} & \textbf{Inputs to 1--3} & \textbf{Inputs to 4--6} & \textbf{Inputs to 7--9} \\
\midrule
SIMS, MOSI,\newline MELD, URF & Utterance text & Speech audio & Video frames \\
\addlinespace[3pt]
WS & Subtitles & Audio & Video frames \\
\addlinespace[3pt]
Daily & ASR transcripts & Audio & Video frames \\
\addlinespace[3pt]
CirCor & Demographic text & PCG waveforms & PCG log-spectrograms \\
\addlinespace[3pt]
ADNI & MRI images and\newline morphometry & PET images and\newline SUVR measurements & Clinical assessments \\
\bottomrule
\end{tabular}
\endgroup
\end{table}
\FloatBarrier

\paragraph{Input availability.}
SIMS, MOSI, MELD, URF, and WS independently mask their text/subtitle, audio, and visual inputs. Daily's ASR transcripts depend on audio, so its three text and three acoustic streams disappear together when audio is removed; the other three streams depend on video. ADNI groups its streams into MRI, PET, and clinical inputs.

CirCor varies the four auscultation sites while retaining the permitted demographic template. The audio encoders process PCG waveforms and the visual encoders process deterministic PCG log-spectrograms. Each signal stream preserves four site slots; all six signal encoders lose the same slot when that site is removed. Each signal adapter projects each site separately, applies the site mask after projection and site conditioning, and concatenates the four ordered slots with the four-bit mask. This yields one node per encoder. A signal node stays available while any site survives, and all three demographic nodes remain available; every legal CirCor state therefore retains nine graph nodes while their signal contents change.

\paragraph{ADNI feature definitions.}
FS denotes FreeSurfer morphometry, and SUVR denotes standardized uptake value ratio. ADAS, FAQ, and CDR denote ADAS-Cog item/total scores, Functional Activities Questionnaire measures, and Clinical Dementia Rating features.

\paragraph{Graph node indexing.}
For the three case studies in Figure~\ref{fig:analysis_model_graphs}, entries 1--9 of Table~\ref{tab:graph_expert_pools} give the node order: MELD and WS use panel (a), and ADNI uses panel (b). The graph case studies visualize model relations under each task's graph-weighting and pruning settings. Graph weights are fitted on training-partition predictions.

\paragraph{Feature preparation and adapted methods.}
For the affective tasks, text is encoded in nonoverlapping chunks of up to 256 tokens and pooled with token-count weights; 16-kHz audio is encoded in 20-second chunks and pooled with duration weights. Visual encoders use 16 uniformly sampled frames across the clip. The QA pipeline retains eight ordered temporal bins and uses question-and-option-conditioned scoring of the frozen evidence. Missing inputs zero the corresponding feature slots and are accompanied by an explicit availability mask. Same-pool baselines share these cached inputs and boundaries of independently maskable inputs. Their task outputs are adapted to regression, classification, or masked candidate scoring. Each baseline retains its defining collaboration or missing-modality mechanism within this common frozen-feature interface. LIMSSR uses Qwen2.5-0.5B with rank-4 LoRA and task adapters. Frozen unimodal encoders are excluded from trainable-parameter counts.

\paragraph{Baseline training and checkpoint selection.}
Each baseline retains its training procedure after adaptation to the shared frozen-feature interface. Table~\ref{tab:baseline_training_protocol} summarizes the availability states used by the general collaboration and staged missing-modality baselines. For each trainable baseline, we select one checkpoint by validation Mean-Missing and evaluate it across all test states. Mean ens.\ averages the outputs of fixed unimodal task heads and has no trainable collaboration layer.

\begin{table}[!htb]
\centering
\caption{\textbf{Baseline training views.}}
\label{tab:baseline_training_protocol}
\begingroup
\small
\setlength{\tabcolsep}{5pt}
\renewcommand{\arraystretch}{1.12}
\begin{tabular}{@{}p{.34\linewidth}p{\dimexpr.66\linewidth-2\tabcolsep\relax}@{}}
\toprule
\rowcolor{TblHeader}
\textbf{Methods} & \textbf{Training availability states} \\
\midrule
Logit stack, Concat MLP, Attn.\ pool, Soft MoE, FoE, Set Trans. & Complete inputs. \\
\addlinespace[3pt]
CLARGA, LIMSSR, ShaSpec & One uniformly sampled legal nonempty state per batch. \\
\addlinespace[3pt]
SimMLM & Five epochs of complete-input expert training, followed by sampled pairs in which one state's modalities strictly contain the other's. \\
\addlinespace[3pt]
Flex-MoE & Five complete-input generalization epochs, followed by specialization on uniformly sampled incomplete states. \\
\bottomrule
\end{tabular}
\endgroup
\end{table}

GIML retains feature-degradation training, whereas MARS uses adaptive partial-view sampling.

\paragraph{Omni models.}
Each Omni comparator receives the currently available raw evidence through its own input processor, independently of the nine-model feature cache. Unavailable modalities are physically removed before processing. The backbone remains frozen; a final prompt-token representation is standardized and fitted with one task head using complete-input \emph{training examples only}. Classification uses balanced logistic regression with inverse regularization strength 1; regression uses ridge regression with regularization strength 10. The same head evaluates every test state, with no missing-input head training, state-specific heads, or test-time adaptation. Questions and choices remain in QA prompts, with invalid candidate positions masked. CirCor retains permitted demographic context; ADNI inputs use the available image montages and clinical fields. These are frozen-backbone task-head evaluations, not zero-shot answer generation. Repeated evaluation is deterministic and therefore is reported without a training SD.

\paragraph{Graph case-study details.}

To examine how model interactions are organized across modalities, we visualize the fixed model graphs of MELD, WS, and ADNI (Figure~\ref{fig:analysis_model_graphs}). All three graphs use identical node coordinates. MELD and WS also share model identities at each position, with text interpreted as subtitles in WS; ADNI uses the same three-group layout for MRI, PET, and clinical models. Edge weights encode dataset-level prediction compatibility estimated on the training partition. Positive undirected edges share one width scale; self-loops are excluded and weights precede incoming normalization, without additional thresholding. The matrices average between-group weights including zeros, using shared logarithmic colors with zero shown in white. MELD annotations are scaled by $10^3$.

MELD has nine within-modality links and two weak cross-modality bridges: Qwen3--Whisper and WavLM--VideoMAE. These bridges connect the three modality groups into one component. WS has nineteen links, twelve of which cross modalities, with stronger mean audio--visual coupling ($0.77$) than subtitle--audio ($0.21$) or subtitle--visual ($0.15$) coupling. ADNI retains all thirty-six possible links, with cross-modality means of $0.76$--$0.80$. These patterns show dataset-dependent interaction structures, ranging from sparse modality bridges to dense cross-modality connections. The private path remains available to every active model irrespective of these connections. Appendix~\ref{app:refresh_sampling_modality} analyzes modality contributions and utility interactions of the resulting predictor, complementing the prediction compatibility used here to construct graph weights.
\FloatBarrier

\subsection{Main-run optimization settings}
Table~\ref{tab:main_training_settings} summarizes the main training settings. Within each task, the five runs share all settings except the random seeds for initialization, training, and chain sampling. AdamW updates the adapters and collaboration layers while keeping the unimodal encoders fixed. Training uses task-specific availability-state weights, whereas evaluation averages all non-full states equally. Task-specific configurations specify branch-wise learning-rate and regularization multipliers, state weights, and loss schedules. For each run, one checkpoint is selected by validation Mean-Missing and used across all availability states.

\begin{table}[!htb]
\centering
\caption{\textbf{\luna main-run settings.} Private/common widths are $d_{\mathrm{priv}}/d_{\mathrm{com}}$; $L$ is graph depth and $K$ the chain budget per step. Learning rates and loss coefficients are rounded to three significant digits. Patience is in epochs.}
\label{tab:main_training_settings}
\begingroup
\small
\setlength{\tabcolsep}{4pt}
\renewcommand{\arraystretch}{1.12}
\begin{tabular}{@{}p{\dimexpr.26\linewidth-12\tabcolsep\relax}>{\raggedleft\arraybackslash}p{.11\linewidth}>{\raggedleft\arraybackslash}p{.11\linewidth}>{\raggedleft\arraybackslash}p{.06\linewidth}>{\raggedleft\arraybackslash}p{.14\linewidth}>{\raggedleft\arraybackslash}p{.22\linewidth}>{\raggedleft\arraybackslash}p{.10\linewidth}@{}}
\multicolumn{7}{@{}l}{\textbf{(a) Architecture and training objectives}} \\[3pt]
\toprule
\rowcolor{TblHeader}
\textbf{Dataset} & $d_{\mathrm{priv}}$ & $d_{\mathrm{com}}$ & $L$ & $\lambda_{\mathrm{com}}$ & $\lambda_{\mathrm{PCC}}$ & $K$ \\
\midrule
SIMS & 64 & 192 & 2 & $0.300$ & $1.58\!\times\!10^{-2}$ & 2 \\
MOSI & 144 & 576 & 2 & $0.278$ & $9.60\!\times\!10^{-2}$ & 2 \\
MELD & 128 & 384 & 2 & $0.0860$ & $6.25\!\times\!10^{-4}$ & 1 \\
URF & 80 & 160 & 1 & $0.0125$ & $1.00\!\times\!10^{-2}$ & 1 \\
Daily & 128 & 256 & 1 & $0.0500$ & $1.81\!\times\!10^{-3}$ & 1 \\
WS & 128 & 768 & 1 & $0.0500$ & $1.00\!\times\!10^{-3}$ & 2 \\
CirCor & 320 & 640 & 1 & $0.200$ & $1.00\!\times\!10^{-1}$ & 2 \\
ADNI & 96 & 384 & 1 & $0.0625$ & $9.23\!\times\!10^{-3}$ & 2 \\
\bottomrule
\end{tabular}
\par\vspace{7pt}
\begin{tabular}{@{}p{\dimexpr.23\linewidth-10\tabcolsep\relax}>{\raggedleft\arraybackslash}p{.09\linewidth}>{\raggedleft\arraybackslash}p{.23\linewidth}>{\raggedleft\arraybackslash}p{.20\linewidth}>{\raggedleft\arraybackslash}p{.14\linewidth}>{\raggedleft\arraybackslash}p{.11\linewidth}@{}}
\multicolumn{6}{@{}l}{\textbf{(b) Optimization}} \\[3pt]
\toprule
\rowcolor{TblHeader}
\textbf{Dataset} & \textbf{Batch} & \textbf{Learning rate} & \textbf{Weight decay} & \textbf{Max epochs} & \textbf{Patience} \\
\midrule
SIMS & 128 & $3.26\!\times\!10^{-4}$ & $10^{-4}$ & 105 & 25 \\
MOSI & 192 & $3.22\!\times\!10^{-4}$ & $10^{-3}$ & 60 & 15 \\
MELD & 200 & $8.88\!\times\!10^{-5}$ & $2.5\!\times\!10^{-5}$ & 45 & 10 \\
URF & 384 & $7.50\!\times\!10^{-5}$ & $10^{-3}$ & 180 & 25 \\
Daily & 64 & $4.00\!\times\!10^{-4}$ & $10^{-3}$ & 150 & 30 \\
WS & 64 & $2.03\!\times\!10^{-4}$ & $0.05$ & 240 & 64 \\
CirCor & 6 & $7.50\!\times\!10^{-5}$ & $10^{-3}$ & 24 & 8 \\
ADNI & 32 & $1.05\!\times\!10^{-3}$ & $5\!\times\!10^{-4}$ & 420 & 80 \\
\bottomrule
\end{tabular}
\endgroup
\end{table}
\FloatBarrier
\FloatBarrier
\subsection{Missing-input training controls}
\label{app:missing_input_controls}
This experiment evaluates whether the comparison changes when the baselines receive missing-input views during training. We use SIMS, WS, and ADNI to cover the three domains, with the same splits and test sets as the main evaluation. Complete training examples are augmented with complete and nonempty missing-input views.

\paragraph{Collaboration controls.}
Concat MLP, Attn.\ pool, Soft MoE, FoE, Set Transformer, CLARGA, and I$^2$MoE use the same frozen model pool and modality-level masking. The first six controls use the corresponding \luna configuration's task-view weights and batch construction; FoE, Set Transformer, and CLARGA also use its sampled-chain view schedule. FoE freezes its unimodal task heads before fitting the fuser, while CLARGA retains its contrastive objective. The I$^2$MoE adaptation receives missing-input views through its own training procedure. Each control retains its defining fusion mechanism and uses one validation-MM-selected checkpoint across all test views.

\paragraph{Omni controls.}
Qwen2.5-Omni 3B/7B and Stream-Omni use their frozen backbones. For each of the seven nonempty views, we remove the unavailable raw modalities before feature extraction. One shared standardizer and one shared linear task head are fitted on the training representations from all seven views and used unchanged across test states. Regression uses ridge regression with regularization strength $10$; classification uses balanced logistic regression with inverse regularization strength $1$. View weights follow the task-view objective and sum to one per original training example. In the column order of Tables~\ref{tab:supplement_missing_views_sims}--\ref{tab:supplement_missing_views_adni}, the normalized weights are $(0.042553,\allowbreak 0.191489,\allowbreak 0.191489,\allowbreak 0.191489,\allowbreak 0.127660,\allowbreak 0.127660,\allowbreak 0.127660)$ for SIMS, $(0.058824,\allowbreak 0.078431,\allowbreak 0.078431,\allowbreak 0.078431,\allowbreak 0.235294,\allowbreak 0.235294,\allowbreak 0.235294)$ for WS, and $(0.102564,\allowbreak 0.076923,\allowbreak 0.205128,\allowbreak 0.205128,\allowbreak 0.153846,\allowbreak 0.102564,\allowbreak 0.153846)$ for ADNI. This matches the support and relative weighting of missing-view supervision.

Tables~\ref{tab:supplement_missing_views_sims}--\ref{tab:supplement_missing_views_adni} report every test state for the ten controls. MM averages the six non-full states equally, irrespective of their training weights.
\input{tables/supplement_missing_views_sims}
\input{tables/supplement_missing_views_ws}
\input{tables/supplement_missing_views_adni}
\FloatBarrier

\section{Additional Ablation Results}
\label{sec:refresh_ablation_details}
Table~\ref{tab:refresh_ablation_core} compares four ablation groups. The tables below detail these variants and give every sampled-pair and L9 setting in Section~\ref{sec:refresh_ablation_pairs}.
The final $\Delta_{\mathrm{rel}}$ row in Table~\ref{tab:refresh_ablation_core} averages each variant's percentage change from \luna equally across the eight tasks, reversing the sign for MAE so that negative values always indicate degradation. Bold and underlining identify the best and second-best values in each dataset row.

\paragraph{Prediction-path controls.}
The w/o Common variant predicts using the private head alone and disables the common auxiliary loss and PCC. The w/o Private variant predicts using the unscaled common head alone. The remaining training settings are retained within each task.

\paragraph{Common/private combination.}
We retain both branches and compare the learned residual scale in Eq.~(\ref{eq:prediction}) with fixed coefficients $\alpha\in\{0.25,0.5,1,2\}$ multiplying the common prediction. Equal addition is the case $\alpha=1$. Concat instead feeds the concatenated private features, post-interaction common consensus, and availability indicators to an MLP; its hidden width and dropout follow the private head. All variants retain PCC and common auxiliary supervision. Table~\ref{tab:supplement_residual_combination} reports the complete sweep. Concatenation changes the prediction head as well as its parameter count. The learned residual rule improves over equal addition on six of eight tasks, while the preferred fixed coefficient varies across tasks.

\input{tables/supplement_residual_combination}

\paragraph{Graph interaction controls.}
The graph panel in Table~\ref{tab:refresh_ablation_core} changes only the interaction mode. \luna uses the fixed compatibility graph. Uniform Graph assigns equal weights to the complete active-node graph, including self-loops, followed by incoming normalization; it therefore changes connectivity as well as weights when the compatibility graph is sparse.

\paragraph{Consistency-pair selection and budget.}
These experiments test which state relationships should be regularized and how many are needed per step. With two, three, and four independently maskable inputs, adjacent PCC has $2$, $9$, and $28$ eligible pairs; Full-to-Minimal has $2$, $3$, and $4$; All-Pairs has $3$, $21$, and $105$. These are candidate-set sizes, not per-step comparison counts. All-Pairs includes nonnested pairs and uses the endpoint with more modalities as the stop-gradient target, breaking ties by a fixed view order. For two modalities, Full-to-Minimal and adjacent PCC select the same pair set. The separate PCC pair-budget experiment selects up to $Q$ unique single-modality-removal pairs per step while retaining all task-view forwards. Daily's $Q=4,9$ cells reuse $Q=2$, giving thirty distinct budget runs. CirCor has twenty-eight cover edges, so $Q=9$ covers $32.1\%$ per step.

\paragraph{PCC direction.}
The w/o Stop-gradient variant removes detachment from the richer state's normalized common representation. It does not add a second loss or an explicit midpoint target: the forward cosine-distance value is unchanged for fixed representations, but both endpoints receive its gradient. The task losses continue to supervise both states. This variant retains the graph, adjacent-pair construction, common-state target, and PCC coefficient and schedule. Table~\ref{tab:refresh_ablation_core}c reports its MM on all eight tasks.

\paragraph{Consistency target.}
Table~\ref{tab:refresh_ablation_core}d retains both prediction paths and varies the representation targeted by PCC. PCC-Common corresponds to \luna, applying consistency to the common representation. PCC-Private instead targets the hidden representation immediately before the private head's final linear layer. PCC-Both averages the cosine losses on the separately normalized common and private representations, retaining the same overall PCC coefficient. The comparison examines whether cross-state consistency is most effective when focused on the common representation.

\begin{figure}[!htb]
\centering
\includegraphics[width=\textwidth]{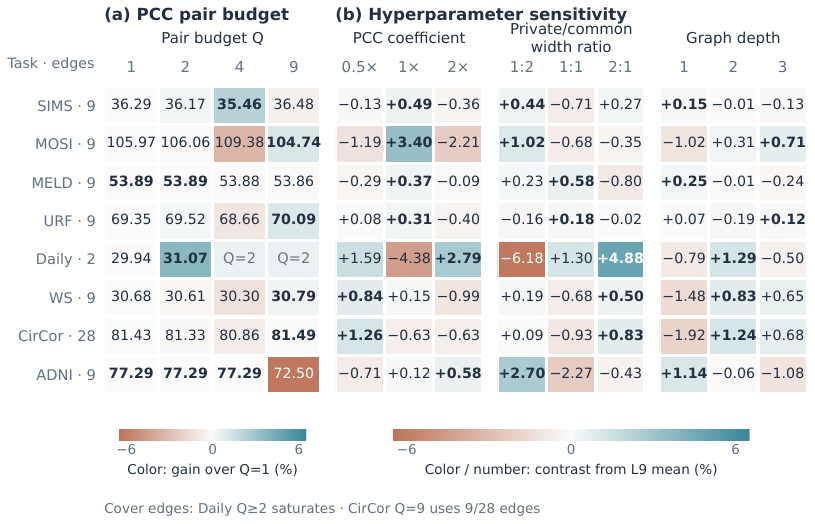}
\caption{\textbf{PCC pair budget and hyperparameter sensitivity.} (a) Numbers: test MM (MAE $\times100$ for SIMS/MOSI; percentages otherwise). Colors: oriented relative change from $Q=1$. Task labels give the numbers of single-modality-removal pairs; gray Daily cells reuse $Q=2$. (b) Numbers/colors: oriented relative contrasts from each task's nine-configuration L9 mean, averaging three distinct configurations per factor level; bold marks each task/factor's best level. Positive means better in both panels.}
\label{fig:refresh_ablation_pairs}
\label{fig:refresh_ablation_l9}
\end{figure}

\input{tables/refresh_ablation_pairs}
\FloatBarrier

\paragraph{Hyperparameter sensitivity: L9 design.}
The L9 design examines sensitivity to consistency strength, private/common width ratio, and interaction depth. The PCC coefficient is $0.5\times$, $1\times$, or $2\times$ its task-specific base value, and graph depth is one, two, or three layers. For fixed private width $d_{\mathrm{priv}}$, the common width $d_{\mathrm{com}}$ is $2d_{\mathrm{priv}}$, $d_{\mathrm{priv}}$, or $\lfloor d_{\mathrm{priv}}/2\rfloor$ at ratios $1{:}2$, $1{:}1$, and $2{:}1$; width and depth changes also change model capacity. For each task, we average native MM across the three configurations at each factor level. Figure~\ref{fig:refresh_ablation_l9}b subtracts the nine-configuration mean from each factor-level mean, divides by the absolute nine-configuration mean, and multiplies by 100. We reverse the sign for MAE so that positive values indicate improvement. Each cell averages three different configurations, not seed replicates. L9 balances factor levels but does not separately estimate their interactions.

\input{tables/refresh_ablation_l9}
\FloatBarrier
\section{Additional Analyses}
\label{app:additional_analyses}
\subsection{Model-Pool and Parameter Analyses}
\label{app:refresh_resources}
\paragraph{Individual pool-entry performance.}
Table~\ref{tab:individual_expert_performance} reports each entry's test performance using the inputs specified in Table~\ref{tab:pool_input_mapping}. Outside CirCor, an entry's prediction is unchanged whenever its required input remains available. Table~\ref{tab:individual_expert_circor_views} gives the CirCor results as auscultation sites are removed, with demographic inputs retained.
\input{tables/individual_expert_performance}
\input{tables/individual_expert_circor_views}
\FloatBarrier

\paragraph{Effect of model-pool size.}
We examine how \luna's predictive performance changes with the number of available models. Each dataset uses nested pools of three, six, and nine frozen models. Table~\ref{tab:model_pool_members} lists the retained models. Figure~\ref{fig:model_pool_scaling_main}(b) summarizes relative MM changes across all eight tasks; Figure~\ref{fig:refresh_resource_pool} gives the task scores at each pool size.

Table~\ref{tab:refresh_resource_pool} reports MM and percentage changes from the three-model pool: the score difference divided by the reference score, with its sign reversed for MAE so that positive means better. Nine models give the best observed MM on all eight tasks. Performance improves at both expansions on six tasks, whereas CirCor and ADNI decline at six models before improving at nine. Daily increases from 35.59\% to 37.85\% and 38.42\%; ADNI changes from 82.71\% to 76.67\% and 84.17\%. Thus, the nine-model pool provides the strongest performance among the evaluated sizes, while the benefit of an intermediate expansion remains task-dependent.
\input{tables/model_pool_members}
\begin{figure}[!htb]
\centering
\includegraphics[width=\textwidth]{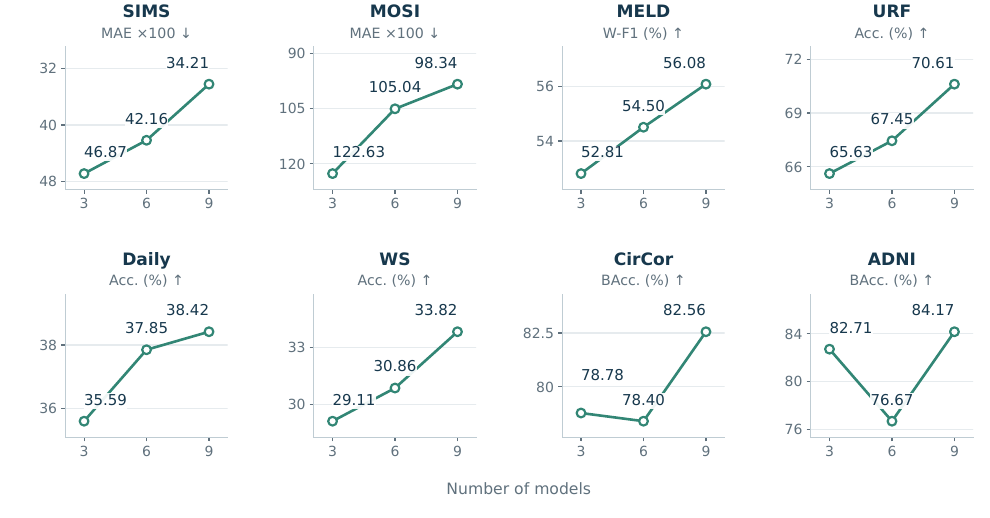}
\caption{\textbf{\luna at different pool sizes.} MM versus the number of models; point labels give scores. Higher positions indicate better performance, with inverted MAE axes. Numerical results: Table~\ref{tab:refresh_resource_pool}.}
\label{fig:refresh_resource_pool}
\end{figure}
\input{tables/refresh_resource_pool}
\FloatBarrier
\paragraph{Sensitivity to model-pool composition.}
The scaling experiment changes pool size; here we keep nine models and vary their composition (Figure~\ref{fig:refresh_resource_replacement}). Starting from the original pool, we replace one designated text model, then an audio model, then a visual model; ADNI uses clinical, MRI, and PET models in that order. Each replacement pool is retrained, its graph is refitted on training data, and its hyperparameters are retuned. The experiment characterizes performance under changes in model composition after re-optimization.

For original metric $m_0$ and metric $m_r$ after $r$ cumulative model replacements, relative change is $100(m_r-m_0)/m_0$ for higher-better tasks and its negative for MAE. Replacing one, two, and three models gives mean direction-corrected relative changes of $-1.19\%$, $-4.65\%$, and $-4.27\%$ from the original pool, respectively. Sensitivity varies across tasks: larger replacements produce the largest changes on SIMS and MOSI, while the effect does not grow uniformly with the number of replaced models. Pool cardinality alone therefore does not determine performance; the representations supplied by the selected models also matter. Together with the scaling experiment, these results identify model composition as a task-dependent design choice at a fixed pool size.

\begin{figure}[!htb]
\centering
\includegraphics[width=\textwidth]{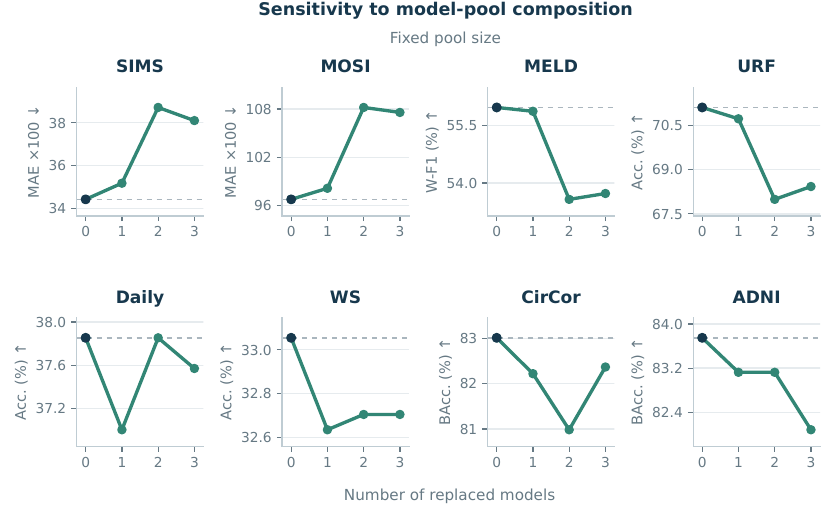}
\caption{\textbf{Sensitivity to model-pool composition.} Zero replacements and dashed lines mark the original nine-model pool. SIMS/MOSI report MAE $\times100$; other scores are percentages.}
\label{fig:refresh_resource_replacement}
\end{figure}
\input{tables/refresh_resource_replacement}
\FloatBarrier

\paragraph{Parameter counts.}
The count includes parameters optimized during collaboration training and excludes frozen backbones, adjacency buffers, and separately fitted frozen task heads. Thus Mean ens.\ has zero optimized collaboration parameters, and LIMSSR's frozen language backbone is excluded.
Figure~\ref{fig:resource_parameters_all} extends the UR-FUNNY view in Figure~\ref{fig:refresh_resource_parameters}(c) to all eight tasks.

\begin{figure}[!htb]
\centering
\includegraphics[width=\textwidth]{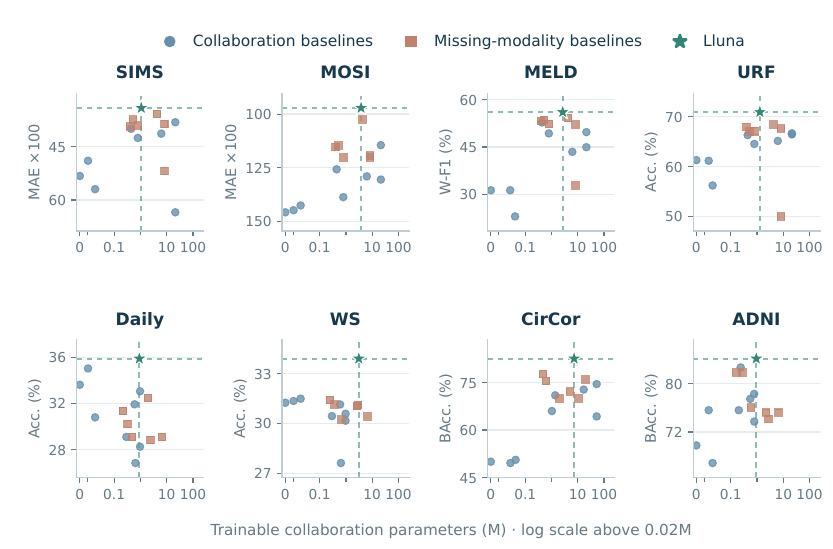}
\caption{\textbf{Trainable parameters versus MM.} Stars: \luna; circles/squares: collaboration/missing-modality baselines; Omni omitted. SIMS/MOSI use MAE $\times100$; other scores are percentages. Inverted MAE axes make higher positions better throughout. Dashed lines mark \luna. The parameter axis is logarithmic above 0.02M and linear near zero. Counts: Table~\ref{tab:refresh_resource_parameters}.}
\label{fig:resource_parameters_all}
\end{figure}
\input{tables/refresh_resource_parameters}
\FloatBarrier
\FloatBarrier
\subsection{Modality-Utility Results}
\label{app:refresh_sampling_modality}
\label{sec:modality_analysis}
We examine modality contributions and pair interactions using exact Shapley values over all modality subsets of a fixed \luna model. Utility is negative absolute error on SIMS/MOSI and the log probability of the true class or valid answer on the other tasks. The empty-input reference predicts zero for regression and uses a uniform class/candidate distribution otherwise.

For the modality set $\mathcal{M}$ with $M$ modalities, let $v(S)$ denote one example's scalar predictive utility under available subset $S$, including the empty-input reference. The Shapley contribution of modality $M_i$ is
\begin{equation}
\operatorname{Shapley}(M_i)=\sum_{S\subseteq\mathcal{M}\setminus\{M_i\}}
\frac{|S|!(M-|S|-1)!}{M!}
\big[v(S\cup\{M_i\})-v(S)\big].
\end{equation}
Pair interactions measure departures from additive predictive utility. For distinct modalities $M_i,M_j$, the interaction averages the utility second difference $v(S\cup\{M_i,M_j\})-v(S\cup\{M_i\})-v(S\cup\{M_j\})+v(S)$ over $S\subseteq\mathcal{M}\setminus\{M_i,M_j\}$ with weight $|S|!(M-|S|-2)!/(M-1)!$, without an additional factor of $1/2$. For three modalities, this equally weights the third modality being absent and present. The reported median and positive/largest-contribution frequencies describe variation across test examples, not uncertainty across training seeds. The attribution uses a fixed model across all modality subsets; it measures that model's modality use rather than the effect of retraining without a modality.

\begin{figure}[!htbp]
\centering
\captionsetup{skip=4pt}
\includegraphics[width=.94\textwidth]{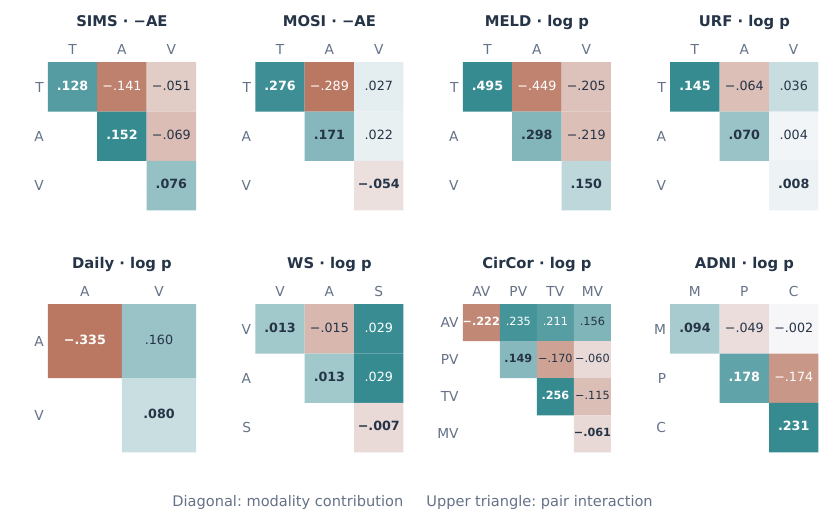}
\caption{\textbf{All-task modality contributions and interactions.} Diagonal cells show mean Shapley values; upper-triangle cells show mean pair interactions. Teal/brown denote positive/negative values, with color ranges scaled within each task. In panel labels, AE is absolute error and $p$ is the probability of the true class or answer. Units are native negative absolute error on SIMS/MOSI and nats elsewhere. Modality abbreviations follow Appendix~\ref{app:multiseed_protocol}.}
\label{fig:analysis_modality}
\end{figure}

The leading modality differs across tasks (Figure~\ref{fig:analysis_modality}): audio on SIMS, text on MOSI/MELD/URF, vision on Daily, the TV auscultation site on CirCor, and clinical inputs on ADNI. On WS, audio and vision contribute similarly; subtitles have a mean contribution of $-0.007$ nats but positive interactions with both audio and vision (about $+0.029$ each). ADNI clinical and PET inputs both contribute positively, while their $-0.174$-nat interaction indicates a joint gain below the additive reference. Individual contributions and pair interactions thus reveal different aspects of evidence use. Section~\ref{sec:graph_cases} examines the prediction compatibility used to construct the model graph; Tables~\ref{tab:refresh_modality_sources} and~\ref{tab:refresh_modality_pairs} give complete modality-utility statistics.

\input{tables/refresh_modality_complete}
\FloatBarrier
\subsection{Chain-Sampling Results}
\label{sec:sampling_details}
To test how chain coverage affects MM, we vary the number of sampled maximal chains while keeping the remaining settings fixed within each task. Table~\ref{tab:sampling_budget} gives every budget in Figure~\ref{fig:refresh_sampling_all}. Repeated pairs within a training step are counted once, so their mean count can be noninteger. The counts measure PCC comparisons rather than wall-clock time.

\begin{figure}[!htb]
\centering
\includegraphics[width=\textwidth]{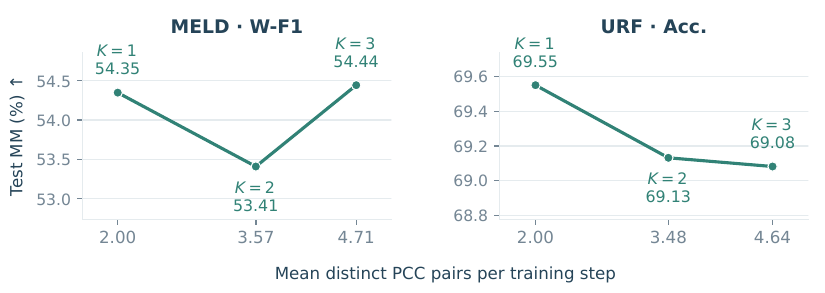}
\caption{\textbf{Chain-sampling budget and predictive performance.} MM versus the mean number of distinct PCC pairs per step. Labels give the number of sampled chains $K$ and the MM score. Only $K$ changes within each task. Numeric results appear in Appendix~\ref{sec:sampling_details}.}
\label{fig:refresh_sampling_all}
\end{figure}

\input{tables/sampling_budget}
As $K$ increases from one to three, MELD MM changes from $54.35\%$ to $53.41\%$ and $54.44\%$, while URF changes from $69.55\%$ to $69.13\%$ and $69.08\%$. More sampled chains therefore do not uniformly improve MM.
\FloatBarrier

\paragraph{Task-specific sparse configurations.}
To assess whether sparse training retains full-lattice predictive performance across tasks, Table~\ref{tab:refresh_sparse_operating} compares task-specific sparse operating points with their full-lattice references. SIMS, MELD, URF, Daily, and CirCor use one independently sampled maximal chain per step; MOSI uses two. WS uses four cyclically scheduled chains, and ADNI uses three. The equal-task mean relative MM change is $-0.28\%$, ranging from $+1.40\%$ on SIMS to $-2.85\%$ on CirCor. At these task-specific configurations, sparse training achieves comparable average missing-modality performance with fewer PCC comparisons per training step. The complementary fixed-setting comparison in Figure~\ref{fig:refresh_sampling_all} varies only $K$ and selects checkpoints by validation Mean-Missing.

\input{tables/refresh_sampling_operating}
\FloatBarrier

\paragraph{PCC comparison counts.}
The last column of Table~\ref{tab:refresh_sparse_operating} gives the mean number of distinct PCC comparisons per training step as full-lattice $\to$ sparse. Pairs shared by sampled chains are counted once; averaging over training steps can yield noninteger counts. Full-lattice training evaluates $M(2^{M-1}-1)$ cover pairs per step for $M$ maskable modalities. For example, sparse training reduces comparisons from nine to two on MELD and from twenty-eight to three on CirCor. These counts describe training-time PCC computation; test evaluation still includes every nonempty availability state.
\section{Motivation Diagnostics}
\label{sec:diagnosis}
\label{app:motivation_details}
\label{sec:diagnostic_audit}
This appendix gives the protocols and complete results underlying Figure~\ref{fig:motivation} and the fixed-anchor analysis in Section~\ref{sec:refresh_diag_cooperation}: collaboration gains under modality removal, recoverability of modality-specific features, and changes in collaborative representations.

\subsection{Fixed-anchor collaboration gains}
\label{app:fixed_anchor_lluna}
\label{app:fixed_anchor_transitions}
Let $\mathcal M$ denote the independently maskable inputs and $S\subseteq\mathcal M$ the available subset. For each modality $M_i\in S$, we select a unimodal anchor $f_i^{\mathrm{uni}}$ by validation performance and keep its predictions fixed across all views retaining that modality. Each collaborative predictor $f_S$ also uses one checkpoint across views. Collaboration gain is
\begin{equation}
\Delta_{\mathrm{collab}}(S;M_i)
= \operatorname{Acc}(f_S)-\operatorname{Acc}(f_i^{\mathrm{uni}}).
\label{eq:collab_gain}
\end{equation}
A negative gain means that collaboration is less accurate than the fixed surviving anchor. \emph{Harmed} is the share of all test examples on which the anchor is correct and collaboration is wrong; \emph{Rescued} is the reverse. Their difference, Rescued minus Harmed, equals $\Delta_{\mathrm{collab}}$ in percentage points.

We average legal anchor/view comparisons within each removal level and dataset. The conventional result additionally averages Logit stack, Concat MLP, Attn. pool, and Soft MoE equally. \luna uses the same evaluation inputs and fixed anchors. SIMS and MOSI use auxiliary binary sentiment accuracy, Acc-2(has0); the other tasks use ordinary classification accuracy for this diagnostic. Required task context is retained, and CirCor's maskable inputs are its four auscultation sites.
For CirCor, we fit a \texttt{StandardScaler} and balanced logistic-regression head ($C=1$) on training examples only for each of the six signal experts at each site; validation balanced accuracy selects one anchor per site, whose head remains fixed across states retaining that site.

\paragraph{Seven-baseline summary.}
Figure~\ref{fig:refresh_diag_cooperation} combines the four conventional methods above with I$^2$MoE, CLARGA, and Set Transformer trained with missing-input views. All use the same fixed anchors, test examples, and legal availability states for this diagnostic. We first average legal anchor/view comparisons within each method and removal level, then give each of the seven methods equal weight. Thus, the two subgroups contribute four-sevenths and three-sevenths, respectively. Full uses all modalities; Minimal retains one independently maskable input. The combined mean gain declines from Full to Minimal on all eight tasks and is negative at Minimal on six; \luna's gain remains positive and exceeds this mean at both endpoints on every task. Table~\ref{tab:refresh_diag_cooperation} gives the corresponding unnormalized gains and test counts.

Figure~\ref{fig:refresh_diag_cooperation} uses task-normalized radii, while outer labels give raw gains in percentage points, in Baselines / \luna order. For each dataset, the shared scale is the largest absolute gain among the baseline mean and \luna at Full and Minimal. Dividing all four gains by this scale preserves their signs and relative magnitudes within that dataset; a common radial offset accommodates negative gains. Both panels use the same normalization, and the zero-gain circle separates beneficial from harmful collaboration.
\begin{table}[!htb]
\centering
\caption{\textbf{Fixed-anchor gains (pp).} Min.\ retains one maskable input; Baselines averages seven methods equally. $N_{\mathrm{test}}$: test examples.}
\label{tab:refresh_diag_cooperation}
\input{tables/fixed_anchor_combined}
\end{table}

\paragraph{Four-method subgroup and \luna transitions.}
Table~\ref{tab:refresh_diag_transitions} reports the complete harmed/rescued rates and net gains for the original four-method conventional subgroup and \luna. At Minimal, \luna retains positive mean gains on all eight datasets, whereas this subgroup's mean is negative on all eight. Minimal retains one maskable input, which may still supply features from several models. CirCor uses site-specific anchors and retains demographic context across availability states. These are dataset-level averages: individual anchor/view comparisons can remain negative.

\input{tables/refresh_diag_transitions}
\FloatBarrier

\paragraph{Additional controls trained with missing-input views.}
We repeat the fixed-anchor evaluation for I$^2$MoE, CLARGA, and Set Transformer after missing-input training, using the same anchors, test examples, legal views, and gain definition. We first average over legal anchor/view comparisons at each removal level within each method, then average the three methods equally. This subgroup contributes to the seven-baseline mean in Figure~\ref{fig:refresh_diag_cooperation}; its results are also summarized separately below. Table~\ref{tab:supplement_anchor_summary} gives Full and Minimal gains; Tables~\ref{tab:supplement_anchor_methods_1}--\ref{tab:supplement_anchor_methods_4} give harmed/rescued rates and net gains for every method and removal level. The eight-task evaluation comprises 324 method/anchor/view comparisons, summarized into 72 method/level rows. The subgroup's mean gain declines from Full to Minimal on all tasks and is negative at Minimal on URF, Daily, WS, and CirCor; this does not imply that every intermediate transition or individual method behaves monotonically.

\input{tables/supplement_anchor_summary}
\input{tables/supplement_anchor_methods}
\FloatBarrier

\paragraph{Sensitivity to baseline composition.}
Excluding Set Transformer on all eight tasks, the equally weighted mean of the remaining six baselines still declines from Full to Minimal on every task and is negative at Minimal on seven; \luna's gains remain positive and exceed this mean at both endpoints on all eight. On CirCor, I$^2$MoE and CLARGA alone average $+2.56$ pp at Full and $+1.28$ pp at Minimal, compared with $+5.97$ and $+1.70$ pp for \luna. Omitting each of the seven baselines in turn also preserves the decline in mean gain and \luna's advantage over the remaining mean at both endpoints on every task.

\paragraph{Exact-removal domain view.}
Figure~\ref{fig:motivation}a compares Full input and removal of exactly one or two modalities on the same seven eligible datasets. Affective computing contains SIMS, MOSI, MELD, and URF; scene understanding contains WS; medical prediction contains CirCor and ADNI. Daily is excluded at all three states because removing both of its modalities would leave no input for a surviving anchor. CirCor's Remove 2 endpoint averages two-site states; the other eligible tasks have three modalities, so Remove 2 coincides with Minimal. Datasets receive equal weight within each domain.

Figure annotations report the absolute decline in domain-mean collaboration gain from Full to Remove~2, in percentage points. Table~\ref{tab:motivation_remove2} reports every plotted endpoint. Conventional gains are negative after two removals in all three domains, whereas \luna retains gains of $+6.96$, $+2.86$, and $+4.17$ pp, respectively.

Absolute declines decrease from $17.51$ to $10.04$ pp in affective computing and from $3.95$ to $0.84$ pp in scene understanding. They are nearly unchanged in medical prediction ($6.51$ versus $6.47$ pp). Averaging the three absolute declines equally gives $9.32$ pp for conventional collaboration and $5.79$ pp for \luna, a $37.9\%$ reduction. This is a domain-averaged comparison, not a percentage decrease in accuracy.

\input{tables/motivation_remove2}
\FloatBarrier

Figure~\ref{fig:collaboration_detail} gives individual conventional-method curves and prediction transitions for MELD, WS, and ADNI.
\begin{figure}[!htb]
  \centering
  \includegraphics[width=\textwidth]{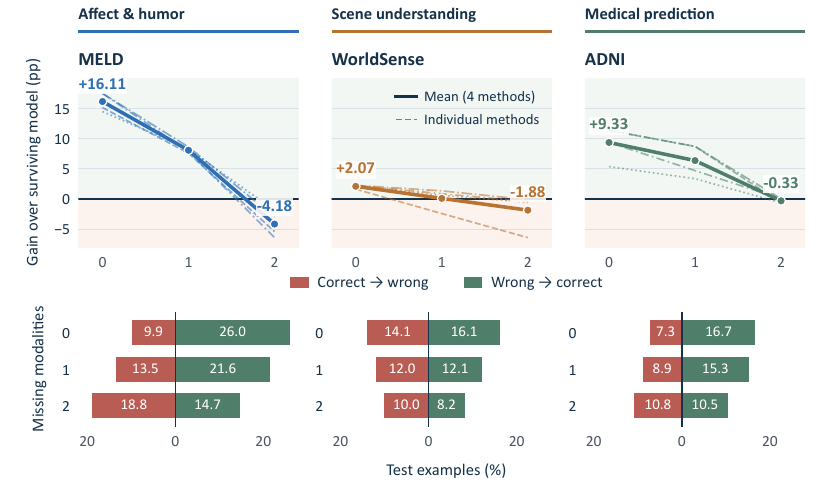}
  \caption{\textbf{Method-level gains and prediction transitions.} Solid curves show four-method mean gains; thin curves show individual methods. Bars report Harmed/Rescued shares of all test examples; Rescued minus Harmed equals ordinary-accuracy gain. Counts 0--2 denote removed modalities. These are individual datasets; anchors and collaboration checkpoints remain fixed across states.}
  \label{fig:collaboration_detail}
\end{figure}
\FloatBarrier

\subsection{Private-signal recoverability}
\label{app:probe_protocol}
We assess how accessible modality-specific features remain after collaboration. For each modality, cross-fitted ridge regression predicts its features from the other modalities; the residual serves as an operational private-signal target. A linear probe then decodes this same target from the modality-alone and full-collaboration representations. The score measures linear recoverability of that residual, rather than task accuracy.

\paragraph{Paired probe in Figure~\ref{fig:motivation}b.}
The paired Attn. pool diagnostic uses 64-dimensional representations and three-fold cross-validation. Scores use the coefficient of determination $R^2$, clipped below at zero and expressed as percentages: zero indicates no improvement over predicting the target mean, and 100 indicates perfect linear recovery. We average over maskable modalities within each dataset, then weight the eight datasets equally. The mean score is $58.52\%$ for a modality alone and $33.47\%$ after full collaboration; recovery decreases on every dataset. Table~\ref{tab:motivation_representation} gives the endpoints and Full/modality-alone retention ratios. These observations motivate direct access to modality features; the prediction-path and consistency-target ablations in Table~\ref{tab:refresh_ablation_core} evaluate the corresponding design choices through task performance.

\paragraph{Across collaboration methods.}
\label{app:probe_multimethod}
Table~\ref{tab:private_probe_multimethod} extends the diagnostic to Logit stack, Concat MLP, Attn. pool, and Soft MoE under Full-only and All-views training. For each modality, Minimal retains that input and any always-available context; the same residual target is decoded from Minimal and Full representations. Scores are variance-weighted, out-of-fold $R^2$, clipped below at zero and expressed as percentages. The residual is defined for independently maskable inputs, not individual pretrained models. For candidate-answer tasks, the correct-option representation is selected as the post-hoc probe readout. This supplementary analysis uses separate folds for Minimal and Full decoding; its endpoints are therefore reported separately from the paired probe in Figure~\ref{fig:motivation}b.

\input{tables/private_probe_multimethod}
All four methods have lower Full recovery under both training policies. The decrease holds across all 192 modality-level comparisons: 24 maskable modalities, four methods, and two policies. Logit stack starts from lower recovery because it operates on modality logits; its smaller absolute decrease does not indicate better preservation of modality features.
\FloatBarrier

\subsection{Representation changes under modality removal}
\label{app:state_geometry_motivation}
\label{sec:refresh_diagnostics}
For the same example under states $S_1$ and $S_2$, let $\mathbf z_1^{\mathrm{att}}$ and $\mathbf z_2^{\mathrm{att}}$ be the corresponding Attn. pool shared representations. We measure their cosine distance:
\[
1-\frac{\langle \mathbf z_1^{\mathrm{att}},\mathbf z_2^{\mathrm{att}}\rangle}
{\|\mathbf z_1^{\mathrm{att}}\|_2\|\mathbf z_2^{\mathrm{att}}\|_2}.
\]
Figure~\ref{fig:motivation}c compares adjacent nested states differing by one modality with Full-to-Minimal states. It equally weights the six three-modality tasks (SIMS, MOSI, MELD, URF, WS, and ADNI), giving mean distances of $0.279$ and $0.456$, respectively. The adjacent category averages all single-removal edges, while Full-to-Minimal removes two modalities from Full. Daily has only one removal to Minimal and CirCor has three, so neither enters this exact-count comparison.

Table~\ref{tab:motivation_representation} reports all eight datasets and additionally includes equal-count states with different modality identities, such as $TA/TV$. Adjacent distances are smaller than Full-to-Minimal distances on every task with a multi-removal comparison; the categories coincide on Daily. These observations characterize representation changes as evidence is removed, with larger changes accompanying multi-step removals. \pcc uses the connections between availability states to organize local guidance (Section~\ref{sec:method}); the task-level effect of this choice is evaluated by the pair-selection ablation in Table~\ref{tab:refresh_ablation_core}c.
\begin{table}[!htb]
\centering
\setlength{\tabcolsep}{4pt}
\caption{\textbf{Complete Attn. pool representation diagnostics.} Probe scores and retention are percentages. Drift is mean cosine distance for one-modality removal, Full to Minimal, and equal-count states with different modality identities. Summary rows weight datasets equally, with mean retention computed from per-dataset Full/modality-alone ratios.}
\label{tab:motivation_representation}
\small
\begin{tabular}{lrrrrrr}
\toprule
 & \multicolumn{3}{c}{Modality-specific probe (\%)} & \multicolumn{3}{c}{Cosine distance} \\
\cmidrule(lr){2-4}\cmidrule(lr){5-7}
Dataset & \shortstack{Modality\\alone} & Full & Retention & \shortstack{One\\removed} & \shortstack{Full to\\Minimal} & \shortstack{Same\\count} \\
\midrule
SIMS & 69.40 & 40.41 & 58.22 & 0.271 & 0.442 & 0.758 \\
MOSI & 68.33 & 39.12 & 57.25 & 0.273 & 0.447 & 0.764 \\
MELD & 68.62 & 37.74 & 55.01 & 0.276 & 0.451 & 0.763 \\
URF & 66.81 & 34.21 & 51.21 & 0.302 & 0.489 & 0.778 \\
Daily & 57.25 & 45.23 & 79.01 & 0.329 & 0.329 & 1.039 \\
WS & 62.10 & 41.36 & 66.61 & 0.297 & 0.483 & 0.800 \\
CirCor & 30.30 & 12.95 & 42.73 & 0.155 & 0.381 & 0.386 \\
ADNI & 45.34 & 16.76 & 36.97 & 0.257 & 0.421 & 0.729 \\
\midrule
Dataset mean & 58.52 & 33.47 & 55.88 & 0.270 & 0.430 & 0.752 \\
Three-modality mean & -- & -- & -- & 0.279 & 0.456 & 0.765 \\
\bottomrule
\end{tabular}
\end{table}

\FloatBarrier

\subsection{Prediction changes across availability states}
\label{app:prediction_shift}
As a complementary output diagnostic, we evaluate fixed collaboration models on the diagnostic cohorts reported in Table~\ref{tab:refresh_diag_jsd}. For the same example under states $S_1$ and $S_2$, we compare the model's class-probability distributions $\mathbf q_1$ and $\mathbf q_2$ using normalized Jensen--Shannon divergence, $\operatorname{JSD}(\mathbf q_1,\mathbf q_2)/\ln 2\in[0,1]$. Zero means identical probability distributions. The table gives diagnostic-cohort sizes; the main evaluation split sizes appear in Table~\ref{tab:dataset_protocol}. This diagnostic applies to the six classification tasks; SIMS and MOSI have scalar regression outputs. Figure~\ref{fig:refresh_diag_attention} presents these output differences alongside the separate Attn. pool representation analysis.

Adjacent pairs differ by one removed modality, and Full-to-Minimal pairs connect Full to each minimal state. Same-level JSD compares states at the largest incomplete level; the representation summary includes equal-cardinality pairs at all nonempty levels. Full-to-Minimal JSD exceeds Adjacent JSD on all five classification tasks with a multi-removal contrast; Daily's pair sets coincide. Table~\ref{tab:refresh_diag_jsd} provides the exact means, and Figure~\ref{fig:refresh_diag_attention}b shows variation across individual state pairs.
\begin{figure}[!htb]
\centering
\includegraphics[width=\textwidth]{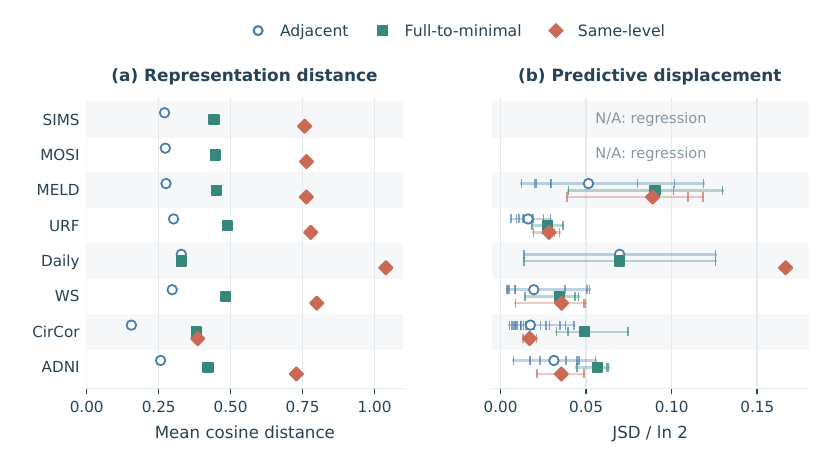}
\caption{\textbf{Complementary representation and prediction diagnostics.} (a) Attn. pool cosine distance on eight tasks. (b) Normalized predictive JSD on the diagnostic cohorts of six classification tasks; regression JSD is N/A. Large markers show category means; ticks and spans show individual state-pair means and their range. Pair definitions appear above; Table~\ref{tab:refresh_diag_jsd} reports exact JSD values and cohort sizes.}
\label{fig:refresh_diag_attention}
\end{figure}

\input{tables/refresh_diag_jsd}
\FloatBarrier
\section{Implementation and Reproducibility}
\label{app:implementation}
Appendix~\ref{app:experimental_setup} specifies the task splits, feature inputs, baseline interfaces, and main-run optimization settings. The released code will include preprocessing scripts, fixed split files, unimodal encoders, feature caches where redistribution is permitted, all modality-availability state definitions, hyperparameter search spaces, random seeds, and commands that reproduce every table. Modality features are projected and standardized before collaboration. Task-level optimization, regularization, early stopping, and availability-state sampling follow the corresponding per-dataset configuration.

\paragraph{Compatibility-graph construction.}
\label{app:graph_configuration}
Task heads are fitted to frozen model features on the training partition, and their predictions on the training examples used for graph construction determine the compatibility weights described in Section~\ref{sec:compatibility}. For classification, we use $\operatorname{sim}(\mathbf q_i,\mathbf q_j)=1-\tfrac{1}{2}\|\mathbf q_i-\mathbf q_j\|_1$, where $\mathbf q_i$ and $\mathbf q_j$ are class-probability vectors from models $i$ and $j$ on the same training example. For regression, we use $\operatorname{sim}(r_i,r_j)=\exp(-|r_i-r_j|)$, where $r_i,r_j$ are the corresponding standardized scalar predictions. Using the absolute rather than squared gap yields a more gradual decay for large standardized prediction differences. The configured weight floor and symmetrization are applied. When graph-temperature scaling is enabled, it transforms an agreement weight $A_{ij}$ to $\exp((\min(A_{ij},1)-1)/\tau)$, where $\tau>0$ is the graph temperature, before union top-$k$ pruning. Here $k$ counts the highest-weight neighbors selected per model; the union retains an edge if either endpoint selects it, and positive self-loops are then retained. The graph temperature acts on agreement weights, not on per-example predictive probabilities. These weights remain fixed throughout collaboration training and inference. For each availability state, the graph is restricted to available models and incoming weights are normalized with a numerical stabilizer.

\subsection{PCC weighting and discrepancy configuration}
\label{app:pcc_configuration}
The base discrepancy underlying Equation~\eqref{eq:pcc} is the cosine distance between normalized common consensuses of the same example at two availability states:
\begin{equation}
d(S_k,S_{k+1})=1-\bar{\mathbf z}_k^\top\bar{\mathbf z}_{k+1}.
\label{eq:edge_drift}
\end{equation}
Configurations may shape a base distance $d$ as $\max(0,d-\delta)^{\rho}$ with a dead zone $\delta\geq0$ and power $\rho>0$; the base objective has $\delta=0$ and $\rho=1$. Edge weights normalize contributions within each child missingness level and may redistribute that level's weight across removed modalities. The chain budget $K$, level weights, $\delta$, $\rho$, and loss coefficients belong to the per-task configuration; the code specifies the chain-sampling implementation. The stability theorem in Appendix~\ref{app:derivations} uses these base distances along a path.

\paragraph{Prediction interfaces.}
The examples used for compatibility estimation come from the training partition; validation and test examples are excluded. Task heads fitted on training-partition frozen features supply the predictions used to construct the graph. These predictions are not additional per-example inputs to the collaboration model. Model features are masked before and after adaptation; zero slots represent unavailable models and are accompanied by an explicit availability mask. Both prediction heads use that mask to adapt their outputs to the available modalities, while the private path retains direct access to adapted modality features. \pcc acts directly on the normalized common consensus; the shared adapters also receive its gradients through the common branch. For each edge, the richer endpoint is detached and the poorer endpoint receives the direct alignment gradient. Parameters are shared across states, so these updates also affect subsequent predictions at richer states; an intermediate state can itself be the poorer endpoint of another sampled edge. Numerical normalization uses an epsilon near zero.

The attention pooling in group readout uses a learned scoring vector $\mathbf w$ and a normalized attention weight $\alpha_i$ for each available model $i$:
\begin{equation}
\mathbf z=\sum_i\alpha_i\mathbf c_i^{(L)},\qquad
\alpha_i=\frac{\exp(\mathbf w^\top\mathbf c_i^{(L)})}
{\sum_j\exp(\mathbf w^\top\mathbf c_j^{(L)})}.
\end{equation}
Both sums run only over available models; the state index is omitted, as in the common-branch equations in the main text.
\section{Additional Derivations}
\label{app:derivations}

\subsection{Paths between arbitrary modality subsets}
For any two distinct nonempty endpoint states $S_0,S_D\subseteq\mathcal M$, add the modalities in $S_D\setminus S_0$ one at a time to reach $S_0\cup S_D$, then remove the modalities in $S_0\setminus S_D$ one at a time to reach $S_D$. Every intermediate state is nonempty: the addition segment contains $S_0$, and the removal segment contains $S_D$. Consecutive states differ by exactly one modality, so they form a path on the undirected version of the cover graph. Its length is
\begin{equation}
D=|S_D\setminus S_0|+|S_0\setminus S_D|.
\end{equation}
Each modality in the symmetric difference must change availability at least once along any path from $S_0$ to $S_D$; this construction therefore has the minimum number of edges. For example, $\mathrm{TA}\rightarrow\mathrm{TAV}\rightarrow\mathrm{TV}$ connects equal-size states with different modality identities. For identical endpoints, the representations coincide and the stability statements hold trivially with an empty path.

The cosine distance in Equation~\eqref{eq:edge_drift} is symmetric, even though \pcc trains on each edge in the richer-to-poorer direction. Thus an addition step in the path uses the same distance as the corresponding removal edge. Stop-gradient directs the alignment update toward the poorer endpoint while preserving this geometric distance.

\subsection{Local-to-global representation stability}
Throughout this subsection, model parameters are fixed and inference is deterministic.

\noindent\textbf{Theorem~\ref{thm:representation_stability}
(Local-to-global representation stability).}
Fix an input and a path $S_0,S_1,\ldots,S_D$ of $D\geq1$ single-modality
transitions. Let $d_k=1-\bar{\mathbf z}_{k-1}^{\top}\bar{\mathbf z}_k$
denote the cosine discrepancy between consecutive normalized common
representations. If the common consensus is nonzero at every state on the path,
then
\begin{equation}
\|\bar{\mathbf z}_0-\bar{\mathbf z}_D\|_2
\leq
\sum_{k=1}^{D}\sqrt{2d_k}
\leq
\sqrt{2D\sum_{k=1}^{D}d_k}.
\end{equation}

\begin{proof}[Proof of Theorem~\ref{thm:representation_stability}]
Fix an input and a path $S_0,S_1,\ldots,S_D$ with $D\geq1$. Assume $\mathbf z_k\neq\mathbf 0$ for all $k\in\{0,\ldots,D\}$, so that $\bar{\mathbf z}_k=\mathbf z_k/\|\mathbf z_k\|_2$ is defined and has unit norm. For each consecutive pair, the definition $d_k=d(S_{k-1},S_k)=1-\bar{\mathbf z}_{k-1}^\top\bar{\mathbf z}_k$ gives $0\leq d_k\leq2$ and
\begin{align}
\|\bar{\mathbf z}_{k-1}-\bar{\mathbf z}_k\|_2^2
&=\|\bar{\mathbf z}_{k-1}\|_2^2+\|\bar{\mathbf z}_k\|_2^2-2\bar{\mathbf z}_{k-1}^\top\bar{\mathbf z}_k \\
&=2d_k.
\end{align}
Telescoping the endpoint difference and applying the triangle inequality therefore yields
\begin{align}
\|\bar{\mathbf z}_0-\bar{\mathbf z}_D\|_2
&=\left\|\sum_{k=1}^D\left(\bar{\mathbf z}_{k-1}-\bar{\mathbf z}_k\right)\right\|_2 \\
&\leq\sum_{k=1}^D\|\bar{\mathbf z}_{k-1}-\bar{\mathbf z}_k\|_2 \\
&=\sum_{k=1}^D\sqrt{2d_k}.
\end{align}
Applying Cauchy--Schwarz to the vectors $(1,\ldots,1)$ and $(\sqrt{2d_1},\ldots,\sqrt{2d_D})$ gives
\begin{equation}
\left(\sum_{k=1}^D\sqrt{2d_k}\right)^2
\leq\left(\sum_{k=1}^D1^2\right)\left(\sum_{k=1}^D2d_k\right)
=2D\sum_{k=1}^Dd_k.
\end{equation}
\end{proof}

The result holds pointwise for every path satisfying the nonzero-consensus assumption, using the base cosine distances $d_k$ along that path. Taking the minimum over admissible paths gives the tightest pathwise bound. Both path length and accumulated local drift contribute to the bound, so a longer path with smaller discrepancies can provide tighter control.

\paragraph{Connection to the training objective.}
The bounds use the base cosine distances along the chosen path. For a fixed sample, a weighted PCC sum with positive weights bounds the sum of base distances on any included simple path after division by the smallest weight on that path. Under random chain sampling, the corresponding expected sum provides this control when every required edge has positive expected weight, using those expected weights in the conversion. The margin and power configuration in Appendix~\ref{app:pcc_configuration} is related to the base distances by
\begin{equation}
\sum_{k=1}^{D}d_k
\leq D\delta+
\sum_{k=1}^{D}\left[\max(0,d_k-\delta)^{\rho}\right]^{1/\rho}.
\end{equation}
Thus weighted or transformed sampled objectives can be related to the same geometric bound through their weights, sampling coverage, and discrepancy settings. Stop-gradient preserves the forward distances while directing the alignment update toward the poorer state. The proof uses exact unit normalization; for an epsilon-stabilized implementation, it applies when each path consensus norm is at least the normalization floor.

\subsection{Supplementary output-drift bound}
Fix the model parameters and one input, and use deterministic inference. At state $S_k$, write the complete prediction from Equation~\eqref{eq:prediction} as $\widehat{\mathbf y}_k=f(\mathbf p(S_k),\mathbf z_k)$. Here $\mathbf p(S_k)$ collects independently processed direct-evidence slots in a fixed order, zero-fills unavailable slots, and includes every availability indicator used by the prediction mapping. Any subsequent mixing on the private path belongs to the prediction mapping $f$. Because each retained evidence slot depends only on its own input, its value is unchanged when other slots become unavailable. Thus the difference $\mathbf p(S_0)-\mathbf p(S_D)$ is confined to changed evidence slots and availability coordinates; the private prediction itself remains free to adapt to these changes.

For direct-input vectors $\mathbf p,\mathbf p'$ and raw common consensuses $\mathbf z,\mathbf z'$ in a product domain containing the compared inputs, assume the complete mapping is Lipschitz in each block:
\begin{align}
\|f(\mathbf p,\mathbf z)-f(\mathbf p',\mathbf z)\|_2
&\leq L_{\mathrm{priv}}\|\mathbf p-\mathbf p'\|_2,\\
\|f(\mathbf p,\mathbf z)-f(\mathbf p,\mathbf z')\|_2
&\leq L_{\mathrm{com}}\|\mathbf z-\mathbf z'\|_2.
\end{align}
The constant $L_{\mathrm{priv}}$ accounts for the private path and both heads' explicit availability conditioning while the common consensus is held fixed. The constant $L_{\mathrm{com}}$ includes the fixed residual scale in Equation~\eqref{eq:prediction}. This formulation accounts separately for changes in the direct inputs and in the raw common consensus used by the prediction head.

\begin{corollary}[Output drift of the complete predictor]
\label{cor:prediction_stability}
Under the assumptions of Theorem~\ref{thm:representation_stability} and the block-Lipschitz conditions above, the complete predictions satisfy
\begin{equation}
\begin{aligned}
\|\widehat{\mathbf y}_0-\widehat{\mathbf y}_D\|_2
\leq{}& L_{\mathrm{priv}}\|\mathbf p(S_0)-\mathbf p(S_D)\|_2 \\
&+L_{\mathrm{com}}\Bigl[\bigl(\|\mathbf z_0\|_2-\|\mathbf z_D\|_2\bigr)^2
+2D\|\mathbf z_0\|_2\|\mathbf z_D\|_2\sum_{k=1}^{D}d_k\Bigr]^{1/2}.
\end{aligned}
\label{eq:prediction_bound}
\end{equation}
\end{corollary}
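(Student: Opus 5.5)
The plan is to split the endpoint output difference through a mixed intermediate point and then bound each piece with one of the block-Lipschitz conditions. Write $\widehat{\mathbf y}_0=f(\mathbf p(S_0),\mathbf z_0)$ and $\widehat{\mathbf y}_D=f(\mathbf p(S_D),\mathbf z_D)$, and insert $f(\mathbf p(S_D),\mathbf z_0)$. The triangle inequality gives
\[
\|\widehat{\mathbf y}_0-\widehat{\mathbf y}_D\|_2\leq\|f(\mathbf p(S_0),\mathbf z_0)-f(\mathbf p(S_D),\mathbf z_0)\|_2+\|f(\mathbf p(S_D),\mathbf z_0)-f(\mathbf p(S_D),\mathbf z_D)\|_2 .
\]
The first term changes only the direct-input block, so it is at most $L_{\mathrm{priv}}\|\mathbf p(S_0)-\mathbf p(S_D)\|_2$. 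The second changes only the common block, so it is at most $L_{\mathrm{com}}\|\mathbf z_0-\mathbf z_D\|_2$. This uses that the mixed pair $(\mathbf p(S_D),\mathbf z_0)$ lies in the product domain, which the stated product-domain assumption guarantees.

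It remains to bound the raw consensus gap $\|\mathbf z_0-\mathbf z_D\|_2$ by the bracketed quantity. Write $a=\|\mathbf z_0\|_2$ and $b=\|\mathbf z_D\|_2$, both positive by the nonzero-consensus hypothesis of Theorem~\ref{thm:representation_stability}. Since $\mathbf z_0=a\bar{\mathbf z}_0$ and $\mathbf z_D=b\bar{\mathbf z}_D$, expanding the square gives
\[
\|\mathbf z_0-\mathbf z_D\|_2^2=a^2+b^2-2ab\,\bar{\mathbf z}_0^\top\bar{\mathbf z}_D=(a-b)^2+ab\,\|\bar{\mathbf z}_0-\bar{\mathbf z}_D\|_2^2 .
\]
The last equality uses $\|\bar{\mathbf z}_0-\bar{\mathbf z}_D\|_2^2=2-2\bar{\mathbf z}_0^\top\bar{\mathbf z}_D$. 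Now apply the second inequality of Theorem~\ref{thm:representation_stability}, squared: $\|\bar{\mathbf z}_0-\bar{\mathbf z}_D\|_2^2\leq 2D\sum_k d_k$. This yields $\|\mathbf z_0-\mathbf z_D\|_2^2\leq (a-b)^2+2Dab\sum_k d_k$. Taking square roots and multiplying by $L_{\mathrm{com}}$ reproduces the second term of \eqref{eq:prediction_bound}.

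The only step needing care is this norm--direction decomposition. One must check that the cross term combines exactly into $ab\|\bar{\mathbf z}_0-\bar{\mathbf z}_D\|_2^2$, which is a direct algebraic identity. One must also check that $ab\geq0$, so that the direction bound from the theorem can be substituted monotonically. Everything else is the triangle inequality and the Lipschitz hypotheses. One could tighten the result by using $\sum_k\sqrt{2d_k}$ squared instead of $2D\sum_k d_k$, but the stated form follows from the weaker second bound.
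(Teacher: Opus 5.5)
Your proposal is correct and follows the paper's proof step for step. Like the paper, you insert $f(\mathbf p(S_D),\mathbf z_0)$, apply the two block-Lipschitz bounds, and then combine the exact magnitude--direction identity $\|\mathbf z_0-\mathbf z_D\|_2^2=(\|\mathbf z_0\|_2-\|\mathbf z_D\|_2)^2+\|\mathbf z_0\|_2\|\mathbf z_D\|_2\|\bar{\mathbf z}_0-\bar{\mathbf z}_D\|_2^2$ with the squared second bound of Theorem~\ref{thm:representation_stability}.
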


\begin{proof}[Proof of Corollary~\ref{cor:prediction_stability}]
Insert the intermediate prediction $f(\mathbf p(S_D),\mathbf z_0)$ and apply the block-Lipschitz conditions:
\begin{align}
\|\widehat{\mathbf y}_0-\widehat{\mathbf y}_D\|_2
&\leq\|f(\mathbf p(S_0),\mathbf z_0)-f(\mathbf p(S_D),\mathbf z_0)\|_2\notag\\
&\quad+\|f(\mathbf p(S_D),\mathbf z_0)-f(\mathbf p(S_D),\mathbf z_D)\|_2\notag\\
&\leq L_{\mathrm{priv}}\|\mathbf p(S_0)-\mathbf p(S_D)\|_2
+L_{\mathrm{com}}\|\mathbf z_0-\mathbf z_D\|_2.
\end{align}
For nonzero endpoint consensuses, their raw Euclidean distance decomposes exactly into magnitude and direction changes:
\begin{align}
\|\mathbf z_0-\mathbf z_D\|_2^2
&=(\|\mathbf z_0\|_2-\|\mathbf z_D\|_2)^2\notag\\
&\quad+\|\mathbf z_0\|_2\|\mathbf z_D\|_2
\|\bar{\mathbf z}_0-\bar{\mathbf z}_D\|_2^2\notag\\
&\leq(\|\mathbf z_0\|_2-\|\mathbf z_D\|_2)^2
+2D\|\mathbf z_0\|_2\|\mathbf z_D\|_2\sum_{k=1}^D d_k,
\end{align}
where the last inequality follows from Theorem~\ref{thm:representation_stability}. Taking the square root and substituting into the preceding prediction bound proves Equation~\eqref{eq:prediction_bound}.
\end{proof}

The bound separates changes in direct evidence from changes in the magnitude and direction of the common consensus. The pathwise \pcc distances enter through the directional component.

\section{Limitations}
\label{app:limitations}
\paragraph{Missingness conditions.}
Our evaluation focuses on controlled removal of entire modalities with known availability states. This setting isolates the effect of missing inputs across modality combinations. Further evaluation is needed for partial corruption, temporal misalignment, and naturally occurring missingness associated with sample characteristics.

\paragraph{Model-pool updates.}
\luna is trained for a specified pool of frozen models. Our pool-size and replacement experiments evaluate alternative pools after training the corresponding collaboration modules. Supporting changes to the pool without retraining the full collaboration module is a direction for future work.

\section{Broader Impact}
More reliable behavior under missing modalities is valuable in healthcare, sensing, and human-centered applications, but it does not by itself ensure calibration, fairness, or safety. Performance should be audited separately across missingness mechanisms and demographic groups. Predictions made with fewer modalities should also expose the available-modality state to downstream users rather than presenting the same confidence semantics as a complete observation.

%% file: tables/multiseed_details.tex
\clearpage
\subsection{All methods under every availability state}
\label{app:multiseed_views}
Tables~\ref{tab:multiseed_chsims}--\ref{tab:multiseed_adni} give the complete
method-by-state results for all eight datasets. Bold marks the best mean and
underlining the second-best mean in each column, using unrounded values and
preserving ties. These marks indicate rank rather than statistical significance.
\begin{table}[!htbp]
\centering
\caption{\textbf{SIMS: complete results.} \citep{liu2022chsims} MAE $\times 100$ $\downarrow$. Trained models: five-run mean $\pm$ sample SD; fixed models: single score. $\mathrm{F}$ denotes full input; MM averages non-full states.}
\label{tab:multiseed_chsims}
\fontsize{8.5}{10}\selectfont
\setlength{\tabcolsep}{1.2pt}
\renewcommand{\arraystretch}{1.1}
\arrayrulecolor{TblInk!55}
\begin{tabular}{@{}w{l}{68pt}w{c}{\dimexpr(\linewidth-68pt-44pt-16\tabcolsep)/7\relax}w{c}{\dimexpr(\linewidth-68pt-44pt-16\tabcolsep)/7\relax}w{c}{\dimexpr(\linewidth-68pt-44pt-16\tabcolsep)/7\relax}w{c}{\dimexpr(\linewidth-68pt-44pt-16\tabcolsep)/7\relax}w{c}{\dimexpr(\linewidth-68pt-44pt-16\tabcolsep)/7\relax}w{c}{\dimexpr(\linewidth-68pt-44pt-16\tabcolsep)/7\relax}w{c}{44pt}w{c}{\dimexpr(\linewidth-68pt-44pt-16\tabcolsep)/7\relax}@{}}
\toprule
\rowcolor{TblHeader}
{\color{TblInk}\bfseries Method} & {\color{TblInk}\bfseries TAV$^{\mathrm{F}}$} & {\color{TblInk}\bfseries TA} & {\color{TblInk}\bfseries TV} & {\color{TblInk}\bfseries AV} & {\color{TblInk}\bfseries T} & {\color{TblInk}\bfseries A} & {\color{TblInk}\bfseries V} & {\color{TblInk}\bfseries \textbf{MM}} \\
\midrule
\rowcolor{TblBlue}
\multicolumn{9}{@{}l}{\color{TblInk}\bfseries\scshape General collaboration} \\
\addlinespace[2pt]
Mean ens. & \mbox{53.24\hspace{0.4pt}{\fontsize{6}{7}\selectfont\color{black!75}\textpm 0.02}} & \mbox{53.52\hspace{0.4pt}{\fontsize{6}{7}\selectfont\color{black!75}\textpm 0.02}} & \mbox{53.20\hspace{0.4pt}{\fontsize{6}{7}\selectfont\color{black!75}\textpm 0.02}} & \mbox{53.00\hspace{0.4pt}{\fontsize{6}{7}\selectfont\color{black!75}\textpm 0.02}} & \mbox{53.72\hspace{0.4pt}{\fontsize{6}{7}\selectfont\color{black!75}\textpm 0.02}} & \mbox{53.31\hspace{0.4pt}{\fontsize{6}{7}\selectfont\color{black!75}\textpm 0.03}} & \mbox{52.71\hspace{0.4pt}{\fontsize{6}{7}\selectfont\color{black!75}\textpm 0.03}} & \mbox{53.24\hspace{0.4pt}{\fontsize{6}{7}\selectfont\color{black!75}\textpm 0.02}} \\
\bottomrule
\end{tabular}
\end{table}
\FloatBarrier
\clearpage
\begin{table}[!htbp]
\centering
\caption{\textbf{MOSI: complete results.} \citep{zadeh2016mosi} MAE $\times 100$ $\downarrow$. Trained models: five-run mean $\pm$ sample SD; fixed models: single score. $\mathrm{F}$ denotes full input; MM averages non-full states.}
\label{tab:multiseed_mosi}
\fontsize{8}{10}\selectfont
\setlength{\tabcolsep}{0.6pt}
\renewcommand{\arraystretch}{1.1}
\arrayrulecolor{TblInk!55}
\begin{tabular}{@{}w{l}{\dimexpr\linewidth-329pt-16\tabcolsep\relax}w{c}{40pt}w{c}{40pt}w{c}{40pt}w{c}{43pt}w{c}{40pt}w{c}{43pt}w{c}{43pt}w{c}{40pt}@{}}
\toprule
\rowcolor{TblHeader}
{\color{TblInk}\bfseries Method} & {\color{TblInk}\bfseries TAV$^{\mathrm{F}}$} & {\color{TblInk}\bfseries TA} & {\color{TblInk}\bfseries TV} & {\color{TblInk}\bfseries AV} & {\color{TblInk}\bfseries T} & {\color{TblInk}\bfseries A} & {\color{TblInk}\bfseries V} & {\color{TblInk}\bfseries \textbf{MM}} \\
\midrule
\rowcolor{TblBlue}
\multicolumn{9}{@{}l}{\color{TblInk}\bfseries\scshape General collaboration} \\
\addlinespace[2pt]
Mean ens. & \mbox{145.70\hspace{0.4pt}{\fontsize{6}{7}\selectfont\color{black!75}\textpm 0.25}} & \mbox{145.35\hspace{0.4pt}{\fontsize{6}{7}\selectfont\color{black!75}\textpm 0.23}} & \mbox{146.12\hspace{0.4pt}{\fontsize{6}{7}\selectfont\color{black!75}\textpm 0.25}} & \mbox{145.62\hspace{0.4pt}{\fontsize{6}{7}\selectfont\color{black!75}\textpm 0.26}} & \mbox{145.85\hspace{0.4pt}{\fontsize{6}{7}\selectfont\color{black!75}\textpm 0.23}} & \mbox{144.88\hspace{0.4pt}{\fontsize{6}{7}\selectfont\color{black!75}\textpm 0.24}} & \mbox{146.41\hspace{0.4pt}{\fontsize{6}{7}\selectfont\color{black!75}\textpm 0.27}} & \mbox{145.71\hspace{0.4pt}{\fontsize{6}{7}\selectfont\color{black!75}\textpm 0.25}} \\
\bottomrule
\end{tabular}
\end{table}
\FloatBarrier
\clearpage
\begin{table}[!htbp]
\centering
\caption{\textbf{MELD: complete results.} \citep{poria2019meld} Weighted F1 (\%) $\uparrow$. Trained models: five-run mean $\pm$ sample SD; fixed models: single score. $\mathrm{F}$ denotes full input; MM averages non-full states.}
\label{tab:multiseed_meld}
\fontsize{8.5}{10}\selectfont
\setlength{\tabcolsep}{1.2pt}
\renewcommand{\arraystretch}{1.1}
\arrayrulecolor{TblInk!55}
\begin{tabular}{@{}w{l}{68pt}*{8}{w{c}{\dimexpr(\linewidth-68pt-16\tabcolsep)/8\relax}}@{}}
\toprule
\rowcolor{TblHeader}
{\color{TblInk}\bfseries Method} & {\color{TblInk}\bfseries TAV$^{\mathrm{F}}$} & {\color{TblInk}\bfseries TA} & {\color{TblInk}\bfseries TV} & {\color{TblInk}\bfseries AV} & {\color{TblInk}\bfseries T} & {\color{TblInk}\bfseries A} & {\color{TblInk}\bfseries V} & {\color{TblInk}\bfseries \textbf{MM}} \\
\midrule
\rowcolor{TblBlue}
\multicolumn{9}{@{}l}{\color{TblInk}\bfseries\scshape General collaboration} \\
\addlinespace[2pt]
Mean ens. & \mbox{31.31\hspace{0.4pt}{\fontsize{6}{7}\selectfont\color{black!75}\textpm 0.00}} & \mbox{31.31\hspace{0.4pt}{\fontsize{6}{7}\selectfont\color{black!75}\textpm 0.00}} & \mbox{31.31\hspace{0.4pt}{\fontsize{6}{7}\selectfont\color{black!75}\textpm 0.00}} & \mbox{31.31\hspace{0.4pt}{\fontsize{6}{7}\selectfont\color{black!75}\textpm 0.00}} & \mbox{31.31\hspace{0.4pt}{\fontsize{6}{7}\selectfont\color{black!75}\textpm 0.00}} & \mbox{31.31\hspace{0.4pt}{\fontsize{6}{7}\selectfont\color{black!75}\textpm 0.00}} & \mbox{31.31\hspace{0.4pt}{\fontsize{6}{7}\selectfont\color{black!75}\textpm 0.00}} & \mbox{31.31\hspace{0.4pt}{\fontsize{6}{7}\selectfont\color{black!75}\textpm 0.00}} \\[2pt]
Logit stack & \mbox{52.23\hspace{0.4pt}{\fontsize{6}{7}\selectfont\color{black!75}\textpm 0.10}} & \mbox{43.76\hspace{0.4pt}{\fontsize{6}{7}\selectfont\color{black!75}\textpm 1.63}} & \mbox{12.71\hspace{0.4pt}{\fontsize{6}{7}\selectfont\color{black!75}\textpm 1.67}} & \mbox{38.24\hspace{0.4pt}{\fontsize{6}{7}\selectfont\color{black!75}\textpm 1.45}} & \mbox{7.83\hspace{0.4pt}{\fontsize{6}{7}\selectfont\color{black!75}\textpm 3.22}} & \mbox{31.59\hspace{0.4pt}{\fontsize{6}{7}\selectfont\color{black!75}\textpm 0.28}} & \mbox{2.10\hspace{0.4pt}{\fontsize{6}{7}\selectfont\color{black!75}\textpm 0.00}} & \mbox{22.70\hspace{0.4pt}{\fontsize{6}{7}\selectfont\color{black!75}\textpm 0.57}} \\
\bottomrule
\end{tabular}
\end{table}
\FloatBarrier
\clearpage
\begin{table}[!htbp]
\centering
\caption{\textbf{URF: complete results.} \citep{hasan2019urfunny} Accuracy (\%) $\uparrow$. Trained models: five-run mean $\pm$ sample SD; fixed models: single score. $\mathrm{F}$ denotes full input; MM averages non-full states.}
\label{tab:multiseed_urfunny}
\fontsize{8.5}{10}\selectfont
\setlength{\tabcolsep}{1.2pt}
\renewcommand{\arraystretch}{1.1}
\arrayrulecolor{TblInk!55}
\begin{tabular}{@{}w{l}{68pt}*{8}{w{c}{\dimexpr(\linewidth-68pt-16\tabcolsep)/8\relax}}@{}}
\toprule
\rowcolor{TblHeader}
{\color{TblInk}\bfseries Method} & {\color{TblInk}\bfseries TAV$^{\mathrm{F}}$} & {\color{TblInk}\bfseries TA} & {\color{TblInk}\bfseries TV} & {\color{TblInk}\bfseries AV} & {\color{TblInk}\bfseries T} & {\color{TblInk}\bfseries A} & {\color{TblInk}\bfseries V} & {\color{TblInk}\bfseries \textbf{MM}} \\
\midrule
\rowcolor{TblBlue}
\multicolumn{9}{@{}l}{\color{TblInk}\bfseries\scshape General collaboration} \\
\addlinespace[2pt]
Mean ens. & \mbox{62.58\hspace{0.4pt}{\fontsize{6}{7}\selectfont\color{black!75}\textpm 2.13}} & \mbox{63.82\hspace{0.4pt}{\fontsize{6}{7}\selectfont\color{black!75}\textpm 1.69}} & \mbox{63.32\hspace{0.4pt}{\fontsize{6}{7}\selectfont\color{black!75}\textpm 0.87}} & \mbox{59.34\hspace{0.4pt}{\fontsize{6}{7}\selectfont\color{black!75}\textpm 0.74}} & \mbox{67.04\hspace{0.4pt}{\fontsize{6}{7}\selectfont\color{black!75}\textpm 0.45}} & \mbox{59.40\hspace{0.4pt}{\fontsize{6}{7}\selectfont\color{black!75}\textpm 0.60}} & \mbox{54.63\hspace{0.4pt}{\fontsize{6}{7}\selectfont\color{black!75}\textpm 0.61}} & \mbox{61.26\hspace{0.4pt}{\fontsize{6}{7}\selectfont\color{black!75}\textpm 0.54}} \\[2pt]
Logit stack & \mbox{67.16\hspace{0.4pt}{\fontsize{6}{7}\selectfont\color{black!75}\textpm 0.35}} & \mbox{67.75\hspace{0.4pt}{\fontsize{6}{7}\selectfont\color{black!75}\textpm 0.83}} & \mbox{66.80\hspace{0.4pt}{\fontsize{6}{7}\selectfont\color{black!75}\textpm 0.44}} & \mbox{56.00\hspace{0.4pt}{\fontsize{6}{7}\selectfont\color{black!75}\textpm 4.00}} & \mbox{66.80\hspace{0.4pt}{\fontsize{6}{7}\selectfont\color{black!75}\textpm 0.83}} & \mbox{55.17\hspace{0.4pt}{\fontsize{6}{7}\selectfont\color{black!75}\textpm 3.39}} & \mbox{53.00\hspace{0.4pt}{\fontsize{6}{7}\selectfont\color{black!75}\textpm 1.44}} & \mbox{60.92\hspace{0.4pt}{\fontsize{6}{7}\selectfont\color{black!75}\textpm 1.07}} \\
\bottomrule
\end{tabular}
\end{table}
\FloatBarrier
\clearpage
\begin{table}[!htbp]
\centering
\caption{\textbf{Daily: complete results.} \citep{zhou2025dailyomni} Accuracy (\%) $\uparrow$. Trained models: five-run mean $\pm$ sample SD; fixed models: single score. $\mathrm{F}$ denotes full input; MM averages non-full states.}
\label{tab:multiseed_daily}
\fontsize{8.5}{10}\selectfont
\setlength{\tabcolsep}{1.2pt}
\renewcommand{\arraystretch}{1.1}
\arrayrulecolor{TblInk!55}
\begin{tabular}{@{}w{l}{68pt}*{4}{w{c}{\dimexpr(\linewidth-68pt-8\tabcolsep)/4\relax}}@{}}
\toprule
\rowcolor{TblHeader}
{\color{TblInk}\bfseries Method} & {\color{TblInk}\bfseries AV$^{\mathrm{F}}$} & {\color{TblInk}\bfseries A} & {\color{TblInk}\bfseries V} & {\color{TblInk}\bfseries \textbf{MM}} \\
\midrule
\rowcolor{TblBlue}
\multicolumn{5}{@{}l}{\color{TblInk}\bfseries\scshape General collaboration} \\
\addlinespace[2pt]
Mean ens. & \mbox{34.12\hspace{0.4pt}{\fontsize{6}{7}\selectfont\color{black!75}\textpm 0.31}} & \mbox{34.46\hspace{0.4pt}{\fontsize{6}{7}\selectfont\color{black!75}\textpm 0.56}} & \mbox{31.86\hspace{0.4pt}{\fontsize{6}{7}\selectfont\color{black!75}\textpm 1.36}} & \mbox{33.16\hspace{0.4pt}{\fontsize{6}{7}\selectfont\color{black!75}\textpm 0.81}} \\[2pt]
Logit stack & \mbox{33.90\hspace{0.4pt}{\fontsize{6}{7}\selectfont\color{black!75}\textpm 0.69}} & \mbox{34.12\hspace{0.4pt}{\fontsize{6}{7}\selectfont\color{black!75}\textpm 1.17}} & \mbox{26.10\hspace{0.4pt}{\fontsize{6}{7}\selectfont\color{black!75}\textpm 6.89}} & \mbox{30.11\hspace{0.4pt}{\fontsize{6}{7}\selectfont\color{black!75}\textpm 3.73}} \\[2pt]
Concat MLP & \mbox{33.22\hspace{0.4pt}{\fontsize{6}{7}\selectfont\color{black!75}\textpm 1.09}} & \mbox{33.22\hspace{0.4pt}{\fontsize{6}{7}\selectfont\color{black!75}\textpm 1.01}} & \mbox{29.15\hspace{0.4pt}{\fontsize{6}{7}\selectfont\color{black!75}\textpm 3.16}} & \mbox{31.19\hspace{0.4pt}{\fontsize{6}{7}\selectfont\color{black!75}\textpm 1.85}} \\[2pt]
Attn. pool & \mbox{33.11\hspace{0.4pt}{\fontsize{6}{7}\selectfont\color{black!75}\textpm 1.90}} & \mbox{32.88\hspace{0.4pt}{\fontsize{6}{7}\selectfont\color{black!75}\textpm 2.27}} & \mbox{27.12\hspace{0.4pt}{\fontsize{6}{7}\selectfont\color{black!75}\textpm 2.00}} & \mbox{30.00\hspace{0.4pt}{\fontsize{6}{7}\selectfont\color{black!75}\textpm 1.10}} \\
\bottomrule
\end{tabular}
\end{table}
\FloatBarrier
\clearpage
\begin{table}[!htbp]
\centering
\caption{\textbf{WS: complete results.} \citep{hong2025worldsense} Accuracy (\%) $\uparrow$. Trained models: five-run mean $\pm$ sample SD; fixed models: single score. $\mathrm{F}$ denotes full input; MM averages non-full states.}
\label{tab:multiseed_worldsense}
\fontsize{8.5}{10}\selectfont
\setlength{\tabcolsep}{1.2pt}
\renewcommand{\arraystretch}{1.1}
\arrayrulecolor{TblInk!55}
\begin{tabular}{@{}w{l}{68pt}*{8}{w{c}{\dimexpr(\linewidth-68pt-16\tabcolsep)/8\relax}}@{}}
\toprule
\rowcolor{TblHeader}
{\color{TblInk}\bfseries Method} & {\color{TblInk}\bfseries VAS$^{\mathrm{F}}$} & {\color{TblInk}\bfseries VA} & {\color{TblInk}\bfseries VS} & {\color{TblInk}\bfseries AS} & {\color{TblInk}\bfseries V} & {\color{TblInk}\bfseries A} & {\color{TblInk}\bfseries S} & {\color{TblInk}\bfseries \textbf{MM}} \\
\midrule
\rowcolor{TblBlue}
\multicolumn{9}{@{}l}{\color{TblInk}\bfseries\scshape General collaboration} \\
\addlinespace[2pt]
Mean ens. & \mbox{31.97\hspace{0.4pt}{\fontsize{6}{7}\selectfont\color{black!75}\textpm 0.27}} & \mbox{29.08\hspace{0.4pt}{\fontsize{6}{7}\selectfont\color{black!75}\textpm 0.44}} & \mbox{32.55\hspace{0.4pt}{\fontsize{6}{7}\selectfont\color{black!75}\textpm 0.19}} & \mbox{32.34\hspace{0.4pt}{\fontsize{6}{7}\selectfont\color{black!75}\textpm 0.24}} & \mbox{29.00\hspace{0.4pt}{\fontsize{6}{7}\selectfont\color{black!75}\textpm 0.41}} & \mbox{29.04\hspace{0.4pt}{\fontsize{6}{7}\selectfont\color{black!75}\textpm 0.31}} & \mbox{\textbf{34.60}\hspace{0.4pt}{\fontsize{6}{7}\selectfont\color{black!75}\textpm 0.32}} & \mbox{31.10\hspace{0.4pt}{\fontsize{6}{7}\selectfont\color{black!75}\textpm 0.12}} \\[2pt]
Logit stack & \mbox{33.85\hspace{0.4pt}{\fontsize{6}{7}\selectfont\color{black!75}\textpm 0.62}} & \mbox{28.95\hspace{0.4pt}{\fontsize{6}{7}\selectfont\color{black!75}\textpm 0.48}} & \mbox{33.35\hspace{0.4pt}{\fontsize{6}{7}\selectfont\color{black!75}\textpm 0.64}} & \mbox{\textbf{33.81}\hspace{0.4pt}{\fontsize{6}{7}\selectfont\color{black!75}\textpm 0.50}} & \mbox{29.16\hspace{0.4pt}{\fontsize{6}{7}\selectfont\color{black!75}\textpm 0.94}} & \mbox{28.74\hspace{0.4pt}{\fontsize{6}{7}\selectfont\color{black!75}\textpm 0.60}} & \mbox{\underline{33.18}\hspace{0.4pt}{\fontsize{6}{7}\selectfont\color{black!75}\textpm 0.70}} & \mbox{\underline{31.20}\hspace{0.4pt}{\fontsize{6}{7}\selectfont\color{black!75}\textpm 0.15}} \\
\bottomrule
\end{tabular}
\end{table}
\FloatBarrier
\clearpage
\begin{table}[!htbp]
\centering
\caption{\textbf{CirCor: complete results.} \citep{oliveira2022circor} Balanced accuracy (\%) $\uparrow$. Panel 1/2. Trained models: five-run mean $\pm$ sample SD; fixed models: single score. $\mathrm{F}$ denotes full input; MM averages non-full states. AV/PV/TV/MV denote auscultation sites.}
\label{tab:multiseed_circor}
\fontsize{8.5}{10}\selectfont
\setlength{\tabcolsep}{1.2pt}
\renewcommand{\arraystretch}{1.1}
\arrayrulecolor{TblInk!55}
\begin{tabular}{@{}w{l}{68pt}*{8}{w{c}{\dimexpr(\linewidth-68pt-16\tabcolsep)/8\relax}}@{}}
\toprule
\rowcolor{TblHeader}
{\color{TblInk}\bfseries Method} & {\color{TblInk}\bfseries \shortstack[c]{AV+PV+\\TV+MV$^{\mathrm{F}}$}} & {\color{TblInk}\bfseries \shortstack[c]{AV+PV+\\TV}} & {\color{TblInk}\bfseries \shortstack[c]{AV+PV+\\MV}} & {\color{TblInk}\bfseries \shortstack[c]{AV+TV+\\MV}} & {\color{TblInk}\bfseries \shortstack[c]{PV+TV+\\MV}} & {\color{TblInk}\bfseries AV+PV} & {\color{TblInk}\bfseries AV+TV} & {\color{TblInk}\bfseries AV+MV} \\
\midrule
\rowcolor{TblBlue}
\multicolumn{9}{@{}l}{\color{TblInk}\bfseries\scshape General collaboration} \\
\addlinespace[2pt]
Mean ens. & \mbox{50.00\hspace{0.4pt}{\fontsize{6}{7}\selectfont\color{black!75}\textpm 0.00}} & \mbox{50.00\hspace{0.4pt}{\fontsize{6}{7}\selectfont\color{black!75}\textpm 0.00}} & \mbox{50.00\hspace{0.4pt}{\fontsize{6}{7}\selectfont\color{black!75}\textpm 0.00}} & \mbox{50.00\hspace{0.4pt}{\fontsize{6}{7}\selectfont\color{black!75}\textpm 0.00}} & \mbox{50.00\hspace{0.4pt}{\fontsize{6}{7}\selectfont\color{black!75}\textpm 0.00}} & \mbox{50.00\hspace{0.4pt}{\fontsize{6}{7}\selectfont\color{black!75}\textpm 0.00}} & \mbox{50.00\hspace{0.4pt}{\fontsize{6}{7}\selectfont\color{black!75}\textpm 0.00}} & \mbox{50.00\hspace{0.4pt}{\fontsize{6}{7}\selectfont\color{black!75}\textpm 0.00}} \\
\bottomrule
\end{tabular}
\end{table}
\FloatBarrier
\clearpage
\begin{table}[!htbp]
\centering
\begin{minipage}{\linewidth}
Table~\ref{tab:multiseed_circor} (continued): \textbf{CirCor: complete results.} \citep{oliveira2022circor} Balanced accuracy (\%) $\uparrow$. Panel 2/2. Trained models: five-run mean $\pm$ sample SD; fixed models: single score. AV/PV/TV/MV denote auscultation sites.
\end{minipage}\par
\vspace{7pt}
\fontsize{8.5}{10}\selectfont
\setlength{\tabcolsep}{1.2pt}
\renewcommand{\arraystretch}{1.1}
\arrayrulecolor{TblInk!55}
\begin{tabular}{@{}w{l}{68pt}*{8}{w{c}{\dimexpr(\linewidth-68pt-16\tabcolsep)/8\relax}}@{}}
\toprule
\rowcolor{TblHeader}
{\color{TblInk}\bfseries Method} & {\color{TblInk}\bfseries PV+TV} & {\color{TblInk}\bfseries PV+MV} & {\color{TblInk}\bfseries TV+MV} & {\color{TblInk}\bfseries AV} & {\color{TblInk}\bfseries PV} & {\color{TblInk}\bfseries TV} & {\color{TblInk}\bfseries MV} & {\color{TblInk}\bfseries \textbf{MM}} \\
\midrule
\rowcolor{TblBlue}
\multicolumn{9}{@{}l}{\color{TblInk}\bfseries\scshape General collaboration} \\
\addlinespace[2pt]
Mean ens. & \mbox{50.00\hspace{0.4pt}{\fontsize{6}{7}\selectfont\color{black!75}\textpm 0.00}} & \mbox{50.00\hspace{0.4pt}{\fontsize{6}{7}\selectfont\color{black!75}\textpm 0.00}} & \mbox{50.00\hspace{0.4pt}{\fontsize{6}{7}\selectfont\color{black!75}\textpm 0.00}} & \mbox{50.00\hspace{0.4pt}{\fontsize{6}{7}\selectfont\color{black!75}\textpm 0.00}} & \mbox{50.00\hspace{0.4pt}{\fontsize{6}{7}\selectfont\color{black!75}\textpm 0.00}} & \mbox{50.00\hspace{0.4pt}{\fontsize{6}{7}\selectfont\color{black!75}\textpm 0.00}} & \mbox{50.00\hspace{0.4pt}{\fontsize{6}{7}\selectfont\color{black!75}\textpm 0.00}} & \mbox{50.00\hspace{0.4pt}{\fontsize{6}{7}\selectfont\color{black!75}\textpm 0.00}} \\[2pt]
Logit stack & \mbox{51.11\hspace{0.4pt}{\fontsize{6}{7}\selectfont\color{black!75}\textpm 1.52}} & \mbox{50.56\hspace{0.4pt}{\fontsize{6}{7}\selectfont\color{black!75}\textpm 1.24}} & \mbox{51.67\hspace{0.4pt}{\fontsize{6}{7}\selectfont\color{black!75}\textpm 3.73}} & \mbox{50.00\hspace{0.4pt}{\fontsize{6}{7}\selectfont\color{black!75}\textpm 0.00}} & \mbox{50.00\hspace{0.4pt}{\fontsize{6}{7}\selectfont\color{black!75}\textpm 0.00}} & \mbox{56.16\hspace{0.4pt}{\fontsize{6}{7}\selectfont\color{black!75}\textpm 9.72}} & \mbox{50.00\hspace{0.4pt}{\fontsize{6}{7}\selectfont\color{black!75}\textpm 0.00}} & \mbox{51.95\hspace{0.4pt}{\fontsize{6}{7}\selectfont\color{black!75}\textpm 2.20}} \\
\bottomrule
\end{tabular}
\end{table}
\FloatBarrier
\clearpage
\begin{table}[!htbp]
\centering
\caption{\textbf{ADNI: complete results.} \citep{mueller2005adni} Balanced accuracy (\%) $\uparrow$. Trained models: five-run mean $\pm$ sample SD; fixed models: single score. $\mathrm{F}$ denotes full input; MM averages non-full states.}
\label{tab:multiseed_adni}
\fontsize{8.5}{10}\selectfont
\setlength{\tabcolsep}{1.2pt}
\renewcommand{\arraystretch}{1.1}
\arrayrulecolor{TblInk!55}
\begin{tabular}{@{}w{l}{68pt}*{8}{w{c}{\dimexpr(\linewidth-68pt-16\tabcolsep)/8\relax}}@{}}
\toprule
\rowcolor{TblHeader}
{\color{TblInk}\bfseries Method} & {\color{TblInk}\bfseries MPC$^{\mathrm{F}}$} & {\color{TblInk}\bfseries MP} & {\color{TblInk}\bfseries MC} & {\color{TblInk}\bfseries PC} & {\color{TblInk}\bfseries M} & {\color{TblInk}\bfseries P} & {\color{TblInk}\bfseries C} & {\color{TblInk}\bfseries \textbf{MM}} \\
\midrule
\rowcolor{TblBlue}
\multicolumn{9}{@{}l}{\color{TblInk}\bfseries\scshape General collaboration} \\
\addlinespace[2pt]
Mean ens. & \mbox{73.25\hspace{0.4pt}{\fontsize{6}{7}\selectfont\color{black!75}\textpm 8.18}} & \mbox{64.00\hspace{0.4pt}{\fontsize{6}{7}\selectfont\color{black!75}\textpm 11.71}} & \mbox{64.50\hspace{0.4pt}{\fontsize{6}{7}\selectfont\color{black!75}\textpm 9.42}} & \mbox{76.75\hspace{0.4pt}{\fontsize{6}{7}\selectfont\color{black!75}\textpm 6.99}} & \mbox{48.75\hspace{0.4pt}{\fontsize{6}{7}\selectfont\color{black!75}\textpm 12.02}} & \mbox{70.25\hspace{0.4pt}{\fontsize{6}{7}\selectfont\color{black!75}\textpm 6.40}} & \mbox{77.00\hspace{0.4pt}{\fontsize{6}{7}\selectfont\color{black!75}\textpm 4.56}} & \mbox{66.88\hspace{0.4pt}{\fontsize{6}{7}\selectfont\color{black!75}\textpm 6.12}} \\[2pt]
Logit stack & \mbox{78.00\hspace{0.4pt}{\fontsize{6}{7}\selectfont\color{black!75}\textpm 3.60}} & \mbox{74.25\hspace{0.4pt}{\fontsize{6}{7}\selectfont\color{black!75}\textpm 6.71}} & \mbox{76.25\hspace{0.4pt}{\fontsize{6}{7}\selectfont\color{black!75}\textpm 4.59}} & \mbox{79.75\hspace{0.4pt}{\fontsize{6}{7}\selectfont\color{black!75}\textpm 3.79}} & \mbox{53.75\hspace{0.4pt}{\fontsize{6}{7}\selectfont\color{black!75}\textpm 7.40}} & \mbox{74.25\hspace{0.4pt}{\fontsize{6}{7}\selectfont\color{black!75}\textpm 9.54}} & \mbox{73.25\hspace{0.4pt}{\fontsize{6}{7}\selectfont\color{black!75}\textpm 2.09}} & \mbox{71.92\hspace{0.4pt}{\fontsize{6}{7}\selectfont\color{black!75}\textpm 3.20}} \\
\bottomrule
\end{tabular}
\end{table}
\FloatBarrier

%% file: tables/supplement_missing_views_sims.tex
\begin{table}[!htbp]
\centering
\captionsetup{skip=4pt}
\caption{\textbf{Missing-input training on CH-SIMS v2.} Results for every test view; column labels specify the retained modalities. MAE is multiplied by 100. MM averages the six non-full views. T/A/V denote text, audio, and vision.}
\label{tab:supplement_missing_views_sims}
\begingroup\fontsize{8.5}{10.2}\selectfont
\setlength{\tabcolsep}{3.5pt}
\renewcommand{\arraystretch}{1.13}
\begin{tabular}{@{}l*{8}{r}@{}}
\toprule
\rowcolor{TblHeader}Method & TAV & TA & TV & AV & T & A & V & MM \\
\midrule
Concat MLP & 32.68 & 34.46 & 35.68 & 36.12 & 39.78 & 37.63 & 43.74 & 37.90 \\
Attn. pool & 29.24 & 31.47 & 33.30 & 31.72 & 37.13 & 35.03 & 44.10 & 35.46 \\
Soft MoE & 35.53 & 33.87 & 38.16 & 37.70 & 38.72 & 36.33 & 43.72 & 38.08 \\
FoE & 44.93 & 44.84 & 47.62 & 45.90 & 49.06 & 47.65 & 50.48 & 47.59 \\
Set Trans. & 30.89 & 33.00 & 36.45 & 33.16 & 40.17 & 37.04 & 44.66 & 37.42 \\
CLARGA & 34.20 & 36.91 & 38.61 & 36.90 & 41.74 & 40.88 & 46.06 & 40.18 \\
I$^2$MoE & 30.71 & 33.87 & 36.06 & 32.31 & 40.52 & 37.56 & 43.51 & 37.31 \\
\midrule
Qwen2.5-Omni 3B & 48.69 & 53.76 & 50.51 & 46.14 & 57.34 & 53.04 & 48.51 & 51.55 \\
Qwen2.5-Omni 7B & 51.18 & 48.11 & 49.39 & 45.97 & 58.04 & 44.12 & 46.03 & 48.61 \\
Stream-Omni & 59.27 & 61.41 & 61.34 & 65.72 & 63.95 & 80.02 & 61.83 & 65.71 \\
\bottomrule
\end{tabular}
\endgroup
\end{table}

%% file: tables/supplement_missing_views_ws.tex
\begin{table}[!htbp]
\centering
\captionsetup{skip=4pt}
\caption{\textbf{Missing-input training on WorldSense.} Results for every test view; column labels specify the retained modalities. Accuracy is reported as a percentage. MM averages the six non-full views. V/A/S denote vision, audio, and subtitles.}
\label{tab:supplement_missing_views_ws}
\begingroup\fontsize{8.5}{10.2}\selectfont
\setlength{\tabcolsep}{3.5pt}
\renewcommand{\arraystretch}{1.13}
\begin{tabular}{@{}l*{8}{r}@{}}
\toprule
\rowcolor{TblHeader}Method & VAS & VA & VS & AS & V & A & S & MM \\
\midrule
Concat MLP & 32.64 & 34.31 & 31.59 & 31.59 & 35.15 & 31.59 & 31.59 & 32.64 \\
Attn. pool & 34.10 & 31.80 & 30.13 & 30.13 & 32.01 & 32.22 & 36.61 & 32.15 \\
Soft MoE & 29.92 & 33.47 & 29.92 & 29.50 & 33.26 & 31.38 & 29.50 & 31.17 \\
FoE & 32.64 & 28.87 & 33.68 & 32.85 & 28.87 & 28.87 & 33.05 & 31.03 \\
Set Trans. & 32.01 & 29.71 & 33.05 & 32.01 & 30.96 & 29.29 & 32.01 & 31.17 \\
CLARGA & 33.68 & 28.87 & 34.73 & 32.85 & 29.29 & 28.03 & 33.26 & 31.17 \\
I$^2$MoE & 32.22 & 28.66 & 32.43 & 32.22 & 26.36 & 29.92 & 32.01 & 30.27 \\
\midrule
Qwen2.5-Omni 3B & 29.29 & 38.08 & 28.24 & 28.45 & 32.01 & 31.59 & 24.69 & 30.51 \\
Qwen2.5-Omni 7B & 34.94 & 32.85 & 31.80 & 32.22 & 29.71 & 30.13 & 26.57 & 30.54 \\
Stream-Omni & 30.96 & 30.54 & 29.71 & 28.66 & 31.17 & 29.08 & 30.34 & 29.92 \\
\bottomrule
\end{tabular}
\endgroup
\end{table}

%% file: tables/supplement_missing_views_adni.tex
\begin{table}[!htbp]
\centering
\captionsetup{skip=4pt}
\caption{\textbf{Missing-input training on ADNI.} Results for every test view; column labels specify the retained modalities. Balanced accuracy is reported as a percentage. MM averages the six non-full views. M/P/C denote MRI, PET, and clinical information.}
\label{tab:supplement_missing_views_adni}
\begingroup\fontsize{8.5}{10.2}\selectfont
\setlength{\tabcolsep}{3.5pt}
\renewcommand{\arraystretch}{1.13}
\begin{tabular}{@{}l*{8}{r}@{}}
\toprule
\rowcolor{TblHeader}Method & MPC & MP & MC & PC & M & P & C & MM \\
\midrule
Concat MLP & 87.50 & 76.25 & 67.50 & 81.25 & 46.25 & 76.25 & 73.75 & 70.21 \\
Attn. pool & 87.50 & 91.25 & 68.75 & 76.25 & 80.00 & 82.50 & 70.00 & 78.13 \\
Soft MoE & 75.00 & 83.75 & 80.00 & 80.00 & 81.25 & 80.00 & 80.00 & 80.83 \\
FoE & 76.25 & 77.50 & 62.50 & 57.50 & 70.00 & 52.50 & 60.00 & 63.33 \\
Set Trans. & 95.00 & 87.50 & 82.50 & 88.75 & 73.75 & 81.25 & 78.75 & 82.08 \\
CLARGA & 88.75 & 78.75 & 80.00 & 86.25 & 77.50 & 82.50 & 78.75 & 80.63 \\
I$^2$MoE & 73.75 & 86.25 & 75.00 & 83.75 & 58.75 & 86.25 & 76.25 & 77.71 \\
\midrule
Qwen2.5-Omni 3B & 67.50 & 60.00 & 71.25 & 85.00 & 38.75 & 61.25 & 85.00 & 66.88 \\
Qwen2.5-Omni 7B & 81.25 & 56.25 & 85.00 & 83.75 & 50.00 & 56.25 & 87.50 & 69.79 \\
Stream-Omni & 81.25 & 75.00 & 62.50 & 76.25 & 67.50 & 86.25 & 77.50 & 74.17 \\
\bottomrule
\end{tabular}
\endgroup
\end{table}

%% file: tables/supplement_residual_combination.tex
\begin{table}[!htbp]
\centering
\captionsetup{skip=4pt}
\caption{\textbf{Common/private combination variants.} MM for feature concatenation (Concat) and fixed residual scales. Equal addition corresponds to $\alpha=1$. \luna uses the learned residual scale. Metrics follow Table~\ref{tab:multiseed_overall}; bold identifies the best displayed value in each row.}
\label{tab:supplement_residual_combination}
\begingroup\fontsize{8.5}{10.2}\selectfont
\setlength{\tabcolsep}{7pt}
\renewcommand{\arraystretch}{1.13}
\begin{tabular}{@{}l>{\columncolor{TblOurs}}r*{5}{r}@{}}
\toprule
\rowcolor{TblHeader}Dataset & \luna & Concat & $\alpha=0.25$ & $\alpha=0.5$ & $\alpha=1$ & $\alpha=2$ \\
\midrule
SIMS$\downarrow$ & 34.21 & 34.79 & 34.52 & 34.37 & \textbf{33.88} & 34.69 \\
MOSI$\downarrow$ & \textbf{98.34} & 103.19 & 99.55 & 105.66 & 100.21 & 99.36 \\
MELD$\uparrow$ & 56.08 & 55.59 & 55.65 & 55.91 & \textbf{56.16} & 56.05 \\
URF$\uparrow$ & \textbf{70.61} & 69.06 & 70.47 & 70.09 & 70.12 & 69.90 \\
\addlinespace[3pt]
Daily$\uparrow$ & \textbf{38.42} & 31.92 & 29.10 & 29.66 & 28.81 & 32.20 \\
WS$\uparrow$ & \textbf{33.82} & 31.10 & 33.33 & 33.54 & 33.30 & 33.17 \\
\addlinespace[3pt]
CirCor$\uparrow$ & \textbf{82.56} & 81.58 & 78.91 & 78.27 & 74.89 & 77.69 \\
ADNI$\uparrow$ & 84.17 & 83.00 & 83.79 & \textbf{84.21} & 83.96 & 80.63 \\
\bottomrule
\end{tabular}
\endgroup
\end{table}

%% file: tables/refresh_ablation_pairs.tex
\begin{table}[!htb]\centering
\caption{\textbf{PCC pair budgets.} MM; metrics and units follow Table~\ref{tab:multiseed_overall}. $Q$ counts cover pairs selected per step. Daily's $Q=4,9$ cells reuse $Q=2$; CirCor samples at most 9 of 28 edges. Bold marks each task's best budget.}
\label{tab:refresh_ablation_pairs}
\begingroup\fontsize{9}{10.5}\selectfont\setlength{\tabcolsep}{3.3pt}
\renewcommand{\arraystretch}{1.13}
\begin{tabular}{@{}lrrrrrrrr@{}}\toprule
Setting & SIMS$\downarrow$ & MOSI$\downarrow$ & MELD$\uparrow$ & URF$\uparrow$ & Daily$\uparrow$ & WS$\uparrow$ & CirCor$\uparrow$ & ADNI$\uparrow$ \\
\midrule
$Q=1$ & 36.29 & 105.97 & \textbf{53.89} & 69.35 & 29.94 & 30.68 & 81.43 & \textbf{77.29} \\
$Q=2$ & 36.17 & 106.06 & \textbf{53.89} & 69.52 & \textbf{31.07} & 30.61 & 81.33 & \textbf{77.29} \\
$Q=4$ & \textbf{35.46} & 109.38 & 53.88 & 68.66 & \textbf{31.07} & 30.30 & 80.86 & \textbf{77.29} \\
$Q=9$ & 36.48 & \textbf{104.74} & 53.86 & \textbf{70.09} & \textbf{31.07} & \textbf{30.79} & \textbf{81.49} & 72.50 \\
\bottomrule\end{tabular}\endgroup\end{table}

%% file: tables/refresh_ablation_l9.tex
\begin{table}[!htb]\centering
\caption{\textbf{Nine L9 configurations.} MM; metrics and units follow Table~\ref{tab:multiseed_overall}. Weights multiply each task's base PCC coefficient. Bold marks the best configuration per task.}
\label{tab:refresh_ablation_l9}
\begingroup\fontsize{9}{10.5}\selectfont\setlength{\tabcolsep}{3.3pt}
\renewcommand{\arraystretch}{1.13}
\begin{tabular}{@{}lrrrrrrrr@{}}\toprule
Setting & SIMS$\downarrow$ & MOSI$\downarrow$ & MELD$\uparrow$ & URF$\uparrow$ & Daily$\uparrow$ & WS$\uparrow$ & CirCor$\uparrow$ & ADNI$\uparrow$ \\
\midrule
L9-1 & 35.70 & 109.18 & 54.03 & 69.42 & 28.81 & 29.92 & 81.18 & 76.46 \\
L9-2 & 36.20 & 111.13 & 54.51 & 69.47 & \textbf{33.90} & \textbf{31.21} & 81.47 & 74.17 \\
L9-3 & 36.13 & 112.45 & 53.53 & 69.50 & 33.33 & \textbf{31.21} & 82.49 & 73.75 \\
L9-4 & 35.80 & 106.35 & 54.57 & 69.35 & 28.53 & 30.96 & 80.81 & 77.50 \\
L9-5 & 35.99 & \textbf{104.11} & 54.31 & \textbf{69.85} & 29.38 & 30.09 & 80.93 & 71.67 \\
L9-6 & \textbf{35.57} & 107.21 & 54.26 & 69.67 & 32.49 & 30.65 & 78.82 & 77.08 \\
L9-7 & 35.91 & 109.95 & 54.30 & 69.11 & 31.36 & 30.86 & 80.32 & \textbf{78.12} \\
L9-8 & 36.46 & 115.83 & \textbf{54.65} & 69.28 & 32.49 & 29.64 & 77.45 & 75.00 \\
L9-9 & 35.90 & 110.34 & 53.44 & 69.00 & 33.33 & 30.16 & \textbf{82.80} & 74.17 \\
\midrule
\multicolumn{9}{l}{\textit{L9 row: (weight multiplier, private/common width ratio, graph depth)}} \\
\multicolumn{3}{l}{L9-1: (0.5$\times$, 1:2, 1)} & \multicolumn{3}{l}{L9-2: (0.5$\times$, 1:1, 2)} & \multicolumn{3}{l}{L9-3: (0.5$\times$, 2:1, 3)} \\
\multicolumn{3}{l}{L9-4: (1$\times$, 1:2, 2)} & \multicolumn{3}{l}{L9-5: (1$\times$, 1:1, 3)} & \multicolumn{3}{l}{L9-6: (1$\times$, 2:1, 1)} \\
\multicolumn{3}{l}{L9-7: (2$\times$, 1:2, 3)} & \multicolumn{3}{l}{L9-8: (2$\times$, 1:1, 1)} & \multicolumn{3}{l}{L9-9: (2$\times$, 2:1, 2)} \\
\bottomrule\end{tabular}\endgroup\end{table}

%% file: tables/individual_expert_performance.tex
\begin{table}[!htb]
\centering
\caption{\textbf{Individual pool-entry performance.} Full-input test scores; each entry uses its own features and task head. Metrics and units follow Table~\ref{tab:multiseed_overall}.}
\label{tab:individual_expert_performance}
\begingroup
\small
\setlength{\tabcolsep}{3pt}
\renewcommand{\arraystretch}{1.15}
\begin{tabular*}{\linewidth}{@{\extracolsep{\fill}}l*{7}{r}@{}}
\multicolumn{8}{@{}l}{\textbf{(a) Shared encoder pool}} \\[4pt]
\toprule
Model & SIMS & MOSI & MELD & URF & Daily & WS & CirCor \\
 & MAE $\downarrow$ & MAE $\downarrow$ & W-F1 $\uparrow$ & Acc. $\uparrow$ & Acc. $\uparrow$ & Acc. $\uparrow$ & BAcc. $\uparrow$ \\
\midrule
Qwen3-1.7B & 64.43 & 144.78 & 50.99 & 66.60 & 28.25 & 33.47 & 46.90 \\
SmolLM2-1.7B & 65.84 & 125.47 & 51.93 & 68.01 & 35.03 & 32.22 & 52.62 \\
XLM-RoBERTa-large & 51.14 & 128.34 & 54.93 & 60.16 & 32.20 & 28.66 & 60.16 \\
\addlinespace[4pt]
Whisper-large-v3 & 46.91 & 142.29 & 44.99 & 69.32 & 32.20 & 29.92 & 79.05 \\
WavLM-base-plus & 50.42 & 133.07 & 45.58 & 63.28 & 26.55 & 28.87 & 80.32 \\
HuBERT-large & 54.27 & 142.42 & 43.50 & 60.06 & 34.46 & 27.82 & 81.11 \\
\addlinespace[4pt]
DINOv2-giant & 59.37 & 250.63 & 29.98 & 52.82 & 30.51 & 28.45 & 87.38 \\
SigLIP2-giant & 47.10 & 358.27 & 35.95 & 56.84 & 32.77 & 27.82 & 81.83 \\
VideoMAE-base & 54.61 & 227.03 & 33.34 & 53.82 & 31.64 & 29.08 & 82.54 \\
\bottomrule
\end{tabular*}
\par\vspace{10pt}
\begin{tabular*}{\linewidth}{@{\extracolsep{\fill}}lr@{\hspace{20pt}}lr@{\hspace{20pt}}lr@{}}
\multicolumn{6}{@{}l}{\textbf{(b) ADNI pool: balanced accuracy (\%)}} \\[4pt]
\toprule
MRI & Score & PET & Score & Clinical & Score \\
\midrule
DINOv2-giant & 53.75 & DINOv2-giant & 70.00 & ADAS & 67.50 \\
SigLIP2-giant & 67.50 & SigLIP2-giant & 62.50 & FAQ & 70.00 \\
FreeSurfer & 66.25 & SUVR & 78.75 & CDR & 73.75 \\
\bottomrule
\end{tabular*}
\endgroup
\end{table}

%% file: tables/individual_expert_circor_views.tex
\begin{table}[!htb]
\centering
\caption{\textbf{Individual CirCor entries across available-site subsets.} Balanced accuracy (\%); MM averages the 14 non-full states. Site abbreviations follow Table~\ref{tab:dataset_protocol}.}
\label{tab:individual_expert_circor_views}
\begingroup
\small
\setlength{\tabcolsep}{3pt}
\renewcommand{\arraystretch}{1.13}
\begin{tabular*}{\linewidth}{@{\extracolsep{\fill}}l*{5}{r}@{}}
\multicolumn{6}{@{}l}{\textbf{(a) Four or three available sites}} \\[4pt]
\toprule
Model & Full & \shortstack{AV+PV\\+TV} & \shortstack{AV+PV\\+MV} & \shortstack{AV+TV\\+MV} & \shortstack{PV+TV\\+MV} \\
\midrule
Qwen3-1.7B & 46.90 & 46.90 & 46.90 & 46.90 & 46.90 \\
SmolLM2-1.7B & 52.62 & 52.62 & 52.62 & 52.62 & 52.62 \\
XLM-RoBERTa-large & 60.16 & 60.16 & 60.16 & 60.16 & 60.16 \\
\addlinespace[3pt]
Whisper-large-v3 & 79.05 & 76.11 & 50.00 & 62.06 & 74.21 \\
WavLM-base-plus & 80.32 & 55.79 & 50.00 & 50.00 & 58.73 \\
HuBERT-large & 81.11 & 50.00 & 55.71 & 61.11 & 50.00 \\
\addlinespace[3pt]
DINOv2-giant & 87.38 & 70.00 & 60.71 & 57.14 & 55.00 \\
SigLIP2-giant & 81.83 & 50.71 & 50.00 & 50.00 & 77.78 \\
VideoMAE-base & 82.54 & 77.30 & 50.00 & 73.41 & 50.71 \\
\bottomrule
\end{tabular*}
\par\vspace{9pt}
\begin{tabular*}{\linewidth}{@{\extracolsep{\fill}}l*{6}{r}@{}}
\multicolumn{7}{@{}l}{\textbf{(b) Two available sites}} \\[4pt]
\toprule
Model & AV+PV & AV+TV & AV+MV & PV+TV & PV+MV & TV+MV \\
\midrule
Qwen3-1.7B & 46.90 & 46.90 & 46.90 & 46.90 & 46.90 & 46.90 \\
SmolLM2-1.7B & 52.62 & 52.62 & 52.62 & 52.62 & 52.62 & 52.62 \\
XLM-RoBERTa-large & 60.16 & 60.16 & 60.16 & 60.16 & 60.16 & 60.16 \\
\addlinespace[3pt]
Whisper-large-v3 & 50.00 & 61.11 & 50.00 & 79.76 & 50.00 & 72.06 \\
WavLM-base-plus & 50.00 & 50.00 & 50.00 & 50.71 & 50.00 & 50.00 \\
HuBERT-large & 50.00 & 72.54 & 60.40 & 50.00 & 50.00 & 50.00 \\
\addlinespace[3pt]
DINOv2-giant & 50.00 & 50.00 & 50.00 & 50.00 & 50.00 & 50.00 \\
SigLIP2-giant & 50.00 & 50.00 & 50.00 & 52.14 & 50.00 & 50.00 \\
VideoMAE-base & 50.00 & 65.87 & 50.00 & 50.00 & 50.00 & 50.00 \\
\bottomrule
\end{tabular*}
\par\vspace{9pt}
\begin{tabular*}{\linewidth}{@{\extracolsep{\fill}}l*{5}{r}@{}}
\multicolumn{6}{@{}l}{\textbf{(c) One available site and Mean-Missing}} \\[4pt]
\toprule
Model & AV & PV & TV & MV & MM \\
\midrule
Qwen3-1.7B & 46.90 & 46.90 & 46.90 & 46.90 & 46.90 \\
SmolLM2-1.7B & 52.62 & 52.62 & 52.62 & 52.62 & 52.62 \\
XLM-RoBERTa-large & 60.16 & 60.16 & 60.16 & 60.16 & 60.16 \\
\addlinespace[3pt]
Whisper-large-v3 & 50.00 & 50.00 & 71.98 & 50.00 & 60.52 \\
WavLM-base-plus & 50.00 & 50.00 & 50.00 & 50.00 & 51.09 \\
HuBERT-large & 50.00 & 50.00 & 50.00 & 50.00 & 53.55 \\
\addlinespace[3pt]
DINOv2-giant & 50.00 & 50.00 & 50.00 & 50.00 & 53.06 \\
SigLIP2-giant & 50.00 & 50.00 & 50.00 & 50.00 & 52.19 \\
VideoMAE-base & 50.00 & 50.00 & 50.00 & 50.00 & 54.81 \\
\bottomrule
\end{tabular*}
\endgroup
\end{table}

%% file: tables/model_pool_members.tex
\begin{table}[!htb]
\centering
\caption{\textbf{Nested \luna pools.} Entries refer to Table~\ref{tab:graph_expert_pools}: panel (a) for the first seven datasets and panel (b) for ADNI. Entries are listed in inventory order; each nine-model pool contains entries 1--9.}
\label{tab:model_pool_members}
\begingroup
\small
\setlength{\tabcolsep}{13pt}
\renewcommand{\arraystretch}{1.10}
\begin{tabular}{lll}
\toprule
\rowcolor{TblHeader}
\textbf{Dataset} & \textbf{3-member pool} & \textbf{6-member pool} \\
\midrule
SIMS & 3, 5, 9 & 2, 3, 5, 6, 8, 9 \\
MOSI & 1, 5, 8 & 1, 2, 4, 5, 8, 9 \\
MELD & 2, 5, 7 & 1, 2, 4, 5, 7, 9 \\
URF & 2, 6, 7 & 2, 3, 5, 6, 7, 9 \\
\addlinespace[3pt]
Daily & 1, 6, 7 & 1, 2, 6, 7, 8, 9 \\
WS & 3, 6, 9 & 1, 3, 5, 6, 7, 9 \\
\addlinespace[3pt]
CirCor & 1, 4, 6 & 1, 3, 4, 5, 6, 8 \\
ADNI & 3, 4, 9 & 2, 3, 4, 5, 7, 9 \\
\bottomrule
\end{tabular}
\endgroup
\end{table}

%% file: tables/refresh_resource_pool.tex
\begin{table}[!htb]
\centering
\caption{\textbf{\luna at different pool sizes.} MM; units follow Table~\ref{tab:multiseed_overall}. Bold marks the best count within each dataset. Relative changes use the three-model pool as reference, with positive values indicating improvement.}
\label{tab:refresh_resource_pool}
\begingroup
\fontsize{8.1}{10}\selectfont
\setlength{\tabcolsep}{8pt}
\renewcommand{\arraystretch}{1.12}
\begin{tabular}{@{}lrrrrr@{}}
\toprule
\rowcolor{TblHeader}
Dataset & 3 models & 6 models & 9 models & $3\!\to\!6$ (\%) & $3\!\to\!9$ (\%) \\
\midrule
SIMS$\downarrow$ & 46.87 & 42.16 & \textbf{34.21} & +10.06 & +27.02 \\
MOSI$\downarrow$ & 122.63 & 105.04 & \textbf{98.34} & +14.34 & +19.81 \\
MELD$\uparrow$ & 52.81 & 54.50 & \textbf{56.08} & +3.21 & +6.20 \\
URF$\uparrow$ & 65.63 & 67.45 & \textbf{70.61} & +2.78 & +7.59 \\
Daily$\uparrow$ & 35.59 & 37.85 & \textbf{38.42} & +6.35 & +7.95 \\
WS$\uparrow$ & 29.11 & 30.86 & \textbf{33.82} & +5.99 & +16.16 \\
CirCor$\uparrow$ & 78.78 & 78.40 & \textbf{82.56} & -0.48 & +4.80 \\
ADNI$\uparrow$ & 82.71 & 76.67 & \textbf{84.17} & -7.30 & +1.77 \\
\bottomrule
\end{tabular}
\endgroup
\end{table}

%% file: tables/refresh_resource_replacement.tex
\begin{table}[!htb]
\centering
\caption{\textbf{Nine-model pool replacements.} Replace 1--3 cumulatively changes model membership, followed by retraining and task-specific retuning. Scores are MM with Table~\ref{tab:multiseed_overall} metrics and units. $\Delta_r$ is the direction-corrected relative MM change from the original pool after $r$ cumulative replacements.}
\label{tab:refresh_resource_replacement}
\begingroup
\fontsize{8.1}{10}\selectfont
\setlength{\tabcolsep}{4pt}
\renewcommand{\arraystretch}{1.12}
\begin{tabular}{@{}lrrrrrrr@{}}
\toprule
\rowcolor{TblHeader}
Dataset & Original & Replace 1 & Replace 2 & Replace 3 & $\Delta_1$ (\%) & $\Delta_2$ (\%) & $\Delta_3$ (\%) \\
\midrule
SIMS & 34.43 & 35.19 & 38.72 & 38.10 & -2.19 & -12.44 & -10.66 \\
MOSI & 96.76 & 98.14 & 108.19 & 107.57 & -1.42 & -11.81 & -11.17 \\
MELD & 55.98 & 55.88 & 53.57 & 53.73 & -0.18 & -4.30 & -4.02 \\
URF & 71.11 & 70.72 & 67.99 & 68.43 & -0.54 & -4.39 & -3.77 \\
Daily & 37.85 & 37.01 & 37.85 & 37.57 & -2.24 & +0.00 & -0.75 \\
WS & 33.05 & 32.64 & 32.71 & 32.71 & -1.27 & -1.05 & -1.05 \\
CirCor & 83.02 & 82.22 & 80.98 & 82.37 & -0.96 & -2.45 & -0.78 \\
ADNI & 83.75 & 83.12 & 83.12 & 82.08 & -0.75 & -0.75 & -1.99 \\
\bottomrule
\end{tabular}
\endgroup
\end{table}

%% file: tables/refresh_resource_parameters.tex
\begin{table}[!htb]
\centering
\caption{\textbf{Trainable collaboration parameters (millions).} Counts match Figure~\ref{fig:resource_parameters_all} and exclude frozen backbones, feature encoders, and separately fitted frozen task heads. Mean ens.\ optimizes no collaboration parameters; Omni models are omitted.}
\label{tab:refresh_resource_parameters}
\begingroup
\fontsize{7.7}{10}\selectfont
\setlength{\tabcolsep}{3pt}
\renewcommand{\arraystretch}{1.12}
\begin{tabular}{@{}lrrrrrrrr@{}}
\toprule
\rowcolor{TblHeader}
Method & SIMS & MOSI & MELD & URF & Daily & WS & CirCor & ADNI \\
\midrule
Mean ens. & 0 & 0 & 0 & 0 & 0 & 0 & 0 & 0 \\
Logit stack & 0.0102 & 0.0102 & 0.0410 & 0.0154 & 0.0102 & 0.0102 & 0.0430 & 0.0154 \\
Concat MLP & 6.1901 & 6.1901 & 6.1932 & 6.1906 & 0.5978 & 0.5978 & 16.8582 & 0.2380 \\
Attn. pool & 21.2199 & 21.2199 & 21.2214 & 21.2201 & 0.9608 & 0.9608 & 52.3718 & 0.7802 \\
Soft MoE & 21.2219 & 21.2219 & 21.2358 & 21.2242 & 0.9628 & 0.9628 & 52.3851 & 0.7843 \\
FoE & 0.0191 & 0.0191 & 0.0268 & 0.0204 & 0.0191 & 0.0191 & 0.0273 & 0.0204 \\
Set Trans. & 0.7892 & 0.7892 & 0.7900 & 0.7894 & 0.6418 & 0.6418 & 1.3861 & 0.5522 \\
CLARGA & 0.4391 & 0.4391 & 0.4399 & 0.4392 & 0.2916 & 0.2916 & 1.0360 & 0.2021 \\
GIML & 8.1434 & 8.1434 & 8.1526 & 8.1449 & 1.9118 & 2.5728 & 19.6885 & 2.2154 \\
MARS & 8.0437 & 8.0437 & 8.0467 & 8.0442 & 6.6179 & 6.6510 & 10.7647 & 6.5618 \\
LIMSSR & 4.2492 & 4.2492 & 4.2508 & 4.2495 & 2.4397 & 2.8584 & 5.0976 & 2.7725 \\
SimMLM & 0.3983 & 0.3983 & 0.4007 & 0.3987 & 0.2176 & 0.2509 & 0.4656 & 0.1615 \\
ShaSpec & 0.5146 & 0.5146 & 0.5153 & 0.5147 & 0.3172 & 0.3671 & 0.6145 & 0.2775 \\
Flex-MoE & 0.8157 & 0.8157 & 0.8165 & 0.8158 & 0.4673 & 0.6682 & 2.0268 & 0.5786 \\
\rowcolor{TblOurs}
Lluna & 1.0743 & 3.7523 & 2.6690 & 1.2852 & 0.9035 & 3.0790 & 7.2514 & 0.9374 \\
\bottomrule
\end{tabular}
\endgroup
\end{table}

%% file: tables/refresh_modality_complete.tex
\begin{table}[!htbp]
\centering
\begin{minipage}[t]{.54\linewidth}
\vspace{0pt}
\centering
\captionsetup{skip=5pt}
\caption{\textbf{Modality contributions.} Single-checkpoint Shapley utilities across test examples. Mean/median use negative-absolute-error units for SIMS/MOSI and nats otherwise; Positive/Largest are percentages of test examples.}
\label{tab:refresh_modality_sources}
\begingroup
\fontsize{8}{9}\selectfont
\setlength{\tabcolsep}{3pt}
\renewcommand{\arraystretch}{1.12}
\begin{tabular}{@{}llrrrr@{}}
\toprule
\rowcolor{TblHeader}
Task & Modality & Mean & Median & Positive (\%) & Largest (\%) \\
\midrule
SIMS & A & $+0.152$ & $+0.187$ & $72.24$ & $38.88$ \\
 & T & $+0.128$ & $+0.161$ & $67.50$ & $32.40$ \\
 & V & $+0.076$ & $+0.085$ & $63.15$ & $28.72$ \\
\addlinespace[3pt]
MOSI & A & $+0.171$ & $+0.168$ & $68.37$ & $33.97$ \\
 & T & $+0.276$ & $+0.313$ & $73.91$ & $53.35$ \\
 & V & $-0.054$ & $-0.037$ & $40.52$ & $12.68$ \\
\addlinespace[3pt]
MELD & A & $+0.298$ & $+0.479$ & $75.00$ & $25.23$ \\
 & T & $+0.495$ & $+0.714$ & $79.33$ & $60.47$ \\
 & V & $+0.150$ & $+0.264$ & $70.97$ & $14.30$ \\
\addlinespace[3pt]
URF & A & $+0.070$ & $+0.113$ & $67.10$ & $22.03$ \\
 & T & $+0.145$ & $+0.335$ & $74.55$ & $62.37$ \\
 & V & $+0.008$ & $+0.001$ & $50.50$ & $15.59$ \\
\addlinespace[3pt]
Daily & A & $-0.335$ & $-0.102$ & $46.89$ & $45.76$ \\
 & V & $+0.080$ & $+0.033$ & $59.89$ & $54.24$ \\
\addlinespace[3pt]
WS & A & $+0.013$ & $+0.002$ & $53.35$ & $25.73$ \\
 & S & $-0.007$ & $+0.035$ & $53.56$ & $53.14$ \\
 & V & $+0.013$ & $+0.002$ & $52.93$ & $21.13$ \\
\addlinespace[3pt]
CirCor & AV & $-0.222$ & $+0.075$ & $60.23$ & $7.95$ \\
 & MV & $-0.061$ & $+0.168$ & $70.45$ & $6.82$ \\
 & PV & $+0.149$ & $+0.217$ & $84.09$ & $23.86$ \\
 & TV & $+0.256$ & $+0.267$ & $87.50$ & $61.36$ \\
\addlinespace[3pt]
ADNI & C & $+0.231$ & $+0.291$ & $90.00$ & $70.00$ \\
 & M & $+0.094$ & $+0.157$ & $72.00$ & $12.00$ \\
 & P & $+0.178$ & $+0.256$ & $84.00$ & $18.00$ \\
\bottomrule
\end{tabular}
\endgroup
\end{minipage}
\hfill
\begin{minipage}[t]{.44\linewidth}
\vspace{0pt}
\centering
\captionsetup{skip=5pt}
\caption{\textbf{Complete pair interactions.} Positive interactions increase joint utility above the additive reference; negative values are subadditive. Units follow Table~\ref{tab:refresh_modality_sources}; synergy is the share of examples with positive interaction.}
\label{tab:refresh_modality_pairs}
\begingroup
\fontsize{8}{9}\selectfont
\setlength{\tabcolsep}{3pt}
\renewcommand{\arraystretch}{1.12}
\begin{tabular}{@{}llrrr@{}}
\toprule
\rowcolor{TblHeader}
Task & Pair & Mean & Median & Synergy (\%) \\
\midrule
SIMS & A+T & $-0.141$ & $-0.149$ & $33.46$ \\
 & A+V & $-0.069$ & $-0.078$ & $39.26$ \\
 & T+V & $-0.051$ & $-0.065$ & $41.30$ \\
\addlinespace[3pt]
MOSI & A+T & $-0.289$ & $-0.324$ & $26.97$ \\
 & A+V & $+0.022$ & $+0.006$ & $51.31$ \\
 & T+V & $+0.027$ & $+0.021$ & $56.27$ \\
\addlinespace[3pt]
MELD & A+T & $-0.449$ & $-0.576$ & $22.24$ \\
 & A+V & $-0.219$ & $-0.312$ & $34.28$ \\
 & T+V & $-0.205$ & $-0.339$ & $34.47$ \\
\addlinespace[3pt]
URF & A+T & $-0.064$ & $-0.081$ & $39.34$ \\
 & A+V & $+0.004$ & $-0.030$ & $40.14$ \\
 & T+V & $+0.036$ & $-0.013$ & $45.98$ \\
\addlinespace[3pt]
Daily & A+V & $+0.160$ & $+0.061$ & $59.89$ \\
\addlinespace[3pt]
WS & A+S & $+0.029$ & $+0.003$ & $52.30$ \\
 & A+V & $-0.015$ & $-0.002$ & $46.03$ \\
 & S+V & $+0.029$ & $+0.003$ & $52.30$ \\
\addlinespace[3pt]
CirCor & AV+MV & $+0.156$ & $-0.044$ & $47.73$ \\
 & AV+PV & $+0.235$ & $-0.113$ & $44.32$ \\
 & AV+TV & $+0.211$ & $-0.083$ & $42.05$ \\
 & MV+PV & $-0.060$ & $-0.210$ & $32.95$ \\
 & MV+TV & $-0.115$ & $-0.231$ & $25.00$ \\
 & PV+TV & $-0.170$ & $-0.249$ & $20.45$ \\
\addlinespace[3pt]
ADNI & C+M & $-0.002$ & $-0.206$ & $38.00$ \\
 & C+P & $-0.174$ & $-0.369$ & $22.00$ \\
 & M+P & $-0.049$ & $-0.195$ & $38.00$ \\
\bottomrule
\end{tabular}
\endgroup
\end{minipage}
\end{table}
\FloatBarrier

%% file: tables/sampling_budget.tex
\begin{table}[!htb]
\centering
\caption{\textbf{Complete chain-budget results.} MM is weighted F1 for MELD and accuracy for URF, in percent. $K$ counts sampled maximal chains; pairs are the mean distinct PCC comparisons per training step.}
\label{tab:sampling_budget}
\begingroup\small
\setlength{\tabcolsep}{8pt}
\renewcommand{\arraystretch}{1.10}
\begin{tabular}{lrrr}
\toprule
\rowcolor{TblHeader} \textbf{Dataset} & \textbf{$K$} & \textbf{PCC pairs} & \textbf{MM (\%)} \\
\midrule
MELD & 1 & 2.00 & 54.35 \\
 & 2 & 3.57 & 53.41 \\
 & 3 & 4.71 & 54.44 \\
\addlinespace[4pt]
URF & 1 & 2.00 & 69.55 \\
 & 2 & 3.48 & 69.13 \\
 & 3 & 4.64 & 69.08 \\
\bottomrule
\end{tabular}
\endgroup
\end{table}

%% file: tables/refresh_sampling_operating.tex
\begin{table}[!htb]
\captionsetup{skip=4pt}
\centering
\caption{\textbf{Sparse versus full training.} Full denotes full-lattice training. Scores: MM in Table~\ref{tab:multiseed_overall} units; bold marks better. $\Delta$: direction-corrected relative MM change (positive is better). \# PCC: mean distinct comparisons per step (Full $\to$ Sparse). All nonempty states are tested.}
\label{tab:refresh_sparse_operating}
\begingroup\small
\setlength{\tabcolsep}{5pt}
\renewcommand{\arraystretch}{1.10}
\begin{tabular}{@{}lrrrc@{}}
\toprule
\rowcolor{TblHeader}
\textbf{Dataset} & \textbf{Full} & \textbf{Sparse} & \textbf{$\Delta$ (\%)} & \textbf{\# PCC} \\
\midrule
SIMS & $35.87$ & $\mathbf{35.37}$ & $+1.40$ & $9 \to 2$ \\
MOSI & $100.99$ & $\mathbf{100.08}$ & $+0.91$ & $9 \to 3.48$ \\
MELD & $55.28$ & $\mathbf{55.55}$ & $+0.49$ & $9 \to 2$ \\
URF & $\mathbf{70.82}$ & $70.09$ & $-1.04$ & $9 \to 2$ \\
\addlinespace[1.5pt]
Daily & $\mathbf{36.44}$ & $36.16$ & $-0.78$ & $2 \to 1$ \\
WS & $\mathbf{33.05}$ & $33.02$ & $-0.11$ & $9 \to 7$ \\
\addlinespace[1.5pt]
CirCor & $\mathbf{82.32}$ & $79.97$ & $-2.85$ & $28 \to 3$ \\
ADNI & $\mathbf{83.75}$ & $83.54$ & $-0.25$ & $9 \to 5$ \\
\bottomrule
\end{tabular}
\endgroup
\end{table}

%% file: tables/fixed_anchor_combined.tex
\centering
\begingroup\fontsize{9}{10.8}\selectfont
\setlength{\tabcolsep}{7pt}
\renewcommand{\arraystretch}{1.22}
\begin{tabular}{lr*{2}{r}*{2}{>{\columncolor{TblOurs}}r}}
\toprule
\rowcolor{TblHeader} & & \multicolumn{2}{c}{Baselines} & \multicolumn{2}{c}{Lluna} \\
\rowcolor{TblHeader} Dataset & $N_{\mathrm{test}}$ & Full & Min. & Full & Min. \\
SIMS & 1034 & +10.07 & -0.28 & +10.93 & +4.48 \\
MOSI & 686 & +16.65 & -0.08 & +23.23 & +11.13 \\
MELD & 2608 & +16.29 & +1.05 & +19.79 & +8.93 \\
URF & 994 & +9.31 & -2.36 & +14.05 & +3.29 \\
Daily & 177 & +0.77 & -2.10 & +4.24 & +5.93 \\
WS & 478 & +2.14 & -1.20 & +3.70 & +2.86 \\
CirCor & 88 & -17.09 & -21.59 & +5.97 & +1.70 \\
ADNI & 50 & +9.05 & +1.24 & +15.33 & +3.33 \\
\bottomrule
\end{tabular}
\endgroup

%% file: tables/refresh_diag_transitions.tex
\begin{table}[!htb]
\centering
\caption{\textbf{Fixed-anchor prediction transitions.} H/R are harmed/rescued percentages of all test examples; Net $=\mathrm{R}-\mathrm{H}$ (pp). Independent rounding can yield 0.01-pp discrepancies. All tasks use ordinary accuracy, including auxiliary Acc-2(has0) for SIMS/MOSI. Conventional averages four methods equally.}
\label{tab:refresh_diag_transitions}
\begingroup\fontsize{8.5}{10.2}\selectfont
\setlength{\tabcolsep}{3.4pt}
\renewcommand{\arraystretch}{1.13}
\begin{tabular}{@{}lr*{6}{r}@{}}
\toprule
\rowcolor{TblHeader} & & \multicolumn{3}{c}{Conventional mean} & \multicolumn{3}{c}{Lluna} \\
\rowcolor{TblHeader} Dataset & Removed & H (\%) & R (\%) & Net (pp) & H (\%) & R (\%) & Net (pp) \\
SIMS & 0 & 8.09 & 18.05 & +9.96 & 7.03 & 17.96 & +10.93 \\
 & 1 & 11.67 & 16.55 & +4.88 & 7.41 & 16.51 & +9.09 \\
 & 2 & 16.53 & 14.35 & -2.18 & 8.74 & 13.22 & +4.48 \\
\addlinespace[2.5pt]
MOSI & 0 & 10.22 & 30.49 & +20.27 & 8.79 & 32.02 & +23.23 \\
 & 1 & 16.33 & 25.24 & +8.90 & 9.50 & 30.98 & +21.48 \\
 & 2 & 22.18 & 19.38 & -2.81 & 12.88 & 24.00 & +11.13 \\
\addlinespace[2.5pt]
MELD & 0 & 9.88 & 25.99 & +16.11 & 7.17 & 26.96 & +19.79 \\
 & 1 & 13.53 & 21.57 & +8.04 & 7.29 & 23.60 & +16.31 \\
 & 2 & 18.84 & 14.65 & -4.18 & 7.89 & 16.82 & +8.93 \\
\addlinespace[2.5pt]
URF & 0 & 11.33 & 22.46 & +11.13 & 8.69 & 22.74 & +14.05 \\
 & 1 & 15.43 & 19.87 & +4.43 & 10.63 & 20.39 & +9.76 \\
 & 2 & 20.23 & 16.83 & -3.40 & 13.48 & 16.77 & +3.29 \\
\addlinespace[2.5pt]
Daily & 0 & 17.16 & 17.44 & +0.28 & 16.95 & 21.19 & +4.24 \\
 & 1 & 17.87 & 14.69 & -3.18 & 14.12 & 20.06 & +5.93 \\
\addlinespace[2.5pt]
WS & 0 & 14.07 & 16.14 & +2.07 & 14.23 & 17.92 & +3.70 \\
 & 1 & 11.99 & 12.06 & +0.07 & 13.08 & 17.12 & +4.04 \\
 & 2 & 10.04 & 8.16 & -1.88 & 10.53 & 13.39 & +2.86 \\
\addlinespace[2.5pt]
CirCor & 0 & 23.72 & 8.66 & -15.06 & 3.69 & 9.66 & +5.97 \\
 & 1 & 24.81 & 8.40 & -16.41 & 4.36 & 10.32 & +5.97 \\
 & 2 & 26.70 & 8.29 & -18.42 & 5.02 & 10.04 & +5.02 \\
 & 3 & 30.75 & 8.45 & -22.30 & 7.95 & 9.66 & +1.70 \\
\addlinespace[2.5pt]
ADNI & 0 & 7.33 & 16.67 & +9.33 & 3.33 & 18.67 & +15.33 \\
 & 1 & 8.92 & 15.25 & +6.33 & 4.33 & 17.67 & +13.33 \\
 & 2 & 10.83 & 10.50 & -0.33 & 5.33 & 8.67 & +3.33 \\
\bottomrule
\end{tabular}
\endgroup
\end{table}

%% file: tables/supplement_anchor_summary.tex
\begin{table}[!htbp]
\centering
\captionsetup{skip=4pt}
\caption{\textbf{Cooperation gains after missing-input training.} Net accuracy gain over a fixed surviving model, averaged equally over I$^2$MoE, CLARGA, and Set Transformer. Full retains all modalities; Minimal retains one variable modality. Drop is Full minus Minimal, in percentage points.}
\label{tab:supplement_anchor_summary}
\begingroup\fontsize{8.5}{10.2}\selectfont
\setlength{\tabcolsep}{12pt}
\renewcommand{\arraystretch}{1.13}
\begin{tabular}{@{}lrrr@{}}
\toprule
\rowcolor{TblHeader}Dataset & Full & Minimal & Drop \\
\midrule
SIMS & +10.22 & +2.27 & 7.95 \\
MOSI & +11.81 & +3.56 & 8.25 \\
MELD & +16.53 & +8.02 & 8.50 \\
URF & +6.88 & -0.97 & 7.85 \\
\addlinespace[3pt]
Daily & +1.41 & -0.66 & 2.07 \\
WS & +2.23 & -0.28 & 2.51 \\
\addlinespace[3pt]
CirCor & -19.79 & -20.64 & 0.85 \\
ADNI & +8.67 & +3.33 & 5.33 \\
\bottomrule
\end{tabular}
\endgroup
\end{table}

%% file: tables/supplement_anchor_methods.tex
\begin{table}[!htbp]
\centering
\captionsetup{skip=4pt}
\caption{\textbf{Fixed-anchor transitions on SIMS and MOSI.} Harmed and Rescued are percentages of test examples; Net is Rescued minus Harmed, in percentage points. Missing counts removed modalities.}
\label{tab:supplement_anchor_methods_1}
\begingroup\fontsize{8.5}{10.2}\selectfont
\setlength{\tabcolsep}{3pt}\renewcommand{\arraystretch}{1.13}
\begin{minipage}[t]{.485\linewidth}
\vspace{0pt}\centering
\textbf{SIMS}\par\smallskip
\begin{tabular}{@{}lrrrr@{}}
\toprule
\rowcolor{TblHeader}Method & Missing & Harmed & Rescued & Net \\
\midrule
I$^2$MoE & 0 & 7.16 & 18.57 & +11.41 \\
 & 1 & 8.85 & 16.84 & +8.00 \\
 & 2 & 10.83 & 13.44 & +2.61 \\
\addlinespace[3pt]
CLARGA & 0 & 8.00 & 18.15 & +10.16 \\
 & 1 & 8.98 & 17.20 & +8.22 \\
 & 2 & 11.54 & 14.18 & +2.64 \\
\addlinespace[3pt]
Set Trans. & 0 & 8.45 & 17.54 & +9.09 \\
 & 1 & 9.48 & 15.93 & +6.45 \\
 & 2 & 11.73 & 13.28 & +1.55 \\
\bottomrule
\end{tabular}
\end{minipage}
\hfill
\begin{minipage}[t]{.485\linewidth}
\vspace{0pt}\centering
\textbf{MOSI}\par\smallskip
\begin{tabular}{@{}lrrrr@{}}
\toprule
\rowcolor{TblHeader}Method & Missing & Harmed & Rescued & Net \\
\midrule
I$^2$MoE & 0 & 13.36 & 26.68 & +13.31 \\
 & 1 & 13.51 & 25.17 & +11.66 \\
 & 2 & 16.18 & 20.46 & +4.28 \\
\addlinespace[3pt]
CLARGA & 0 & 13.02 & 26.63 & +13.61 \\
 & 1 & 14.09 & 25.12 & +11.03 \\
 & 2 & 15.94 & 19.19 & +3.26 \\
\addlinespace[3pt]
Set Trans. & 0 & 16.04 & 24.54 & +8.50 \\
 & 1 & 17.10 & 22.69 & +5.59 \\
 & 2 & 16.28 & 19.44 & +3.16 \\
\bottomrule
\end{tabular}
\end{minipage}
\endgroup
\end{table}

\begin{table}[!htbp]
\centering
\captionsetup{skip=4pt}
\caption{\textbf{Fixed-anchor transitions on MELD and URF.} Harmed and Rescued are percentages of test examples; Net is Rescued minus Harmed, in percentage points. Missing counts removed modalities.}
\label{tab:supplement_anchor_methods_2}
\begingroup\fontsize{8.5}{10.2}\selectfont
\setlength{\tabcolsep}{3pt}\renewcommand{\arraystretch}{1.13}
\begin{minipage}[t]{.485\linewidth}
\vspace{0pt}\centering
\textbf{MELD}\par\smallskip
\begin{tabular}{@{}lrrrr@{}}
\toprule
\rowcolor{TblHeader}Method & Missing & Harmed & Rescued & Net \\
\midrule
I$^2$MoE & 0 & 7.41 & 25.09 & +17.68 \\
 & 1 & 7.91 & 22.25 & +14.34 \\
 & 2 & 9.00 & 16.72 & +7.72 \\
\addlinespace[3pt]
CLARGA & 0 & 7.62 & 23.72 & +16.10 \\
 & 1 & 8.05 & 21.47 & +13.42 \\
 & 2 & 8.63 & 16.72 & +8.09 \\
\addlinespace[3pt]
Set Trans. & 0 & 7.99 & 23.79 & +15.80 \\
 & 1 & 8.44 & 21.26 & +12.82 \\
 & 2 & 8.30 & 16.55 & +8.26 \\
\bottomrule
\end{tabular}
\end{minipage}
\hfill
\begin{minipage}[t]{.485\linewidth}
\vspace{0pt}\centering
\textbf{URF}\par\smallskip
\begin{tabular}{@{}lrrrr@{}}
\toprule
\rowcolor{TblHeader}Method & Missing & Harmed & Rescued & Net \\
\midrule
I$^2$MoE & 0 & 12.68 & 20.19 & +7.51 \\
 & 1 & 13.92 & 19.32 & +5.40 \\
 & 2 & 17.40 & 16.53 & -0.87 \\
\addlinespace[3pt]
CLARGA & 0 & 13.58 & 19.99 & +6.41 \\
 & 1 & 14.64 & 18.63 & +3.99 \\
 & 2 & 17.00 & 15.63 & -1.38 \\
\addlinespace[3pt]
Set Trans. & 0 & 13.62 & 20.32 & +6.71 \\
 & 1 & 15.11 & 19.06 & +3.96 \\
 & 2 & 17.20 & 16.53 & -0.67 \\
\bottomrule
\end{tabular}
\end{minipage}
\endgroup
\end{table}

\begin{table}[!htbp]
\centering
\captionsetup{skip=4pt}
\caption{\textbf{Fixed-anchor transitions on Daily and WS.} Harmed and Rescued are percentages of test examples; Net is Rescued minus Harmed, in percentage points. Missing counts removed modalities.}
\label{tab:supplement_anchor_methods_3}
\begingroup\fontsize{8.5}{10.2}\selectfont
\setlength{\tabcolsep}{3pt}\renewcommand{\arraystretch}{1.13}
\begin{minipage}[t]{.485\linewidth}
\vspace{0pt}\centering
\textbf{Daily}\par\smallskip
\begin{tabular}{@{}lrrrr@{}}
\toprule
\rowcolor{TblHeader}Method & Missing & Harmed & Rescued & Net \\
\midrule
I$^2$MoE & 0 & 16.67 & 15.82 & -0.85 \\
 & 1 & 14.97 & 12.71 & -2.26 \\
\addlinespace[3pt]
CLARGA & 0 & 16.38 & 18.36 & +1.98 \\
 & 1 & 13.84 & 13.56 & -0.28 \\
\addlinespace[3pt]
Set Trans. & 0 & 16.10 & 19.21 & +3.11 \\
 & 1 & 13.28 & 13.84 & +0.57 \\
\bottomrule
\end{tabular}
\end{minipage}
\hfill
\begin{minipage}[t]{.485\linewidth}
\vspace{0pt}\centering
\textbf{WS}\par\smallskip
\begin{tabular}{@{}lrrrr@{}}
\toprule
\rowcolor{TblHeader}Method & Missing & Harmed & Rescued & Net \\
\midrule
I$^2$MoE & 0 & 13.81 & 15.62 & +1.81 \\
 & 1 & 9.83 & 10.53 & +0.70 \\
 & 2 & 6.49 & 5.51 & -0.98 \\
\addlinespace[3pt]
CLARGA & 0 & 11.58 & 14.85 & +3.28 \\
 & 1 & 8.65 & 10.39 & +1.74 \\
 & 2 & 4.95 & 4.74 & -0.21 \\
\addlinespace[3pt]
Set Trans. & 0 & 13.81 & 15.41 & +1.60 \\
 & 1 & 9.31 & 10.50 & +1.19 \\
 & 2 & 5.58 & 5.93 & +0.35 \\
\bottomrule
\end{tabular}
\end{minipage}
\endgroup
\end{table}

\begin{table}[!htbp]
\centering
\captionsetup{skip=4pt}
\caption{\textbf{Fixed-anchor transitions on CirCor and ADNI.} Harmed and Rescued are percentages of test examples; Net is Rescued minus Harmed, in percentage points. Missing counts removed modalities.}
\label{tab:supplement_anchor_methods_4}
\begingroup\fontsize{8.5}{10.2}\selectfont
\setlength{\tabcolsep}{3pt}\renewcommand{\arraystretch}{1.13}
\begin{minipage}[t]{.485\linewidth}
\vspace{0pt}\centering
\textbf{CirCor}\par\smallskip
\begin{tabular}{@{}lrrrr@{}}
\toprule
\rowcolor{TblHeader}Method & Missing & Harmed & Rescued & Net \\
\midrule
I$^2$MoE & 0 & 6.53 & 10.23 & +3.69 \\
 & 1 & 7.29 & 9.85 & +2.56 \\
 & 2 & 7.67 & 9.28 & +1.61 \\
 & 3 & 7.10 & 8.24 & +1.14 \\
\addlinespace[3pt]
CLARGA & 0 & 6.82 & 8.24 & +1.42 \\
 & 1 & 5.11 & 8.52 & +3.41 \\
 & 2 & 5.30 & 8.05 & +2.75 \\
 & 3 & 6.25 & 7.67 & +1.42 \\
\addlinespace[3pt]
Set Trans. & 0 & 72.16 & 7.67 & -64.49 \\
 & 1 & 72.16 & 7.67 & -64.49 \\
 & 2 & 72.16 & 7.67 & -64.49 \\
 & 3 & 72.16 & 7.67 & -64.49 \\
\bottomrule
\end{tabular}
\end{minipage}
\hfill
\begin{minipage}[t]{.485\linewidth}
\vspace{0pt}\centering
\textbf{ADNI}\par\smallskip
\begin{tabular}{@{}lrrrr@{}}
\toprule
\rowcolor{TblHeader}Method & Missing & Harmed & Rescued & Net \\
\midrule
I$^2$MoE & 0 & 10.00 & 15.33 & +5.33 \\
 & 1 & 8.33 & 14.33 & +6.00 \\
 & 2 & 6.67 & 8.00 & +1.33 \\
\addlinespace[3pt]
CLARGA & 0 & 10.67 & 16.00 & +5.33 \\
 & 1 & 12.00 & 14.00 & +2.00 \\
 & 2 & 6.67 & 11.33 & +4.67 \\
\addlinespace[3pt]
Set Trans. & 0 & 5.33 & 20.67 & +15.33 \\
 & 1 & 6.67 & 16.00 & +9.33 \\
 & 2 & 6.67 & 10.67 & +4.00 \\
\bottomrule
\end{tabular}
\end{minipage}
\endgroup
\end{table}

%% file: tables/motivation_remove2.tex
\begin{table}[!htb]
\centering
\setlength{\tabcolsep}{5pt}
\caption{\textbf{Exact-removal collaboration gains (pp).} Conventional averages four methods equally. Drop is Full minus Remove~2; datasets are equally weighted within each domain.}
\label{tab:motivation_remove2}
\small
\begin{tabular}{llrrrr}
\toprule
Domain & Predictor & Full & Remove 1 & Remove 2 & Drop (pp) \\
\midrule
Affective & Conventional & $14.37$ & $6.56$ & $-3.14$ & $17.51$ \\
 & \luna & $17.00$ & $14.16$ & $6.96$ & $10.04$ \\
Scene & Conventional & $2.07$ & $0.07$ & $-1.88$ & $3.95$ \\
 & \luna & $3.70$ & $4.04$ & $2.86$ & $0.84$ \\
Medical & Conventional & $-2.86$ & $-5.04$ & $-9.37$ & $6.51$ \\
 & \luna & $10.65$ & $9.65$ & $4.17$ & $6.47$ \\
\bottomrule
\end{tabular}
\end{table}

%% file: tables/private_probe_multimethod.tex
\begin{table}[!htb]
\centering
\caption{\textbf{Private-signal decodability across collaboration methods.} Probe $R^2$ (\%), averaged equally over eight datasets. Drop is Minimal minus Full (pp); Full-only/All-views specify the training policy.}
\label{tab:private_probe_multimethod}
\begingroup
\small
\setlength{\tabcolsep}{7pt}
\renewcommand{\arraystretch}{1.12}
\begin{tabular}{@{}lrrrrrr@{}}
\toprule
\rowcolor{TblHeader}
\textbf{Method} & \multicolumn{3}{c}{\textbf{Full-only training}} & \multicolumn{3}{c}{\textbf{All-views training}} \\
\cmidrule(lr){2-4}\cmidrule(lr){5-7}
 & Minimal & Full & Drop & Minimal & Full & Drop \\
\midrule
Logit stack & 7.81 & 6.25 & 1.56 & 7.77 & 6.28 & 1.49 \\
Concat MLP & 40.95 & 20.91 & 20.03 & 40.45 & 20.45 & 20.00 \\
Attn. pool & 57.15 & 28.34 & 28.81 & 58.38 & 33.52 & 24.86 \\
Soft MoE & 56.85 & 23.13 & 33.72 & 57.52 & 27.31 & 30.21 \\
\bottomrule
\end{tabular}
\endgroup
\end{table}

%% file: tables/refresh_diag_jsd.tex
\begin{table}[!htb]
\centering
\caption{\textbf{Prediction differences across availability states.} Cells show $10^3\mathrm{JSD}/\ln 2$ for fixed diagnostic models, with ratios to Adjacent in parentheses. Examples gives the number of evaluated items in each diagnostic cohort. Regression JSD is N/A; Daily's Adjacent and Full-to-minimal sets coincide.}
\label{tab:refresh_diag_jsd}
\begingroup\fontsize{8.5}{10.2}\selectfont
\setlength{\tabcolsep}{3.7pt}
\renewcommand{\arraystretch}{1.13}
\begin{tabular}{@{}lr*{3}{r}@{}}
\toprule
\rowcolor{TblHeader} Dataset & Examples & Adjacent & Same-level & Full-to-minimal \\
SIMS & 4403 & N/A & N/A & N/A \\
MOSI & 2199 & N/A & N/A & N/A \\
MELD & 13704 & 51.43 (1.00$\times$) & 88.86 (1.73$\times$) & 90.23 (1.75$\times$) \\
URF & 9588 & 16.18 (1.00$\times$) & 28.37 (1.75$\times$) & 27.60 (1.71$\times$) \\
Daily & 1019 & 69.63 (1.00$\times$) & 166.61 (2.39$\times$) & 69.63 (1.00$\times$) \\
WS & 2694 & 19.48 (1.00$\times$) & 35.73 (1.83$\times$) & 34.41 (1.77$\times$) \\
CirCor & 585 & 17.49 (1.00$\times$) & 17.06 (0.98$\times$) & 49.30 (2.82$\times$) \\
ADNI & 284 & 31.26 (1.00$\times$) & 35.60 (1.14$\times$) & 56.61 (1.81$\times$) \\
\bottomrule
\end{tabular}
\endgroup
\end{table}

%% file: refs.bib
@article{osborne2024ai,
  title={The AI community building the future? A quantitative analysis of development activity on Hugging Face Hub},
  author={Osborne, Cailean and Ding, Jennifer and Kirk, Hannah Rose},
  journal={Journal of Computational Social Science},
  volume={7},
  number={2},
  pages={2067--2105},
  year={2024},
  publisher={Springer}
}

@article{talebirad2023multi,
  title={Multi-agent collaboration: Harnessing the power of intelligent llm agents},
  author={Talebirad, Yashar and Nadiri, Amirhossein},
  journal={arXiv preprint arXiv:2306.03314},
  year={2023}
}

@article{baltrusaitis2019multimodal,
  title={Multimodal machine learning: A survey and taxonomy},
  author={Baltru{\v{s}}aitis, Tadas and Ahuja, Chaitanya and Morency, Louis-Philippe},
  journal={IEEE transactions on pattern analysis and machine intelligence},
  volume={41},
  number={2},
  pages={423--443},
  year={2018},
  publisher={IEEE}
}

@article{wu2026missingmodalitysurvey,
  title={Deep multimodal learning with missing modality: A survey},
  author={Wu, Renjie and Wang, Hu and Chen, Hsiang-Ting and Carneiro, Gustavo},
  journal={arXiv preprint arXiv:2409.07825},
  year={2024}
}

@article{neverova2016moddrop,
  title={Moddrop: adaptive multi-modal gesture recognition},
  author={Neverova, Natalia and Wolf, Christian and Taylor, Graham and Nebout, Florian},
  journal={IEEE Transactions on Pattern Analysis and Machine Intelligence},
  volume={38},
  number={8},
  pages={1692--1706},
  year={2015},
  publisher={IEEE}
}

@inproceedings{havaei2016hemis,
  title={Hemis: Hetero-modal image segmentation},
  author={Havaei, Mohammad and Guizard, Nicolas and Chapados, Nicolas and Bengio, Yoshua},
  booktitle={International conference on medical image computing and computer-assisted intervention},
  pages={469--477},
  year={2016},
  organization={Springer}
}

@inproceedings{girdhar2023imagebind,
  title={ImageBind one embedding space to bind them all},
  author={Girdhar, Rohit and El-Nouby, Alaaeldin and Liu, Zhuang and Singh, Mannat and Alwala, Kalyan Vasudev and Joulin, Armand and Misra, Ishan},
  booktitle={2023 IEEE/CVF Conference on Computer Vision and Pattern Recognition (CVPR)},
  pages={15180--15190},
  year={2023},
  organization={IEEE}
}

@article{alayrac2022flamingo,
  title={Flamingo: a visual language model for few-shot learning},
  author={Alayrac, Jean-Baptiste and Donahue, Jeff and Luc, Pauline and Miech, Antoine and Barr, Iain and Hasson, Yana and Lenc, Karel and Mensch, Arthur and Millican, Katherine and Reynolds, Malcolm and others},
  journal={Advances in neural information processing systems},
  volume={35},
  pages={23716--23736},
  year={2022}
}

@article{vaswani2017attention,
  title={Attention is all you need},
  author={Vaswani, Ashish and Shazeer, Noam and Parmar, Niki and Uszkoreit, Jakob and Jones, Llion and Gomez, Aidan N and Kaiser, {\L}ukasz and Polosukhin, Illia},
  journal={Advances in neural information processing systems},
  volume={30},
  year={2017}
}

@inproceedings{lee2019set,
  title={Set transformer: A framework for attention-based permutation-invariant neural networks},
  author={Lee, Juho and Lee, Yoonho and Kim, Jungtaek and Kosiorek, Adam and Choi, Seungjin and Teh, Yee Whye},
  booktitle={International conference on machine learning},
  pages={3744--3753},
  year={2019},
  organization={PMLR}
}

@inproceedings{wang2023shaspec,
  title={Multi-modal learning with missing modality via shared-specific feature modelling},
  author={Wang, Hu and Chen, Yuanhong and Ma, Congbo and Avery, Jodie and Hull, Louise and Carneiro, Gustavo},
  booktitle={2023 IEEE/CVF Conference on Computer Vision and Pattern Recognition (CVPR)},
  pages={15878--15887},
  year={2023},
  organization={IEEE}
}

@article{xie2026ccsd,
  title={CCSD: Cross-modal compositional self-distillation for robust brain tumor segmentation with missing modalities},
  author={Xie, Dongqing and Wu, Yonghuang and Ai, Zisheng and Min, Jun and Jiang, Zhencun and Geng, Shaojin and Wang, Lei},
  journal={Knowledge-Based Systems},
  pages={116136},
  year={2026},
  publisher={Elsevier}
}

@inproceedings{maniparambil2025freezealign,
  title={Harnessing frozen unimodal encoders for flexible multimodal alignment},
  author={Maniparambil, Mayug and Akshulakov, Raiymbek and Djilali, Yasser Abdelaziz Dahou and Narayan, Sanath and Singh, Ankit and O'Connor, Noel E},
  booktitle={Proceedings of the IEEE/CVF Conference on Computer Vision and Pattern Recognition},
  pages={29847--29857},
  year={2025}
}

@inproceedings{li2025simmlm,
  title={Simmlm: A simple framework for multi-modal learning with missing modality},
  author={Li, Sijie and Chen, Chen and Han, Jungong},
  booktitle={2025 IEEE/CVF International Conference on Computer Vision (ICCV)},
  pages={24068--24077},
  year={2025},
  organization={IEEE}
}

@article{yun2024flexmoe,
  title={Flex-moe: Modeling arbitrary modality combination via the flexible mixture-of-experts},
  author={Yun, Sukwon and Choi, Inyoung and Peng, Jie and Wu, Yangfan and Bao, Jingxuan and Zhang, Qiyiwen and Xin, Jiayi and Long, Qi and Chen, Tianlong},
  journal={Advances in Neural Information Processing Systems},
  volume={37},
  pages={98782--98805},
  year={2024}
}

@article{xu2026limssr,
  title={LIMSSR: LLM-Driven Sequence-to-Score Reasoning under Training-Time Incomplete Multimodal Observations},
  author={Xu, Huangbiao and Wu, Huanqi and Ke, Xiao and Peng, Yuxin},
  journal={arXiv preprint arXiv:2605.00434},
  year={2026}
}

@inproceedings{baek2026mars,
  title={Residual-Guided Expert Specialization for Incomplete Multimodal Learning},
  author={Baek, Seunghun and Park, Jihwan and Sim, Jaeyoon and Jeong, Minjae and Lee, Hoseok and Kim, Won Hwa},
  booktitle={European Conference on Computer Vision},
  pages={76--93},
  year={2026},
  organization={Springer}
}

@article{meng2026giml,
  title={General Incomplete Multimodal Learning via Dynamic Quality Perception},
  author={Meng, Xiangyu and Wei, Shicai},
  journal={arXiv preprint arXiv:2607.06943},
  year={2026}
}

@inproceedings{liu2022chsims,
  title={Make acoustic and visual cues matter: Ch-sims v2. 0 dataset and av-mixup consistent module},
  author={Liu, Yihe and Yuan, Ziqi and Mao, Huisheng and Liang, Zhiyun and Yang, Wanqiuyue and Qiu, Yuanzhe and Cheng, Tie and Li, Xiaoteng and Xu, Hua and Gao, Kai},
  booktitle={Proceedings of the 2022 international conference on multimodal interaction},
  pages={247--258},
  year={2022}
}

@article{zadeh2016mosi,
  title={Mosi: multimodal corpus of sentiment intensity and subjectivity analysis in online opinion videos},
  author={Zadeh, Amir and Zellers, Rowan and Pincus, Eli and Morency, Louis-Philippe},
  journal={arXiv preprint arXiv:1606.06259},
  year={2016}
}

@inproceedings{poria2019meld,
  title={Meld: A multimodal multi-party dataset for emotion recognition in conversations},
  author={Poria, Soujanya and Hazarika, Devamanyu and Majumder, Navonil and Naik, Gautam and Cambria, Erik and Mihalcea, Rada},
  booktitle={Proceedings of the 57th annual meeting of the association for computational linguistics},
  pages={527--536},
  year={2019}
}

@inproceedings{hasan2019urfunny,
  title={UR-FUNNY: A multimodal language dataset for understanding humor},
  author={Hasan, Md Kamrul and Rahman, Wasifur and Zadeh, AmirAli Bagher and Zhong, Jianyuan and Tanveer, Md Iftekhar and Morency, Louis-Philippe and Hoque, Mohammed Ehsan},
  booktitle={Proceedings of the 2019 conference on empirical methods in natural language processing and the 9th international joint conference on natural language processing (EMNLP-IJCNLP)},
  pages={2046--2056},
  year={2019}
}

@article{zhou2025dailyomni,
  title={Daily-omni: Towards audio-visual reasoning with temporal alignment across modalities},
  author={Zhou, Ziwei and Wang, Rui and Wu, Zuxuan and Jiang, Yu-Gang},
  journal={arXiv preprint arXiv:2505.17862},
  year={2025}
}

@inproceedings{hong2025worldsense,
  title={Worldsense: Evaluating real-world omnimodal understanding for multimodal llms},
  author={Hong, Jack and Yan, Shilin and Cai, Jiayin and Jiang, Xiaolong and Hu, Yao and Xie, Weidi},
  booktitle={International Conference on Learning Representations},
  volume={2026},
  pages={52423--52443},
  year={2026}
}

@article{oliveira2022circor,
  title={The CirCor DigiScope dataset: from murmur detection to murmur classification},
  author={Oliveira, Jorge and Renna, Francesco and Costa, Paulo Dias and Nogueira, Marcelo and Oliveira, Cristina and Ferreira, Carlos and Jorge, Al{\'\i}pio and Mattos, Sandra and Hatem, Thamine and Tavares, Thiago and others},
  journal={IEEE journal of biomedical and health informatics},
  volume={26},
  number={6},
  pages={2524--2535},
  year={2021},
  publisher={IEEE}
}

@article{mueller2005adni,
  title={The Alzheimer’s disease neuroimaging initiative},
  author={Mueller, Susanne G and Weiner, Michael W and Thal, Leon J and Petersen, Ronald C and Jack, Clifford and Jagust, William and Trojanowski, John Q and Toga, Arthur W and Beckett, Laurel},
  journal={Neuroimaging Clinics of North America},
  volume={15},
  number={4},
  pages={869},
  year={2005}
}

@article{xu2025qwenomni,
  author       = {Jin Xu and
                  Zhifang Guo and
                  Jinzheng He and
                  Hangrui Hu and
                  Ting He and
                  Shuai Bai and
                  Keqin Chen and
                  Jialin Wang and
                  Yang Fan and
                  Kai Dang and
                  Bin Zhang and
                  Xiong Wang and
                  Yunfei Chu and
                  Junyang Lin},
  title        = {Qwen2.5-Omni Technical Report},
  journal      = {CoRR},
  volume       = {abs/2503.20215},
  year         = {2025},
  url          = {https://doi.org/10.48550/arXiv.2503.20215},
  doi          = {10.48550/ARXIV.2503.20215},
  eprinttype   = {arXiv},
  eprint       = {2503.20215},
  bibsource    = {dblp computer science bibliography, https://dblp.org}
}

@inproceedings{ye2025omnivinci,
  title={Omnivinci: Enhancing architecture and data for omni-modal understanding llm},
  author={Ye, Hanrong and Yang, Chao-Han Huck and Goel, Arushi and Huang, Wei and Wan, Zhen and Tian, Jinchuan and Cheng, An-Chieh and Zhu, Ligeng and Su, Yuanhang and Lou, Yuming and others},
  booktitle={International Conference on Learning Representations},
  volume={2026},
  pages={56101--56138},
  year={2026}
}

@article{wolpert1992stacking,
  title={Stacked generalization},
  author={Wolpert, David H},
  journal={Neural networks},
  volume={5},
  number={2},
  pages={241--259},
  year={1992},
  publisher={Elsevier}
}

@article{jacobs1991mixtures,
  title={Adaptive mixtures of local experts},
  author={Jacobs, Robert A and Jordan, Michael I and Nowlan, Steven J and Hinton, Geoffrey E},
  journal={Neural computation},
  volume={3},
  number={1},
  pages={79--87},
  year={1991},
  publisher={MIT Press}
}

@inproceedings{wang2024foe,
  title={Fusing models with complementary expertise},
  author={Wang, Hongyi and Maia Polo, Felipe and Sun, Yuekai and Kundu, Souvik and Xing, Eric and Yurochkin, Mikhail},
  booktitle={International Conference on Learning Representations},
  volume={2024},
  pages={45284--45306},
  year={2024}
}

@inproceedings{patapati2026clarga,
  title={Multimodal Graph Representation Learning over Arbitrary Sets of Modalities},
  author={Patapati, Santosh and Srinivasan, Trisanth},
  booktitle={2026 IEEE/CVF Winter Conference on Applications of Computer Vision (WACV)},
  pages={7104--7115},
  year={2026},
  organization={IEEE}
}

@article{tong2025interactiveomni,
  title={Interactiveomni: A unified omni-modal model for audio-visual multi-turn dialogue},
  author={Tong, Wenwen and Guo, Hewei and Ran, Dongchuan and Chen, Jiangnan and Lu, Jiefan and Wang, Kaibin and Li, Keqiang and Zhu, Xiaoxu and Li, Jiakui and Li, Kehan and others},
  journal={arXiv preprint arXiv:2510.13747},
  year={2025}
}

@article{gemmateam2026gemma4,
  author       = {Sherif El Abd and
                  Vaibhav Aggarwal and
                  Robin Algayres and
                  Alek Andreev and
                  Olivier Bachem and
                  Ian Ballantyne and
                  Cormac Brick and
                  Victor Carbune and
                  Michelle Casbon and
                  Mayank Chaturvedi and
                  Aditya Chawla and
                  Victor Cotruta and
                  Alice Coucke and
                  Phil Culliton and
                  Robert Dadashi and
                  Lucas Dixon and
                  Mohamed Elhawaty and
                  Utku Evci and
                  Cl{\'{e}}ment Farabet and
                  Johan Ferret and
                  Filippo Galgani and
                  Sertan Girgin and
                  Jean{-}Bastien Grill and
                  Maarten Grootendorst and
                  Jiaxian Guo and
                  Cassidy Hardin and
                  Yanzhang He and
                  Steven M. Hernandez and
                  Omri Homburger and
                  L{\'{e}}onard Hussenot and
                  Ju{-}yeong Ji and
                  Armand Joulin and
                  Aishwarya Kamath and
                  Parnian Kassraie and
                  Olivier Lacombe and
                  Preethi Lahoti and
                  Ga{\"{e}}l Liu and
                  Gus Martins and
                  Luciano Martins and
                  Tatiana Matejovicova and
                  Ramona Merhej and
                  Nikola Momchev and
                  Sneha Mondal and
                  Ryan Mullins and
                  Sindhu Raghuram Panyam and
                  Shreya Pathak and
                  Sarah Perrin and
                  Andr{\'{e}} Susano Pinto and
                  Etienne Pot and
                  Ang{\'{e}}line Pouget and
                  Alexandre Ram{\'{e}} and
                  Sabela Ramos and
                  Douglas Reid and
                  David Rim and
                  Morgane Rivi{\`{e}}re and
                  Karsten Roth and
                  Louis Rouillard and
                  Omar Sanseviero and
                  Pier Giuseppe Sessa and
                  Shane Settle and
                  Danila Sinopalnikov and
                  Sara Smoot and
                  Piotr Stanczyk and
                  Andreas Steiner and
                  Lawrence Stewart and
                  Ilya O. Tolstikhin and
                  Michael Tschannen and
                  Anton Tsitsulin and
                  Nino Vieillard and
                  Renjie Wu and
                  Pingmei Xu and
                  Haichuan Yang and
                  Edouard Yvinec and
                  Biao Zhang and
                  Li Zhang and
                  Joe Zou and
                  Nicolas Aagnes and
                  Abdelrahman Abdelhamed and
                  Jakub Ad{\'{a}}mek and
                  Shivani Agrawal and
                  Shubham Agrawal and
                  Ibrahim Alabdulmohsin and
                  Jean{-}Baptiste Alayrac and
                  Uri Alon and
                  Chandramouli Amarnath and
                  Ankesh Anand and
                  Chrysovalantis Anastasiou and
                  Setareh Ariafar and
                  Fran{\c{c}}ois{-}Xavier Aubet and
                  Kyriakos Axiotis and
                  Federico Barbero and
                  Joelle K. Barral and
                  Alexei Bendebury and
                  Urs Bergmann and
                  Stanley Bileschi and
                  Kat Black and
                  Mathieu Blondel and
                  Sebastian Borgeaud and
                  Arthur Brazinskas},
  title        = {Gemma 4 Technical Report},
  journal      = {CoRR},
  volume       = {abs/2607.02770},
  year         = {2026},
  url          = {https://doi.org/10.48550/arXiv.2607.02770},
  doi          = {10.48550/ARXIV.2607.02770},
  eprinttype   = {arXiv},
  eprint       = {2607.02770},
  bibsource    = {dblp computer science bibliography, https://dblp.org}
}

@article{zhang2025streamomni,
  title={Stream-omni: Simultaneous multimodal interactions with large language-vision-speech model},
  author={Zhang, Shaolei and Guo, Shoutao and Fang, Qingkai and Zhou, Yan and Feng, Yang},
  journal={arXiv preprint arXiv:2506.13642},
  year={2025}
}

@article{wang2024emu3,
  title={Emu3: Next-token prediction is all you need},
  author={Wang, Xinlong and Zhang, Xiaosong and Luo, Zhengxiong and Sun, Quan and Cui, Yufeng and Wang, Jinsheng and Zhang, Fan and Wang, Yueze and Li, Zhen and Yu, Qiying and others},
  journal={arXiv preprint arXiv:2409.18869},
  year={2024}
}

@article{deng2025bagel,
  title={Emerging properties in unified multimodal pretraining},
  author={Deng, Chaorui and Zhu, Deyao and Li, Kunchang and Gou, Chenhui and Li, Feng and Wang, Zeyu and Zhong, Shu and Yu, Weihao and Nie, Xiaonan and Song, Ziang and others},
  journal={arXiv preprint arXiv:2505.14683},
  year={2025}
}

@article{luo2026torchumm,
  title={Torchumm: A unified multimodal model codebase for evaluation, analysis, and post-training},
  author={Luo, Yinyi and Wang, Wenwen and Bai, Hayes and Zhu, Hongyu and Chen, Hao and He, Pan and Savvides, Marios and Li, Sharon and Wang, Jindong},
  journal={arXiv preprint arXiv:2604.10784},
  year={2026}
}

@inproceedings{xie2025showo,
  title={Show-o: One single transformer to unify multimodal understanding and generation},
  author={Xie, Jinheng and Mao, Weijia and Bai, Zechen and Zhang, David Junhao and Wang, Weihao and Lin, Kevin Qinghong and Gu, Yuchao and Chen, Zhijie and Yang, Zhenheng and Shou, Mike Zheng},
  booktitle={International Conference on Learning Representations},
  volume={2025},
  pages={28240--28264},
  year={2025}
}

@article{lu2024collaboration,
  title={Merge, ensemble, and cooperate! a survey on collaborative strategies in the era of large language models},
  author={Lu, Jinliang and Pang, Ziliang and Xiao, Min and Zhu, Yaochen and Xia, Rui and Zhang, Jiajun},
  journal={arXiv preprint arXiv:2407.06089},
  year={2024}
}

@article{yadav2024moerging,
  title={A survey on model moerging: Recycling and routing among specialized experts for collaborative learning},
  author={Yadav, Prateek and Raffel, Colin and Muqeeth, Mohammed and Caccia, Lucas and Liu, Haokun and Chen, Tianlong and Bansal, Mohit and Choshen, Leshem and Sordoni, Alessandro},
  journal={arXiv preprint arXiv:2408.07057},
  year={2024}
}

@article{wu2024autogen,
  title={Autogen: Enabling next-gen llm applications via multi-agent conversation},
  author={Wu, Qingyun and Bansal, Gagan and Zhang, Jieyu and Wu, Yiran and Li, Beibin and Zhu, Erkang and Jiang, Li and Zhang, Xiaoyun and Zhang, Shaokun and Liu, Jiale and others},
  journal={arXiv preprint arXiv:2308.08155},
  year={2023}
}

@article{li2023camel,
  title={Camel: Communicative agents for" mind" exploration of large language model society},
  author={Li, Guohao and Hammoud, Hasan and Itani, Hani and Khizbullin, Dmitrii and Ghanem, Bernard},
  journal={Advances in neural information processing systems},
  volume={36},
  pages={51991--52008},
  year={2023}
}

@inproceedings{hong2024metagpt,
  author       = {Sirui Hong and
                  Mingchen Zhuge and
                  Jonathan Chen and
                  Xiawu Zheng and
                  Yuheng Cheng and
                  Jinlin Wang and
                  Ceyao Zhang and
                  Zili Wang and
                  Steven Ka Shing Yau and
                  Zijuan Lin and
                  Liyang Zhou and
                  Chenyu Ran and
                  Lingfeng Xiao and
                  Chenglin Wu and
                  J{\"{u}}rgen Schmidhuber},
  title        = {MetaGPT: Meta Programming for {A} Multi-Agent Collaborative Framework},
  booktitle    = {The Twelfth International Conference on Learning Representations,
                  {ICLR} 2024, Vienna, Austria, May 7-11, 2024},
  publisher    = {OpenReview.net},
  year         = {2024},
  url          = {https://openreview.net/forum?id=VtmBAGCN7o},
  bibsource    = {dblp computer science bibliography, https://dblp.org}
}

@inproceedings{gilmer2017neural,
  title={Neural message passing for quantum chemistry},
  author={Gilmer, Justin and Schoenholz, Samuel S and Riley, Patrick F and Vinyals, Oriol and Dahl, George E},
  booktitle={International conference on machine learning},
  pages={1263--1272},
  year={2017},
  organization={Pmlr}
}

@inproceedings{bachlechner2021rezero,
  title={Rezero is all you need: Fast convergence at large depth},
  author={Bachlechner, Thomas and Majumder, Bodhisattwa Prasad and Mao, Henry and Cottrell, Gary and McAuley, Julian},
  booktitle={Uncertainty in artificial intelligence},
  pages={1352--1361},
  year={2021},
  organization={PMLR}
}

@article{yang2025qwen3,
  title={Qwen3 technical report},
  author={Yang, An and Li, Anfeng and Yang, Baosong and Zhang, Beichen and Hui, Binyuan and Zheng, Bo and Yu, Bowen and Gao, Chang and Huang, Chengen and Lv, Chenxu and others},
  journal={arXiv preprint arXiv:2505.09388},
  year={2025}
}

@article{xin2025i2moe,
  title={I2moe: Interpretable multimodal interaction-aware mixture-of-experts},
  author={Xin, Jiayi and Yun, Sukwon and Peng, Jie and Choi, Inyoung and Ballard, Jenna L and Chen, Tianlong and Long, Qi},
  journal={Proceedings of machine learning research},
  volume={267},
  pages={68870},
  year={2025}
}

@inproceedings{whisper,
  title={Robust speech recognition via large-scale weak supervision},
  author={Radford, Alec and Kim, Jong Wook and Xu, Tao and Brockman, Greg and McLeavey, Christine and Sutskever, Ilya},
  booktitle={International conference on machine learning},
  pages={28492--28518},
  year={2023},
  organization={PMLR}
}
